\newtheorem{theorem}{Theorem}[section]
\newtheorem*{theorem*}{Theorem}
\newtheorem*{proposition*}{Proposition}
\newtheorem{lemma}[theorem]{Lemma}
\newtheorem*{lemma*}{Lemma}
\newtheorem{corollary}[theorem]{Corollary}
\newtheorem*{conjecture*}{Conjecture}
\newtheorem*{fact*}{Fact}
\newtheorem*{hypothesis*}{Hypothesis}
\newtheorem{conjecture}[theorem]{Conjecture}
\newtheorem{claim}[theorem]{Claim}
\newtheorem*{claim*}{Claim}
\theoremstyle{definition}
\newtheorem{definition}[theorem]{Definition}
\newtheorem{protocol}[theorem]{Protocol}
\newtheorem{remark}[theorem]{Remark}
\newtheorem*{remark*}{Remark}
\newtheorem*{observation*}{Observation}
\let\mathbb\varmathbb
\newcommand{\Esymb}{\mathbb{E}}
\newcommand{\Psymb}{\mathbb{P}}
\DeclareMathOperator*{\E}{\Esymb}
\DeclareMathOperator*{\ProbOp}{\Psymb}
\renewcommand{\Pr}{\ProbOp}
\DeclareMathOperator{\Tr}{Tr}
\DeclareMathOperator{\poly}{poly}
\DeclareMathOperator{\negl}{negl}
\DeclareMathOperator{\polylog}{polylog}
\newcommand{\N}{\mathbb N}
\newcommand{\CS}{\mathcal{S}}
\newcommand{\CN}{\mathcal{N}}
\newcommand{\braket}[1]{\langle #1 \rangle}
\newcommand{\BE}{\mathbb{E}}
\newcommand{\cN}{\mathcal N}
\newcommand{\cO}{\mathcal O}
\newcommand{\eps}{\varepsilon}
\numberwithin{equation}{section}
\DeclareMathAlphabet{\pazocal}{OMS}{zplm}{m}{n}
\definecolor{BrickRed}{rgb}{0.8, 0.25, 0.33}
\definecolor{DarkBlue}{rgb}{0,0,0.8}  
\definecolor{DarkOrange}{rgb}{0.8,0.4,0}  
\def\mylinkcolor{DarkBlue}
\newcommand{\Id}{\mathbb{I}}
\newcommand{\ket}[1]{|#1\rangle}
\newcommand{\bra}[1]{\langle#1|}
\newcommand{\ketbra}[2]{|#1\rangle\! \langle #2|}
\newcommand{\reg}[1]{\mathrm{#1}}
\newcommand{\Gen}{\mathrm{Gen}}
\newcommand{\Ver}{\mathrm{Ver}}
\newtcolorbox{longfbox}[1][]{%
  enhanced, 
  colframe=black, 
  colback=white, 
  boxrule=0.5mm, 
  title=#1 
}
\begin{document}

\title{The Hardness of Learning Quantum Circuits \\ and its Cryptographic Applications}

\author[1]{Bill Fefferman\footnote{wjf@uchicago.edu}}
\author[1]{Soumik Ghosh\footnote{soumikghosh@uchicago.edu}}
\author[2]{Makrand Sinha\footnote{msinha@illinois.edu}}
\author[3]{Henry Yuen\footnote{hyuen@cs.columbia.edu}}

\affil[1]{University of Chicago}
\affil[2]{University of Illinois Urbana-Champaign}
\affil[3]{Columbia University}

\date{}
\maketitle

\begin{abstract}
We show that concrete hardness assumptions about learning or cloning the output state of a random quantum circuit can be used as the foundation for secure quantum cryptography. In particular, under these assumptions we  construct secure one-way state generators (OWSGs), digital signature schemes, quantum bit commitments, and private key encryption schemes. We also discuss evidence for these hardness assumptions by analyzing the best-known quantum learning algorithms, as well as proving black-box lower bounds for cloning and learning given state preparation oracles.


Our random circuit-based constructions provide concrete instantiations of quantum cryptographic primitives whose security do not depend on the existence of one-way functions. The use of random circuits in our constructions also opens the door to {NISQ-friendly quantum cryptography}. We discuss noise tolerant versions of our OWSG and digital signature constructions which can potentially be implementable on noisy quantum computers connected by a quantum network. On the other hand, they are still secure against {noiseless} quantum adversaries, raising the intriguing possibility of a useful implementation of an end-to-end cryptographic protocol on near-term quantum computers. Finally, our explorations suggest that the rich interconnections between learning theory and cryptography in classical theoretical computer science also extend to the quantum setting.






\end{abstract}
 
\newpage


\section{Introduction}
\label{sec: intro}
\noindent The hardness of learning and modern cryptography are inextricably linked in the world of classical computing. On one hand, hard cryptographic problems have served as the basis of showing that various learning tasks are intractable (see e.g., \cite{kearns1994cryptographic, klivans2006cryptographic, daniely2014averagecasecomplexityimproper}). Conversely, the hardness of various learning tasks have been used to construct fundamental cryptographic primitives, like pseudorandom functions~\cite{IL90, blum_furst_kearns_lipton_1993, oliveira2016conspiracieslearningalgorithmscircuit}, and practical cryptographic protocols ~\cite{regev2024latticeslearningerrorsrandom, lyubashevsky_peikert_regev_2010}. Furthermore, this connection between learning and cryptography extends further and has shed light on fundamental questions in the related areas of pseudorandomness~\cite{nisan_wigderson_1994} and circuit lower bounds~\cite{linial_mansour_nisan_1993, carmosino_impagliazzo_kabanets_kolokolova_2016}.

In the quantum world, our understanding of the analogous interconnections is quite lacking. In one direction, some prior works have used classical cryptographic assumptions, like quantum-secure one-way functions or \textrm{LWE}, to argue about the hardness of learning quantum states and circuits in different contexts (see \cite{arunachalam2020quantum, zhao2024learning}), but the complexity of many fundamental quantum learning tasks remain open. However, the \emph{converse} direction --- using hardness of quantum learning as a foundation for cryptography --- has not received much attention, unlike the classical case\footnote{As of the last several months, this is starting to change; see \Cref{sec:related-work} for discussion of recent work on this topic.}. Arguably, one difficulty in establishing such a connection is that classical cryptographic primitives appear insufficient to fully capture the inherently quantum nature of tasks involving learning quantum states or unitaries~\cite{NLF24}. This raises an interesting question that forms the basis of this paper:
\begin{center}
\emph{Can we base quantum cryptography on the hardness of quantum learning?}
\end{center}

\subsection*{Our results}

We propose several fine-grained assumptions about quantum learning tasks related to random quantum circuits, give evidence for it in the black-box model, and use these assumptions to construct several quantum cryptographic primitives. Incidentally, this exploration also leads to several interesting results such as ``NISQ-friendly'' quantum cryptography and a deeper understanding of various quantum learning tasks and cryptographic primitives. In more detail:





\begin{itemize}

\item \textbf{Quantum hardness of learning assumptions.} We posit that it is computationally intractable to learn the output states of \emph{unknown} random quantum circuits (Computational No-Learning Assumption) or to clone them (Computational No-Cloning Assumption). We formulate these conjectures and discuss the differences between them in \Cref{section: intro_learning assumptions}. We also give evidence for these conjectures by proving lower bounds in the black-box model.

\item \textbf{Cryptography from hardness of quantum learning.} We show that these hardness of learning  assumptions can be used to construct quantum cryptographic primitives. In particular, we construct  one way state generators (\textrm{OWSG}s) and quantum-secure digital signatures based on the Computational No-Learning Assumption, and we construct quantum bit commitments based on the Computational No-Cloning Assumption. Recent results have uncovered the tantalizing possibility of reducing the hardness assumptions behind such primitives beyond what is possible in classical cryptography~\cite{ananth2022cryptographypseudorandomquantumstates, morimae2022quantum, brakerski2023computational, kretschmer2023quantumcryptographyalgorithmica,kretschmer2024quantumcomputableonewayfunctionsoneway}. 
Our hardness assumptions about random circuits do not appear to imply the existence of classical one-way functions, and thus our  constructions concretely instantiate this possibility.





\item \textbf{Quantum cryptography for the NISQ era.} The use of random circuits in our constructions also opens the door to \emph{NISQ-friendly quantum cryptography}. We demonstrate that our OWSG and digital signature constructions can be made noise tolerant and thus implementable on a noisy quantum computer, provided we have a quantum channel to pass quantum states as cryptographic keys. On the other hand, they are still secure against \emph{noiseless} quantum adversaries. This raises the intriguing possibility of a near-term demonstration of \emph{quantum cryptographic advantage}: using a NISQ device to run a cryptographic protocol whose security relies on fewer assumptions than what is possible in classical cryptography.

\end{itemize}

\subsection*{Organization} In \Cref{section: intro_learning assumptions}, we introduce our hardness of learning assumptions. \Cref{sec:intro-crypto} gives a summary of the constructions of various cryptographic primitives. \Cref{sec:disc} provides a discussion of other implications of our results. \Cref{sec:related-work} and  \Cref{sec:is-this-the-future?} briefly discuss recent works that are related to this paper as well as mention some interesting directions that arise for future explorations. \Cref{sec:prelims} introduces the notation and some preliminaries. The black-box lower bound to support our hardness assumptions are given in \Cref{sec:blackbox}. \Cref{sec:crypto} includes details of the cryptographic constructions.

\subsection{Our hardness of learning assumptions}
\label{section: intro_learning assumptions}
We propose two main assumptions regarding the hardness of quantum learning: the  Computational No-Learning Assumption and the Computational No-Cloning Assumption. We precisely define these conjectures and discuss evidence for them.

\subsubsection{The Computational No-Learning Assumption}

In what follows, we let $\mathscr{C}$ denote a class of quantum circuits; for example, a concrete choice of $\mathscr{C}$ is the ensemble of 1D brickwork circuits with depth $d = \log^2 n$.\footnote{We pick out $d = \log^2 n$ here because it happens to be a depth at which the existing state-of-the-art learning algorithms, like those in \cite{Huang_2023, landau2024learning, kim2024learningstatepreparationcircuits}, 
fail to be efficient. Even if the learning algorithms were to be improved, any depth beyond which the efficiency of those learning algorithms fail is a good design choice for our assumptions and cryptographic constructions.}

\begin{conjecture}[Computational No-Learning]
\label{conj:learning}
Let $\mathscr{C}$ be a class of circuits. The $(\mathscr{C},\eps,\delta)$-Computational No-Learning Assumption stipulates the following. For all polynomial-time quantum algorithms $A$, for all sufficiently large $n$, we have that
\[
    \Pr \left [ \begin{array}{c} D \in \mathscr{C}_n \\ |\langle C | D \rangle|^2 \geq \eps \end{array} : \begin{array}{c} C \leftarrow \mathscr{C}_n \\ D \leftarrow A(\ket{C}^{\otimes \poly(n)}) \end{array}  \right] \leq \delta 
\]
Here, $C \leftarrow \mathscr{C}_n$ denotes that $C$ is sampled uniformly from the subset of circuits in $\mathscr{C}$ that act on $n$ qubits, and $D$ is sampled from the output of the algorithm $A$ given polynomially-many copies of $\ket{C} = C \ket{0^n}$. The output $D$ is interpreted as a description of an $n$-qubit quantum circuit, and $\ket{D} = D \ket{0^n}$. 
\end{conjecture}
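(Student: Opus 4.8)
\medskip
\noindent\textbf{Discussion of the proof strategy.} Strictly speaking \Cref{conj:learning} is a hardness \emph{assumption}: proving it unconditionally would exhibit an explicit quantum learning task that is hard without assuming one-way functions, which is beyond current techniques. So the plan is not to prove it outright, but to do the two things that can be done --- support it with a lower bound in the black-box model (\Cref{sec:blackbox}), and check that the best known learning algorithms fail at the chosen depth $d = \log^2 n$.

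For the black-box statement, the plan is as follows. Consider a learner $A$ that receives $\poly(n)$ copies of $\ket{C}$ together with oracle access to $C$ and $C^\dagger$, is \emph{not} told which $C \leftarrow \mathscr{C}_n$ it received, and must output a description $D$ with $|\ip{C}{D}|^2 \geq \eps$. First I would isolate the two properties of the ensemble the argument needs: (1) \emph{spreading} --- for a typical $C$, every fixed state has squared overlap below $\eps$, i.e.\ $\Pr_{C \leftarrow \mathscr{C}_n}[\,|\ip{C}{D}|^2 \geq \eps\,] \le \delta$ for each fixed $D$, which is just anti-concentration of a single output amplitude of the circuit ensemble; and (2) \emph{indistinguishability} --- $\poly(n)$ copies of $\ket{C}$ together with $\poly(n)$ queries to $C$ and $C^\dagger$ reveal only $\poly(n)$ bits of information about $C$. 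Given these, the bound follows from a counting-plus-hybrid argument: $\mathscr{C}_n$ contains $\exp(\poly(n))$ circuits that are pairwise far apart in overlap, a bounded-query algorithm can effectively single out only polynomially many of them, and on the remaining circuits its output $D$ is (after a standard oracle-reprogramming step) essentially independent of the true $C$, so property (1) forces the success probability below $\delta$. Ingredient (2) would be handled by a one-way-to-hiding / gentle-measurement argument comparing the true oracle to a reprogrammed one for the query part, and by an approximate unitary $t$-design bound --- copies of a state prepared by a sufficiently scrambling circuit are trace-close to copies of a Haar-random state --- for the copies part.

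The main obstacle is matching the \emph{particular} depth $d = \log^2 n$. The cleanest forms of properties (1) and (2) hold for Haar-random unitaries, and the known approximate unitary design results for 1D brickwork circuits require depth growing (poly)linearly in the number of qubits to obtain high-order designs. So the honest outcome is one of: (a) a black-box lower bound at depth $\poly(n)$, leaving $d = \log^2 n$ as the conjectural regime; or (b) a careful argument showing that only anti-concentration of a single amplitude together with approximate $2$-design behavior is needed, and that both already hold at depth $\log^2 n$ --- establishing (b) for a concrete ensemble is the delicate part, and is what separates a theorem from the conjecture. A secondary subtlety is making precise the claim that a bounded-query $A$ cannot have singled out the true $C$: since $D$ may itself be produced using oracle calls, the hybrid argument must simultaneously track which circuits $A$'s $\poly(n)$ queries could possibly distinguish, through a union bound over that small set. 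Finally, ruling out \cite{Huang_2023, landau2024learning, kim2024learningstatepreparationcircuits} is a routine check: the backward light-cone of a depth-$d$ gate in a 1D brickwork circuit spans $2^{\Theta(d)}$ qubits, so those algorithms run in time $n^{\Theta(\log n)}$ at $d = \log^2 n$, which is super-polynomial.
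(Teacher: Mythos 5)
You are right that \Cref{conj:learning} is an assumption rather than a theorem, and that the only things to ``prove'' are (i) that known learning algorithms fail at $d=\log^2 n$ and (ii) a supporting lower bound in a black-box model. Your treatment of (i) matches the paper's (one slip: the backward light-cone of a depth-$d$ 1D brickwork circuit spans $\Theta(d)$ qubits, not $2^{\Theta(d)}$; the $2^{\Theta(d)}$ is the cost of brute-forcing over subcircuits supported on that light-cone, which at $d=\log^2 n$ indeed gives quasipolynomial time).

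For (ii), however, your proposed black-box argument is genuinely different from the paper's, and the difference matters. You set up a model in which the adversary has oracle access to the unknown $C$ and $C^\dagger$ and you then need two properties of the concrete ensemble: anti-concentration of a single overlap, and approximate unitary-design behavior of depth-$\log^2 n$ brickwork circuits so that copies of $\ket{C}$ look Haar-random. You correctly flag that establishing the design property at that depth is the delicate, possibly unavailable step. The paper avoids this entirely by choosing a different idealization: the state-preparation oracle $O_{\mathcal S}$ maps $\ket{i}\ket{0^n}\mapsto\ket{i}\ket{\psi_i}$ for $2^n$ \emph{independent Haar-random} states, the adversary gets $k$ copies of $\ket{\psi_J}$ for a uniformly random hidden index $J$ plus query access to $O_{\mathcal S}$, and there is no inverse access and no oracle tied to the specific unknown circuit. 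The Haar-randomness is baked into the model by fiat, so no design theorem for shallow circuits is needed. The proof is then a BBBV-style hybrid argument --- replace the oracle's response on index $J$ by an independent Haar-random state, and bound the resulting perturbation by $2t/\sqrt{N}$ using that $J$ is uniform and independent of the reprogrammed oracle --- followed by a base case that is just Werner's optimal-cloning bound $\mathrm{F}\le (k+1)/(2^n+k)$ for Haar-random states; the learning lower bound follows because learning implies cloning. Your anti-concentration-plus-counting route could in principle be carried out, but as you yourself note it is blocked at the concrete depth $\log^2 n$, whereas the paper's model-level idealization sidesteps exactly that obstruction at the cost of proving a statement about an oracle world rather than about the brickwork ensemble itself. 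If you want your version to go through, the honest outcome is your option (a): a bound only in a regime where the design property is known, which is weaker evidence than what the paper's oracle separation provides.
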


In words, the Computational No-Learning Assumption says that it a polynomial-time quantum algorithm $A$, given only polynomially-many copies of the output $\ket{C}$ of a randomly-chosen circuit $C$ from the circuit ensemble $\mathscr{C}$, cannot produce a classical description of another circuit $D$ from the same class $\mathscr{C}_n$ and whose output state $\ket{D}$ approximates $\ket{C}$. We note that the algorithm $A$ only has access to copies of the state\footnote{We stress that $\ket{C}$ does not mean the classical description of $C$, but rather the state generated by the circuit $C$ on the all zero input.} generated by the random circuit and the circuit description itself is \emph{not} known to the algorithm. Furthermore, \Cref{conj:learning} actually refers to an entire \emph{family} of assumptions, one for each choice of circuit class $\mathscr{C}$ and parameters $(\eps,\delta)$. Throughout this paper we consider a fixed circuit family for convenience (e.g., 1D brickwork circuits with depth $\log^2 n$ unless otherwise specified). We elaborate on the different parameter settings for $(\eps,\delta)$ below. For clarity we abbreviate ``Computational No-Learning'' as ``No-Learning''. Oftentimes we will abbreviate ``$(\eps,\eps)$-No-Learning'' as ``$\eps$-No-Learning''.

\paragraph{Relationship between parameter values.} $(\eps,\delta)$-No-Learning implies $(\eps',\delta')$-No-Learning for all $\eps' \geq \eps$ and $\delta' \geq \delta$. This is argued by taking the contrapositive: if there exists an efficient algorithm $A$ that produced an $\eps'$-fidelity approximation $\ket{D}$ of $\ket{C}$ with probability at least $\delta'$, then $A$ \emph{also} produced (with the same $D$) a $\eps$-fidelity approximation with probability at least $\delta$.

A conservative assumption would be to conjecture $(1 - 1/\poly(n))$-No-Learning (i.e., it is hard for a polynomial-time algorithm to produce a high-fidelity approximation of a state with high probability). On the other hand, given our current understanding of quantum state learning techniques, it also seems plausible to conjecture $2^{-\Omega(n)}$-No-Learning (i.e., it is hard for a polynomial-time algorithm to produce even an \emph{exponentially bad} approximation with \emph{exponential small} probability). That said, we remark that  for the purposes of implementing the cryptographic constructions in this paper, the parameters we require are fairly mild. For instance, we merely need the $(1 - 1/\poly(n))$-No-Learning assumption for the cryptographic constructions that assume a noise-free circuit, and even for our NISQ-friendly constructions, the $\eps$-No-Learning assumption for any negligible\footnote{A function is negligible if it goes to zero faster than any inverse polynomial.} function $\eps(n)$ is sufficient.

\paragraph{Edge cases.} The hardness assumption cannot hold for $\eps \ll 2^{-n}$; this is because a quantum algorithm can output a uniformly random circuit $D$, and $\ket{D}$ will have fidelity at least $2^{-n}$ with $\ket{C}$ in expectation. Similarly, the hardness assumption cannot hold for $\delta \leq |\mathscr{C}_n|^{-1} = 2^{-\Omega(nd)}$, because the algorithm can always simply guess the description $C$ of the state $\ket{C}$.

\paragraph{Proper vs. improper learning.} We point out that the assumption is about the hardness of producing a circuit description $D$ that \emph{comes from the same class $\mathscr{C}$} as $C$. In the learning theory literature this is known as the setting of \emph{proper} learning, where the learner has to output a hypothesis from the same concept class as the true underlying concept. One can also consider the setting of \emph{improper} learning, where the output description $D$ can come from a more general class of circuits. For example, the circuit learning algorithms of~\cite{huang2024certifyingquantumstatessinglequbit,landau2024learning} are improper learners: given copies of output states of depth $d$ circuits, they produce descriptions of circuits with depth greater than $d$. One can consider hardness of learning conjectures in the improper setting as well; ours is the more \emph{conservative} assumption (because the learner has the additional constraint of outputting a circuit from $\mathscr{C}$). As will be clear later, even if an improper learner exists for our class of circuits, our main cryptographic primitive designed on the hardness of learning---a random circuit based one-way state generator outlined in \Cref{sec:OWSG}---will remain secure. Qualitatively, this is because the honest parties know the depth of the circuit. So, even if an adversary tries to cheat by using an improper learner, the honest parties can easily detect the mismatch in depth from the circuit description sent by the adversary.


\subsubsection{The Computational No-Cloning assumption}
\noindent In this section, we postulate a computational version of the famous No-Cloning principle of quantum mechanics. At a high level, it stipulates that given polynomially copies of the output state of a random quantum circuit, it is intractable to produce an additional copy of the state, or even an approximation of it.

\begin{conjecture}[Computational No-Cloning]
\label{conj:no_cloning}
Let $\mathscr{C}$ denote a class of circuits. The $(\mathscr{C},\eps,\delta)$-Computational No-Cloning Assumption stipulates the following. For all polynomial-time quantum algorithms $A$, for all polynomials $k(n)$, for all sufficiently large $n$, we have that
\[
    \Pr \left [ |\langle C |^{\otimes (k+1)} |\phi \rangle|^2 \geq \eps : \begin{array}{c} C \leftarrow \mathscr{C}_n \\ \ket{\phi} \leftarrow A(\ket{C}^{\otimes k}) \end{array}  \right] \leq \delta 
\]
Here, $C \leftarrow \mathscr{C}_n$ denotes that $C$ is sampled uniformly from the subset of circuits in $\mathscr{C}$ that act on $n$ qubits, and a $(k+1)n$ qubit state $\ket{\phi}$ is sampled from the output of the algorithm $A$, given polynomially-many copies of $\ket{C} = C \ket{0^n}$. 
\end{conjecture}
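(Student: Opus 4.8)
Strictly speaking \Cref{conj:no_cloning} is a hardness conjecture quantified over \emph{all} polynomial-time quantum algorithms, so an unconditional proof is not in reach; the genuine theorem I would actually prove is the matching statement for a strictly \emph{stronger} adversary — one handed, in addition to $\ket{C}^{\otimes k}$, up to $q=\poly(n)$ applications of (controlled) $C$ and $C^\dagger$ (a state‑preparation oracle) — from which \Cref{conj:no_cloning} follows a fortiori. Concretely: for every $q,k=\poly(n)$, every $q$‑query algorithm $A$, and all large $n$, $\Pr[\,|\langle C|^{\otimes(k+1)}|\phi\rangle|^2 \geq \eps : C\leftarrow\mathscr C_n,\ \ket\phi\leftarrow A^{C,C^\dagger}(\ket{C}^{\otimes k})\,]\leq\delta$ for every $\eps\geq\poly(n)\,2^{-n}$ and $\delta\geq\poly(n)\,2^{-n}/\eps$ — matching the edge cases noted after the conjecture ($\eps\gg2^{-n}$, $\delta\gg|\mathscr C_n|^{-1}$). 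Note that computational boundedness is a red herring for the copy‑only case: the obstruction is information‑theoretic, so the only real ingredient beyond the textbook no‑cloning theorem is handling the oracle and the structure of $\mathscr C_n$.

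The plan has three steps. First, \textbf{optimal cloning for spread‑out states.} For Haar‑random $\ket\psi$ on $n$ qubits and \emph{any} channel $\Phi$, $\E_\psi\,\langle\psi^{\otimes(k+1)}|\,\Phi(\psi^{\otimes k})\,|\psi^{\otimes(k+1)}\rangle \leq \frac{k+1}{2^n+k}$, the standard $k\to k{+}1$ cloning bound (the optimum is the normalized projection onto the symmetric subspace, using $\E_\psi\ketbra{\psi}{\psi}^{\otimes m}=\Pi_{\mathrm{sym}}^{(m)}/\binom{2^n+m-1}{m}$); a Markov inequality over $\psi$ then yields the probability bound in the stated range (e.g.\ $\eps=\delta=2^{-\alpha n}$ for every constant $\alpha<\tfrac12$, and $\eps=1-1/\poly(n)$, $\delta=1/\poly(n)$ trivially). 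Second, \textbf{transfer to $\mathscr C_n$.} The argument only touches moments of order $O(k)=\poly(n)$ of $\ketbra{C}{C}$, so if the depth‑$\log^2 n$ brickwork ensemble is an approximate $\poly(n)$‑design (I would invoke the best available design bounds at this depth; if insufficient, prove everything for the Haar ensemble and state the brickwork version under that design hypothesis, or raise the depth to where $t$‑design and anti‑concentration bounds are proven), then replacing $C\leftarrow\mathscr C_n$ by a Haar‑random $U_\psi$ with $\ket\psi=U_\psi\ket{0^n}$ changes every probability by $\negl(n)$. Third, \textbf{neutralize the oracle.} A one‑query hybrid (or the resampling/compressed‑oracle technique) shows that, conditioned on the query register being nearly orthogonal to $\ket{0^n}$, redrawing the columns $\{U_\psi\ket{x}\}_{x\neq 0^n}$ from fresh Haar randomness perturbs the joint state by $2^{-\Omega(n)}$, since for a Haar‑random $U_\psi$ those columns are nearly independent of $\ket\psi=U_\psi\ket{0^n}$; summing over $q$ queries, the oracle‑aided adversary is $q\cdot2^{-\Omega(n)}=\negl(n)$‑close to a copy‑only channel $\Phi$ that sees only $k+O(q)=\poly(n)$ copies of $\ket\psi$, and step one applies.

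The main obstacle is the interplay of steps two and three: an honest $C,C^\dagger$ oracle genuinely leaks information about $C$ (with $2^n$ queries one recovers $C$ outright), so I must show quantitatively that $\poly(n)$ queries cannot localize $\ket C$ within the $\approx2^{\Theta(nd)}$‑size ensemble — equivalently, within an exponentially fine net of states — and this is exactly where the spreading (design) property of $\mathscr C_n$ is needed, making step two the make‑or‑break point for an \emph{unconditional} result at depth $\log^2 n$. The cleanest route is the $\ell_1$ hybrid above; a more robust alternative, if repinning Haar columns is awkward, is an information‑theoretic bound showing the mutual information between $A$'s register and the identity of $\ket\psi$ grows by only $2^{-\Omega(n)}$ per query, combined with Fano/Holevo. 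Finally, to reach the aggressive $2^{-\Omega(n)}$‑No‑Cloning regime rather than just the mild $(1-1/\poly(n))$‑ and negligible‑$\eps$ regimes, the design and oracle‑neutralization errors above must be driven down to $2^{-\Omega(n)}$, which requires additive‑error (not just relative‑error) design/spreading bounds for the ensemble.
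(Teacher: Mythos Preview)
\Cref{conj:no_cloning} is stated in the paper as a conjecture and is not proved there; the paper offers black-box evidence via \Cref{thm:blackbox}. So the relevant comparison is between your proposed oracle theorem and the paper's black-box result.

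Your proposed theorem is \emph{false} as stated. You hand the adversary $q \geq 1$ applications of $C$; a single application of $C$ to a fresh $\ket{0^n}$ register yields a $(k{+}1)$st exact copy of $\ket{C}$, so the adversary outputs $\ket{C}^{\otimes(k+1)}$ with fidelity $1$. Your step three acknowledges that queries on $\ket{0^n}$ produce copies and reduces the adversary to a ``copy-only channel $\Phi$ that sees $k+O(q)$ copies,'' after which you say ``step one applies'' --- but step one is Werner's $m\to m{+}1$ bound, which says nothing about the trivial task of outputting $k+1$ copies from $k+O(q)\geq k+1$ copies. The oracle you grant is simply too strong.

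The claim that ``computational boundedness is a red herring for the copy-only case'' is also false, and for the same underlying reason your design transfer (step two) cannot work for all polynomial $k$. The paper explicitly observes (\Cref{corr:classical_shadows}, and exploits this in \Cref{clm:hiding}) that with $k = O(\eps^{-2}\log|\mathscr C_n|) = \poly(n)$ copies an \emph{unbounded-time} algorithm learns a circuit $D$ with $|\langle C|D\rangle|^2 \geq 1-\eps$ via classical shadows and then coherently synthesizes a $(k{+}1)$st copy with fidelity close to $1$. Hence the copy-only task is information-theoretically \emph{easy} once $k$ exceeds this polynomial threshold; no design property of $\mathscr C_n$ at depth $\log^2 n$ can transfer Werner's bound there, and indeed this attack witnesses that $\mathscr C_n$ is far from a $\Theta(k)$-design for such $k$. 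Since the conjecture quantifies over all polynomial $k$, computational boundedness is essential, not a red herring.

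The paper's black-box model sidesteps both issues: the oracle is $O_{\cS}\ket{i}\ket{0}=\ket{i}\ket{\psi_i}$ over $2^n$ independent Haar-random states, the adversary gets $\ket{\psi_J}^{\otimes k}$ for a uniformly random \emph{hidden} index $J$, and the hybrid replaces only the $J$th slot by a fresh state. The adversary cannot simply ``apply $C$ to $\ket{0^n}$'' because it does not know which index to query; that hiddenness is the entire source of hardness, and it is absent from your formulation.
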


As before, the algorithm $A$ only has access to copies of the state generated by the random circuit and not the circuit description itself.
We also abbreviate ``Computational No-Cloning'' as ``No-Cloning'' and also abbreviate ``$(\eps,\eps)$-No-Cloning'' as ``$\eps$-No-Cloning''. 

\begin{corollary}
\label{corollary_learningvscloning}
    The $\eps$-No-Cloning Assumption (Conjecture \ref{conj:no_cloning}) implies the $\eps^c$-No-Learning Assumption (Conjecture \ref{conj:learning}) for some $0 < c < 1$.
\end{corollary}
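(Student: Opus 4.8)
The plan is to show the contrapositive: an efficient algorithm $B$ that breaks $\eps^c$-No-Learning yields an efficient algorithm $A$ that breaks $\eps$-No-Cloning. The idea is that learning gives you a \emph{circuit description} $D$, which is a far stronger object than a single extra copy of the state --- once you have $D$ you can prepare as many copies of $\ket{D}$ as you like. So a cloning adversary $A$, on input $\ket{C}^{\otimes k}$, would set aside a modest number $t = \poly(n)$ of these copies to feed into the learning algorithm $B$, obtain a candidate circuit $D \in \mathscr{C}_n$, and then output $\ket{D}^{\otimes(k+1)}$. If $B$ succeeds, then $|\ip{C}{D}|^2 \geq \eps^c$, and hence $|\bra{C}^{\otimes(k+1)} \ket{D}^{\otimes(k+1)}|^2 = |\ip{C}{D}|^{2(k+1)} \geq \eps^{c(k+1)}$.

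**The obstacle** is the exponent blow-up: $\eps^{c(k+1)}$ is much smaller than $\eps$, since $k=k(n)$ is an arbitrary polynomial, so this naive argument only refutes $\eps^{c(k+1)}$-No-Cloning, not $\eps$-No-Cloning, and the statement we want is quantified over \emph{all} polynomials $k$. The fix is to have $A$ \emph{ignore} the cloning parameter $k$ almost entirely: produce only $O(1/\log(1/\eps) \cdot \text{(something)})$ --- more precisely, instead of outputting $k+1$ copies of $\ket{D}$, note that what we really need is a $(k+1)n$-qubit state $\ket{\phi}$ with $|\bra{C}^{\otimes(k+1)}\phi\rangle|^2 \geq \eps$; taking $\ket{\phi} = \ket{D}^{\otimes(k+1)}$ we need $|\ip{C}{D}|^{2(k+1)} \geq \eps$, i.e. $|\ip{C}{D}|^2 \geq \eps^{1/(k+1)}$. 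So it suffices to break $\eps^{1/(k+1)}$-No-Learning. Now invoke the monotonicity in the parameter (stated in the excerpt: $(\eps,\delta)$-No-Learning implies $(\eps',\delta')$-No-Learning for $\eps' \geq \eps$): since $\eps^{1/(k+1)} \geq \eps^{c}$ whenever $c \geq 1/(k+1)$, a break of $\eps^c$-No-Learning gives a break of $\eps^{1/(k+1)}$-No-Learning \emph{for every $k$ with $k+1 \geq 1/c$}, which is all but finitely many. The finitely many small values of $k$ are absorbed into ``for all sufficiently large $n$'' only if $k$ is constant; more carefully, one picks $c$ to be a fixed constant such as $c = 1/2$, and observes that for $k \geq 1$ we have $1/(k+1) \leq 1/2 = c$, so the implication goes through uniformly for all polynomials $k(n) \geq 1$, and the $k = 0$ case is trivial (a single copy is already given).

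**Putting it together:** fix $c = \tfrac12$. Suppose $B$ is a polynomial-time algorithm refuting $\eps^{1/2}$-No-Learning, i.e.\ for infinitely many $n$,
\[
\Pr\!\left[ D \in \mathscr{C}_n \ \wedge\ |\ip{C}{D}|^2 \geq \eps^{1/2} \ :\ C \leftarrow \mathscr{C}_n,\ D \leftarrow B(\ket{C}^{\otimes \poly(n)}) \right] > \delta.
\]
Given any polynomial $k(n)$, define $A$ on input $\ket{C}^{\otimes k'}$ (where $k'$ is the number of copies, chosen to be at least the $\poly(n)$ that $B$ needs) to run $B$, get $D$, and output $\ket{\phi} = \ket{D}^{\otimes(k+1)}$. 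Whenever $B$ succeeds we get $|\ip{C}{D}|^2 \geq \eps^{1/2} \geq \eps^{1/(k+1)}$ for all $k \geq 1$, hence $|\bra{C}^{\otimes(k+1)}\phi\rangle|^2 = |\ip{C}{D}|^{2(k+1)} \geq \eps$. Therefore $A$ refutes $\eps$-No-Cloning with the same advantage $\delta$, for the same infinitely many $n$. This contradicts the $\eps$-No-Cloning Assumption, establishing the corollary. The only genuinely delicate point to double-check is that the number of copies $A$ is handed ($k'(n) = \poly(n)$, which we may take larger than $B$'s requirement) is indeed enough to simulate $B$, and that $A$ remains polynomial-time --- both are immediate since $B$ is polynomial-time and preparing $k+1$ copies of $\ket{D}$ from the description $D$ costs $(k+1)\cdot|D| = \poly(n)$ gates.
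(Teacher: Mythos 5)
There is a genuine gap, and it is fatal to the route you chose. Your cloning adversary discards the input copies and outputs $\ket{D}^{\otimes(k+1)}$, so its overlap with $\ket{C}^{\otimes(k+1)}$ is $|\langle C|D\rangle|^{2(k+1)} = \eps^{c(k+1)}$, which is far below $\eps$ once $k$ is polynomially large --- and $k$ \emph{must} be polynomially large, since the learner $B$ needs $\poly(n)$ copies, so you cannot evade the blow-up by choosing $k$ small. Your attempted fix inverts the monotonicity: for $\eps<1$ and $k\ge 2$ we have $1/(k+1) < 1/2$ and hence $\eps^{1/(k+1)} > \eps^{1/2}$, so the inequality ``$|\langle C|D\rangle|^2 \ge \eps^{1/2} \ge \eps^{1/(k+1)}$'' is false (e.g.\ $\eps=0.01$, $k=2$: $\eps^{1/2}=0.1$ but $\eps^{1/3}\approx 0.215$). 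Breaking No-Learning at the \emph{lower} threshold $\eps^c$ does not give you a break at the \emph{higher} threshold $\eps^{1/(k+1)}$; the cited monotonicity goes the other way. So as written the reduction only refutes $\eps^{c(k+1)}$-No-Cloning, not $\eps$-No-Cloning.

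The missing idea, which is how the paper argues it, is to \emph{not throw away the $k$ input copies}: run the learner $B$ coherently (as a unitary $V$) on the input, use the learned description $\hat D$ to synthesize a \emph{single} fresh copy $\ket{D}$, and then apply $V^\dagger$ to uncompute the learner's workspace and (approximately) restore the original $\ket{C}^{\otimes k}$. The output is then close to $\ket{C}^{\otimes k}\otimes\ket{D}$, whose overlap with $\ket{C}^{\otimes(k+1)}$ is governed by the \emph{single}-copy quantity $\sum_D p_{C,D}\,|\langle C|D\rangle|^2$, independent of $k$; the calculation is the one in \Cref{clm:hiding}, and the loss of a constant power (whence ``some $0<c<1$'') comes from combining the fidelity threshold $\eps^c$ with the success probability $\eps^c$. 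Note also that simply appending $\ket{D}$ to the learner's post-measurement residue without uncomputing would not suffice either, since a generic learner may destroy the input copies; the coherent-run-and-uncompute step is what certifies the originals are recovered.
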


\begin{proof}[(Proof Sketch)]
This follows from taking the  contrapositive. If the description of the circuit $C$ could be approximately learned with squared fidelity $\epsilon^c$ by some quantum polynomial-time algorithm $A$ (i.e, quantum state learning is easy), we can devise an efficient cloning algorithm $B$ that first runs $A$ in a coherent fashion on the $k$ copies to learn an approximation $D$, synthesizes an extra copy of $\ket{D}$ (which is supposed to represent the $(k+1)$st copy of $\ket{C}$), and then uncomputes $A$ to recover the original $k$ copies of $\ket{C}$. The fact that this strategy obtains $\eps$ fidelity with $\ket{C}^{\otimes (k+1)}$ follows from a calculation virtually identical to that of \Cref{clm:hiding}, which is part of the proof that the No-Cloning Assumption implies the existence of quantum commitments. This violates the $\eps$-No-Cloning Assumption. 
\end{proof}

\paragraph{Cloning is potentially easier than learning.} Corollary \ref{corollary_learningvscloning} means that learning is potentially an \emph{easier} task than learning. It could be that we can clone just one more copy of the state but we still cannot learn the state. 

There is some suggestive evidence of this fact from the work of Nehoran and Zhandry~\cite{nehoran2024computational}, who construct a quantum oracle with respect to which a collection of states is efficiently clonable, but it is not efficiently ``telegraphable,'' given only one quantum sample. Informally, telegraphing means getting a classical string out of one copy of a quantum state by ``deonconstructing'' it, from which one copy of the state can again be ``reconstructed.'' While the notion is qualitatively reminiscent of a learning task, this is formally different from the notion of learning in our paper.  






\subsection{Barriers and evidence for the hardness assumptions}

\subsubsection{The barriers}

What are the prospects of proving \Cref{conj:learning} or \Cref{conj:no_cloning} outright? First, we note that the conjectures are necessarily \emph{computational}, meaning that they are intrinsically about the limits of efficient quantum computation. This is because it \emph{is} possible for an exponential-time quantum algorithm to learn a circuit description with high probability, given only polynomially-many quantum samples (and thus also solve the cloning task with similar sample complexity). One method is to use the classical shadows protocol of~\cite{Huang_2020}; we describe this in more detail in \Cref{subsection: classical shadows}. 
We note that Morimae and Yamakawa already observed that there is always an exponential-time attack on one-way state generators~\cite{morimae2022one} via shadow tomography; this is essentially the same observation. 

Furthermore, we note that proving our hardness conjectures would have dramatic implications for \emph{classical} complexity theory. The classical shadows-based algorithm described in \Cref{subsection: classical shadows} can be implemented in polynomial time assuming the complexity inclusion $\mathrm{NP}^{\# \mathrm{P}} \subseteq \mathrm{BQP}$.\footnote{It was recently shown by Hiroka and Hsieh~\cite{hiroka2024computational} that efficient state learning is possible if $\mathrm{PP} \subseteq \mathrm{BQP}$, which represents an improved complexity upper bound.} Therefore \Cref{conj:learning} implies $\mathrm{P} \neq \mathrm{PSPACE}$. While this doesn't imply (say) $\mathrm{P} \neq \mathrm{NP}$, for all intents and purposes this would represent a similar breakthrough in mathematics and complexity theory. The longstanding difficulty of proving such complexity separations pose a barrier to proving our hardness assumptions.

\subsubsection{The evidence}
\label{subsubsection: the evidence}
We now discuss the evidence for our hardness assumptions. First we discuss the No-Learning assumption.



\paragraph{Examining best known learning algorithms.} Recently there has been progress on obtaining efficient algorithms for learning states of bounded circuit complexity~\cite{Huang_2023,fefferman2024anticoncentrationunitaryhaarmeasure, landau2024learning}. The algorithm of Fefferman, Ghosh, and Zhan~\cite{fefferman2024anticoncentrationunitaryhaarmeasure} requires the use of oracle access to the circuit itself, rather than only having copies of the output state $C \ket{0^n}$, which we do not consider in this paper. For algorithms that learn using copies of the output state alone, the current state-of-the-art is due to Landau and Liu~\cite{landau2024learning}, who show the following: 

\begin{theorem}[\cite{landau2024learning}]
\label{landauliu}
Fix an integer $k > 0$. There is an algorithm that, for all $\eps > 0,\delta > 0$, given copies of an unknown quantum state $\ket{C} = C \ket{0^n}$ for some depth-$d$ circuit acting on a $k$-dimensional lattice with two-qubit gates, outputs the description of a depth $(2k+1)d$ circuit $D$ such that $\| \ket{C} - \ket{D} \| \leq \eps$ with probability $1 - \delta$. Furthermore, the algorithm uses
\[
    \frac{n^4 \cdot 2^{O(d^k)}}{\eps^4} \log \frac{n}{\delta}
\]
copies of $\ket{C}$, and runs in time
\[
    \frac{n^4 \cdot 2^{O(d^k)}}{\eps^4} \log \frac{n}{\delta} + \Big( \frac{nd}{\eps} \Big)^{O(d^{k+1})}~.
\]
Here, the $O(\cdot)$ suppresses dependence on $k$, which we treat as a constant. 
\end{theorem}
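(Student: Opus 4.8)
The plan is to reconstruct $\ket{C}$ out of a family of small \emph{local inversions} that can be learned by ordinary tomography, and then to stitch those local unitaries into a single shallow circuit. First I would set up the lightcone structure: since $C$ has depth $d$ and uses two-qubit gates on a $k$-dimensional lattice, the backward lightcone of any output qubit $q$ is a lattice ball of radius $O(d)$ and so contains only $O(d^k)$ qubits; restricting the gates of $C$ to a mild enlargement of this region produces a unitary $W_q$ on $O(d^k)$ qubits such that $W_q^\dagger\ket{C}$ has qubit $q$ disentangled and equal to $\ket{0}$. More generally, for any block $B$ of qubits there is a unitary $U_B$ supported on an $O(\abs{B}+d^k)$-qubit neighborhood of $B$ that disentangles the interior of $B$ from the rest of the lattice; the existence of such a $U_B$ is exactly a \emph{circuit Markov property} --- regions separated by a buffer of width $\gtrsim d$ are conditionally independent in $\ket{C}$ --- which I would prove directly from the lightcone decomposition of $C$.

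Next I would learn each $U_B$. The disentangler $U_B$ is determined (up to phases on the part it trivializes) by the reduced density matrix of $\ket{C}$ on the $O(\abs{B}+d^k)$-qubit neighborhood of $B$, a state on $2^{O(d^k)}$ dimensions for constant-size $B$. Estimating this reduced state to trace distance $\eps'$ costs $2^{O(d^k)}/(\eps')^2$ copies of $\ket{C}$, and then a brute-force search over depth-$d$ circuits on that neighborhood --- there are $(nd/\eps')^{O(d^{k+1})}$ of them up to the required precision --- recovers an explicit circuit realizing the disentangler. A union bound over the $O(n)$ blocks supplies the $\log(n/\delta)$ factor, and taking $\eps'$ inverse-polynomially small in $n/\eps$ (to absorb the error accumulated over the $O(n)$ disentangling steps, and to convert between fidelity and Euclidean distance) explains the $n^4/\eps^4$ in the final copy count.

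The heart of the argument is the stitching step. The output circuit is $D = \big(\prod_B U_B\big)^\dagger$ for a carefully chosen order of the product: sweep the lattice block-by-block along the first coordinate axis so that once a block's interior has been set to $\ket{0}$ the later disentanglers act only on the still-entangled remainder, and then recurse on the remaining $k-1$ axes. Disentanglers on well-separated blocks act on disjoint neighborhoods and hence commute, so grouping blocks by an $O(1)$-size coloring keeps each axis-sweep to depth $O(d)$, and propagating this through the $k$ nested sweeps gives total depth $(2k+1)d$ (a base depth $d$ for a single local inversion plus $2d$ for each axis). I expect this to be the main obstacle: one has to pin down the buffer width in the circuit Markov property so that the sewing is \emph{exact} rather than merely approximate, and show that the specific ordering of the learned disentanglers genuinely maps $\ket{C}$ to $\ket{0^n}$ --- a naive product of local inversions either re-entangles blocks that were already cleaned up, or forces a depth blow-up that scales with $n$ instead of with $k$. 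Once the sewing lemma is in place, combining it with the tomography guarantees above and the triangle inequality for accumulated error yields the stated bounds.
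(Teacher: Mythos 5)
This statement is not proved in the paper at all: it is \Cref{landauliu}, quoted verbatim from Landau and Liu \cite{landau2024learning}, and the paper only offers an informal description of that algorithm (learning ``local inversions'' of size $O(d^k)$ by brute force and then ``stitching'' them together) in the surrounding discussion of \Cref{subsubsection: the evidence}. Your outline is consistent with that description, so at the level of strategy you have reconstructed the right approach: lightcone-bounded local inversions, a brute-force search over $(nd/\eps)^{O(d^{k+1})}$ candidate subcircuits (which is exactly where the second term of the time bound comes from), estimation of local data from $\poly(n)\cdot 2^{O(d^k)}/\eps^{O(1)}$ copies with a union bound over $O(n)$ regions, and an axis-by-axis sweep whose buffered ordering accounts for the $(2k+1)d$ output depth.

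That said, as a proof your proposal is not complete, and you have correctly located the gap yourself: everything hinges on the exact sewing lemma (your ``circuit Markov property'' plus the claim that the chosen ordering of disentanglers maps $\ket{C}$ to $\ket{0^n}$ without re-entangling cleaned-up blocks), and you defer it with ``once the sewing lemma is in place.'' That lemma is the technical heart of \cite{landau2024learning}; without it the argument does not establish either the $(2k+1)d$ depth bound or the claimed error accumulation. Two smaller points to tighten: the reduced density matrix on a neighborhood of $B$ does not \emph{determine} the disentangler $U_B$ --- it only lets you \emph{verify} whether a candidate circuit disentangles the interior of $B$, which is why the brute-force search is needed in the first place; and your accounting of the $n^4/\eps^4$ factor (``taking $\eps'$ inverse-polynomially small... explains the $n^4/\eps^4$'') is asserted rather than derived, so the quantitative sample-complexity claim would need an explicit error-propagation argument. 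If your goal is only to use the theorem, as this paper does, citing \cite{landau2024learning} is the appropriate move; if your goal is to reprove it, the sewing lemma is the part you still owe.
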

For intuition, consider some parameter settings. Suppose $\eps, \delta$ are some fixed constants (like $0.001$). 
\begin{enumerate}
    \item When $d = O(1)$, then the sample complexity is $O(n^4 \log n)$, and the time complexity is $n^{O(d^k)}$. 
    \item When $d = O(\log^{1/k} n)$, then the sample complexity is still $\poly(n)$, but the time complexity becomes quasipolynomial $2^{\polylog n}$.
    \item When $d \gg \log n$, then both the sample and time complexity become superpolynomial.
\end{enumerate}

In more detail, the learning algorithms of both Huang, et al.~\cite{Huang_2023} and Landau and Liu~\cite{landau2024learning} revolve around the idea of learning so-called ``local inversions'' of the circuit $C$, which are small subcircuits $V$ that ``undo'' part of the overall circuit to be learned: $ (V^\dagger \otimes \Id) \ket{C} \approx \ket{0^k} \otimes \ket{\theta}$ for some $n-k$ qubit state $\ket{\theta}$. In other words, some small number $k$ of qubits have been disentangled from the state $\ket{\psi}$. 

If the circuit depth $d$ is small, then the local inversions have size $O(d^k)$ (assuming a $k$-dimensional circuit architecture), and the learning algorithm can brute force over all possible subcircuits of size $O(d^k)$. Once the local inversions have been learned, the challenge is to ``stitch'' all of the local inversions (which overlap with each other) in a consistent way. 

The complexity of learning a single local inversion takes time at least $2^{O(d^k)}$. Thus with a randomly chosen circuit of depth that asymptotically grows faster than $O(\log n)$ (e.g., $d = \log^2 n$), there are many more possibilities than is possible to consider in polynomial time. Furthermore, searching through candidate local inversions of a random circuit appears to be an ``all or nothing'' task: either a candidate successfully inverts a local patch, or it will likely scramble the state even further. Thus it does not seem possible to make the learning algorithms of~\cite{Huang_2023,landau2024learning} ``gracefully fail'' by dialing back their runtime to being polynomial in $n$. Looking at the prototypical algorithmic ideas in these papers suggests that if one is in fact limited to polynomial time algorithms, then the typical fidelity will likely be exponentially small. 

There are other papers, like \cite{kim2024learningstatepreparationcircuits}, that take as input copies of the local density matrix on $\mathcal{O}(1)$-sized patches as input and output the description of the global unitary. \cite{kim2024learningstatepreparationcircuits} uses an information-theoretic criteria and learns local quantum channels that can extend the state. However, this approach is also similar in spirit to the local inversion technique and the time complexity has a factor that scales as $2^{O(d^k)}$, which means that, similar to \cite{Huang_2023},  these algorithms are inefficient for any depth that grows asymptotically larger than $\mathcal{O}(\log n)$.

\paragraph{Worst-case hardness from post-quantum assumptions.} The No-Learning Assumption is an \emph{average-case} hardness assumption, because it is about learning over the uniform distribution over circuits. One can also wonder about the hardness of \emph{worst-case} learning. Zhao, et al.~\cite{zhao2024learning} showed that, assuming that subexponential-time quantum computers cannot solve the RingLWE problem, any quantum algorithm that learns the circuit description given copies of the $n$-qubit state requires at least $\exp(\Omega(\min\{G,n\}))$ time, where $G$ is size of the quantum circuit to be learned. RingLWE is a version of the Learning With Errors (LWE) problem, whose assumed hardness underlies most proposals for post-quantum cryptography (i.e., cryptography that can be run on classical computers, but are secure against quantum computers). When the number of gates $G$ is superlogarithmic, then the time complexity lower bound is superpolynomial. 

One may wonder why, given the results of Zhao, et al.~\cite{zhao2024learning}, we need to make a separate assumption about the hardness of learning quantum circuits. Given the widespread belief in the security of various versions of LWE (for example this underlies much of NIST's recent standardization of recommended post-quantum cryptosystems~\cite{fips2032023module,NIST_FIPS_204}), it would seem that hardness of (Ring)LWE automatically implies the hardness of learning circuits from~\cite{zhao2024learning} and therefore all our applications of \Cref{conj:learning} follow. 

There are two issues with this reasoning. The first is that the result of Zhao, et al.~\cite{zhao2024learning} implies the hardness of learning under a very particular distribution of quantum circuits: namely, ones that encode the RingLWE problem, which are quite structured\footnote{In more detail, these are based on constructions of \emph{pseudorandom states} from one-way functions~\cite{ji2018pseudorandom}.} and are statistically far from truly random quantum circuits. The second and most important issue is that, while the RingLWE hardness can be viewed as evidence for our hardness conjectures, it appears to be a significantly harder statement to prove. For one, the hardness of RingLWE trivially implies $\mathrm{P} \neq \mathrm{NP}$, one of the major open questions in mathematics. On the other hand, $\mathrm{P} \neq \mathrm{NP}$ is \emph{not} known to be implied by \Cref{conj:learning}. Indeed, there is emerging evidence that \Cref{conj:learning} is a reduced assumption compared to $\mathrm{P} \neq \mathrm{NP}$ (i.e., there are mathematical worlds in which $\mathrm{P} = \mathrm{NP}$ but tasks related to learning quantum circuits is still hard~\cite{kretschmer2023quantumcryptographyalgorithmica}). Since one of our primary motivations is to base cryptography on the fewest mathematical assumptions possible, we treat our hardness assumptions as being more basic and plausible than any post-quantum hardness assumption.

\paragraph{Evidence for No-Cloning assumption.} We can similarly examine the evidence for the No-Cloning assumption. As mentioned above, since the cloning is potentially an easier task, the No-Cloning conjecture is a \emph{stronger} assumption than No-Learning. However, as far as we are aware, there are no known algorithms for cloning that perform significantly better than ones for learning. One could consider, for example, the optimal pure-state cloning map that was analyzed by Werner~\cite{werner1998optimal}.  This map, which is essentially to project the input state onto the $(k+1)$-fold symmetric subspace, is provably the most sample-efficient procedure for taking copies $\ket{\psi}^{\otimes k}$ of an \emph{arbitrary} input state $\ket{\psi}$ (not necessarily one generated by a polynomial-size circuit) and producing an approximation of $\ket{\psi}^{\otimes k+1}$. Werner~\cite{werner1998optimal} showed the best achievable fidelity in general is
\[
    \frac{\binom{2^n + k - 1}{k}}{\binom{2^n + k}{k+1}} = \frac{k+1}{2^n}
\]
which is exponentially small for $k = \poly(n)$. 

Aside from Werner's optimal cloning map, as far as we are aware the best algorithms for the cloning task are  based on first solving the learning task. 
As we discussed before, such algorithmic ideas, if we restrict them to run in polynomial time, seem unable to achieve anything better than exponentially small fidelity. We leave it is an interesting open problem to come up with an algorithms -- even ones that run in subexponential time -- for cloning states of random circuits that do not learn the circuit description first.

\paragraph{Black-box lower bounds.} We give more evidence to support the No-Learning and No-Cloning assumptions by proving lower bounds in the black-box model. We desire a black-box model which captures the analogous properties in the white-box setting. For instance, one  property we would like to capture is the typical distribution of the output state of a random quantum circuit, which mimics the distribution of a Haar-random state. Another property we would like to capture in the black-box model is the existence of a learning algorithm that uses only polynomially-many copies of the state but is allowed to make exponentially many black-box queries -- this would be analogous to the existence of the exponential-time, polynomial-sample complexity learning algorithm.

With such considerations in mind, we formalize a \emph{state preparation oracle} model where the oracle $\mathcal{O}$ takes as input a state $\ket{i} \ket{0^n}$ and outputs $\ket{i} \ket{\psi_i}$ for some Haar-random state $\ket{\psi_i}$. The oracle can be accessed in superposition, and there is no guarantee about what the oracle does when the second register is initialized to something other than all zeroes. Intuitively, each index $i$ corresponds to a different polynomial-size circuit, but the algorithm is not allowed to exploit the structure of the circuit except to prepare the resulting output state. We prove the following:
\begin{restatable}[Black-box lower bounds for cloning]{theorem}{blackbox}
\label{thm:blackbox}
    There exists a state preparation oracle $\mathcal{O}$ such that all $T$-query quantum query algorithms getting $k$ copies of $\ket{\psi_J}$ for a uniformly random index $J \in [2^n]$ satisfy
    \[
        \mathrm{F}(\rho,\ketbra{\psi_J}{\psi_J}^{\otimes k+1}) \leq 2^{-n/4} ( 2T + k + 1)
    \]
    where $\rho$ is the output of the query algorithm and $\mathrm{F}(\cdot)$ denotes the fidelity function.
\end{restatable}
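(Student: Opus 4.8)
The plan is to show that the state-preparation oracle $\mathcal{O}$ is essentially useless to a query algorithm that does not know the planted index $J$, thereby reducing the task to cloning a Haar-random $n$-qubit state from $k$ copies alone — where Werner's optimal cloning bound quoted above applies — and then to convert the resulting fidelity estimate into the stated bound. The point is that $J$ is uniform over $2^n$ values, and the algorithm's only handle on $J$ is its $k$ copies of $\ket{\psi_J}$, whose reduced state (the maximally mixed state on the symmetric subspace of $(\mathbb{C}^{2^n})^{\otimes k}$) is completely independent of $J$; so the algorithm cannot concentrate query weight on $\ket{J}$.

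Concretely, fix a completion of $\mathcal{O}$ to a controlled unitary $\mathcal{O}_{\vec\psi} = \sum_i \ketbra{i}{i}\otimes U_i$ with $U_i\ket{0^n}=\ket{\psi_i}$, where $\vec\psi=(\psi_1,\dots,\psi_{2^n})$ are i.i.d.\ Haar-random and $J$ is an independent uniform index. Let $\mathcal{A}$ interleave unitaries $V_0,\dots,V_T$ with queries to $\mathcal{O}_{\vec\psi}$ starting from $\ket{\psi_J}^{\otimes k}\otimes\ket{0}$, write $\ket{\Psi_t^{J,\vec\psi}}$ for the state just before the $(t{+}1)$-st query and $\rho_{J,\vec\psi}$ for the output, and let $\Pi_J$ project the index register onto $\ket{J}$. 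Introduce the ``scrubbed'' oracle $\widetilde{\mathcal{O}}^{(J)}_{\vec\psi}$ which agrees with $\mathcal{O}_{\vec\psi}$ except that its $J$-th block prepares a fresh independent Haar state $\ket{\chi}$ instead of $\ket{\psi_J}$. Since the two oracles act identically outside the subspace where the index register is $\ket{J}$ and the target register is $\ket{0^n}$, a standard telescoping hybrid over the $T$ queries gives
\[
\E_{\vec\psi,J}\big[\|\rho^{\mathcal{O}}_{J,\vec\psi}-\rho^{\widetilde{\mathcal{O}}}_{J,\vec\psi}\|_1\big]\;\leq\;4\sum_{t=0}^{T-1}\E_{\vec\psi,J}\big[\|\Pi_J\ket{\Psi_t^{J,\vec\psi}}\|\big]\;\leq\;4\sqrt{T}\,\sqrt{\,\textstyle\sum_{t=0}^{T-1}\E_{\vec\psi,J}\big[\|\Pi_J\ket{\Psi_t^{J,\vec\psi}}\|^2\big]}\,,
\]
so it suffices to bound the total ``query weight on the planted block'' $W:=\sum_t\E_{\vec\psi,J}[\|\Pi_J\ket{\Psi_t^{J,\vec\psi}}\|^2]$.

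For the pre-first-query state this is immediate: $\ket{\Psi_0^{J,\vec\psi}}=V_0(\ket{\psi_J}^{\otimes k}\otimes\ket{0})$ depends on $J$ only through $\ket{\psi_J}$, so after renaming $\psi_J\to\psi$ (legitimate since $\psi_J$ is independent of $J$) we get $\E_{\vec\psi,J}[\|\Pi_J\ket{\Psi_0^J}\|^2]=\frac{1}{2^n}\E_{\psi}\big[\sum_J\|\Pi_J V_0(\ket{\psi}^{\otimes k}\otimes\ket{0})\|^2\big]\leq 2^{-n}$, using $\sum_{J\in[2^n]}\|\Pi_J\xi\|^2\leq\|\xi\|^2$. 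I expect the \emph{main obstacle} to be propagating this to later queries: $\ket{\Psi_t^{J,\vec\psi}}$ depends on $J$ both through the input and, via the oracle responses, through the block $U_J$ itself, so the clean ``$\sum_J\leq 1$'' step breaks. I will handle this by comparing, for each $t$, the real run to the $\widetilde{\mathcal{O}}^{(J)}$-run restricted to the first $t$ queries: with the scrubbed oracle the state depends on $\psi_J$ only through the input, which restores the averaging argument and gives $\E[\|\Pi_J\ket{\widetilde\Psi_t^J}\|^2]\leq 2^{-n}$, while the cost of the replacement is again controlled by the query weight accumulated so far. This yields a self-improving recursion for the partial sums $W_{\leq t}$; solving it and absorbing the mild polynomial loss from the Cauchy–Schwarz steps gives $W\leq\poly(T)\cdot 2^{-n/2}$, hence $\E_{\vec\psi,J}[\|\rho^{\mathcal{O}}-\rho^{\widetilde{\mathcal{O}}}\|_1]\leq 2T\cdot 2^{-n/4}$. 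Making the algorithm's dependence on the planted block genuinely negligible even though it is simultaneously querying that block is the technical heart of the argument.

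Finally, in the scrubbed world $\rho^{\widetilde{\mathcal{O}}}_{J,\vec\psi}$ is the image of $\ket{\psi_J}^{\otimes k}$ under a channel independent of $\psi_J$ (all $\psi_J$-dependence was routed through the input), so Werner's optimal $k\!\to\!(k{+}1)$ cloning bound applies pointwise: $\bra{\psi_J}^{\otimes(k+1)}\rho^{\widetilde{\mathcal{O}}}_{J,\vec\psi}\ket{\psi_J}^{\otimes(k+1)}\leq\frac{k+1}{2^n}$, hence the same in expectation. Combining with the trace-distance estimate, $\E_{\vec\psi,J}\big[\bra{\psi_J}^{\otimes(k+1)}\rho^{\mathcal{O}}\ket{\psi_J}^{\otimes(k+1)}\big]\leq\frac{k+1}{2^n}+2T\cdot 2^{-n/4}$; taking square roots (the fidelity being $\mathrm{F}(\rho,\ketbra{\phi}{\phi})=\sqrt{\bra{\phi}\rho\ket{\phi}}$) and using subadditivity of $\sqrt{\cdot}$ gives $\E_{\vec\psi,J}[\mathrm{F}(\rho,\ketbra{\psi_J}{\psi_J}^{\otimes k+1})]\leq 2^{-n/4}(2T+k+1)$. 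Since this holds in expectation over the Haar choice $\vec\psi$, fixing a good collection $\vec\psi$ produces a state-preparation oracle $\mathcal{O}$ with the claimed property.
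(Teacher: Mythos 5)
Your overall strategy coincides with the paper's: a BBBV-style hybrid that replaces the oracle's $J$-th block with a fresh independent Haar state, an averaging-over-$J$ argument to bound the query weight on that block, and Werner's optimal-cloning bound for the query-free base case. Two execution points deserve attention, one of which is a genuine error.

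First, the ``main obstacle'' you identify is self-inflicted by your choice of telescoping direction. You project the \emph{real} states $\ket{\Psi_t^{J,\vec\psi}}$ onto the $J$-block, and these do depend on $J$ through the oracle, forcing you into an uncarried-out recursion that (if you track it) degrades the $T$-dependence to roughly $T^{2}\cdot 2^{-n/2}$ or $T^{3/2}\cdot 2^{-n/4}$ rather than $2T\cdot 2^{-n/4}$. The paper orders the hybrids so that the oracle difference at each step is always applied to the \emph{fully scrubbed} state $U_t\ket{\phi'^{(t-1)}}$, which is a deterministic function of $O'_{\CS}$ and the input copies and hence independent of $J$; the bound $\E_J\bigl[\|\Pi_J U_t\ket{\phi'^{(t-1)}}\|\bigr]\le N^{-1}\sum_i\|\Pi_i\cdot\|\le N^{-1/2}$ then follows in one line and yields the linear recursion $\eps_t\le\eps_{t-1}+2/\sqrt{N}$ with no self-improvement needed. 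You should reorganize the telescoping this way.

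Second, your final conversion is wrong as written. The paper defines $\mathrm{F}(\rho,\sigma)=\Tr(\sqrt{\sqrt{\rho}\sigma\sqrt{\rho}})^{2}$, so for a pure target $\mathrm{F}(\rho,\ketbra{\phi}{\phi})=\bra{\phi}\rho\ket{\phi}$ and \emph{no square root should be taken}: the bound $\frac{k+1}{2^n}+O(T)\cdot 2^{-n/4}\le 2^{-n/4}(2T+k+1)$ is immediate. Under your square-root convention the step actually fails: $\sqrt{\tfrac{k+1}{2^n}+2T\cdot 2^{-n/4}}\le \sqrt{k+1}\cdot 2^{-n/2}+\sqrt{2T}\cdot 2^{-n/8}$, and the $2^{-n/8}$ term is \emph{not} dominated by $2T\cdot 2^{-n/4}$ for small $T$ and large $n$, so the claimed inequality does not follow. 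Relatedly, the paper obtains the $2^{-n/4}$ scaling (and the ``$+1$'' term) by a Markov step splitting the expected distance $2T\cdot 2^{-n/2}$ into a failure probability $2^{-n/4}$ and a conditional closeness $2T\cdot 2^{-n/4}$; your direct expectation route is fine in spirit but you cannot simply assert $\E[\|\rho^{\mathcal{O}}-\rho^{\widetilde{\mathcal{O}}}\|_1]\le 2T\cdot 2^{-n/4}$ from $W\le\poly(T)\cdot 2^{-n/2}$ without pinning down the polynomial.
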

In other words, unless either the number of queries $T$ or the number of copies $k$ are $2^{\Omega(n)}$, the expected fidelity of cloning is exponentially small. Since the ability to learn implies the ability to clone, this also shows a similar lower bound for the learning task. We elaborate on the model and prove \Cref{thm:blackbox} in \Cref{sec:blackbox}.





\subsection{Cryptography from our hardness assumptions}
\label{sec:intro-crypto}

We now summarize our main cryptographic applications using the hardness of quantum learning assumptions from Section \ref{section: intro_learning assumptions}.

\subsubsection{One-way state generators from hardness of learning}
\label{sec:OWSG}
A \emph{one way state generator} (\textrm{OWSG}) is an efficient algorithm that takes as input a classical key $k$, and outputs a quantum state $\ket{\psi_k}$ from that key. Anyone with the key $k$ can efficiently verify that $\ket{\psi_k}$ is the correct output. Furthermore, given polynomially many copies of the $\ket{\psi_k}$ for a randomly chosen $k$, it should be computationally hard for an adversary to produce another key $k'$ such that the corresponding state $\ket{\psi_{k'}}$ is close to the original output $\ket{\psi_k}$. (For a formal definition, see \Cref{sec:OWSG}). Introduced by Morimae and Yamakawa~\cite{morimae2022quantum}, OWSGs are a quantum analogue of one-way functions, which are functions efficiently computable in the forwards direction but computationally difficult to invert.

The No-Learning Assumption (\Cref{conj:learning}) is essentially \emph{equivalent} to the existence of a OWSG, namely the Random Circuit OWSG described below in \Cref{fig:owsg}. For a range of parameters, this is immediate: for negligible $\eps$ (i.e., $\eps$ goes to $0$ faster than any inverse polynomial), the $\eps$-No-Learning Assumption is easily seen to be equivalent to the security of the Random Circuit OWSG. When $\eps$ is larger, say even up to $1 - 1/\poly(n)$, the equivalence still holds; this relies on hardness amplification techniques for OWSGs~\cite{morimae2022one,Bostanci_2024}. We note that this equivalence was also independently observed by Hiroka and Hsieh in a recent preprint~\cite{hiroka2024computational}.


\begin{figure}
\begin{longfbox}
    \begin{protocol} {\bf Random Circuit \textrm{OWSG}} \label{prot:OWSG_intro} 
    \end{protocol}

    \textbf{Generation algorithm}: Given input key $C \in \{0,1\}^{r(n)}$, interpret it as a description of an $n$-qubit circuit $C$ from the ensemble $\mathscr{C}_n$. Output $\ket{C} = C \ket{0^n}$. 

    \vspace{4pt}

    \textbf{Verification procedure}: Given input $C \in \{0,1\}^{r(n)}$ and a state $\ket{D}$ on $n$ qubits, apply $C^\dagger$ to the state, and measure. Accept if the result is all zeroes, and reject otherwise. 
\end{longfbox}
\caption{Construction of one-way state generator from random circuits.}
\label{fig:owsg}
\end{figure}

Although a OWSG is not immediately cryptographically useful by itself, it is now known that OWSGs can be used as a primitive building block for a variety of quantum cryptosystems. In their papers defining OWSGs~\cite{morimae2022quantum,morimae2022one}, Morimae and Yamakawa showed that OWSGs can be used to build bounded-time-secure digital signature schemes. In a breakthrough work, Khurana and Tomer showed that OWSGs imply the existence of quantum bit commitments~\cite{khurana2024commitments}, which in turn imply other functionalities such as quantum zero knowledge proofs for $\mathrm{NP}$ and secure multiparty computation~\cite{brakerski2023computational}. Therefore, using the Random Circuit OWSG, we obtain concrete implementations of these cryptosystems on quantum computers, without relying on the use of one-way functions.

An attractive feature of our Random Circuit OWSG is that it seems potentially amenable to implementation on noisy quantum computers, and thus the corresponding cryptosystems may be realizable in the near- and medium-term. We discuss noise tolerant versions of our random circuit-based cryptographic protocols in \Cref{sec:intro-NISQ}.

\subsubsection{Simple quantum commitments from the hardness of cloning} 

A commitment scheme enables two parties (known as a ``committer'' and a ``receiver'') to perform the cryptographic equivalent of putting a message in a sealed envelope that is opened later. 
In a quantum commitment scheme, a committer upon getting a bit $b$ generates a bipartite pure state $\ket{\psi_b}_{\reg{AB}}$, and sends the register $\reg{B}$ to the receiver; this is the ``commitment phase'' of the protocol and is the analogue of sending the sealed envelope. At this point the receiver should not be able to tell what the bit $b$ is.

Later, in the ``reveal phase'' of the protocol, the committer announces the bit $b$ and sends the remaining register $\reg{A}$ of $\ket{\psi_b}$ to the receiver, who can uncompute the state to check its validity. The security of the commitment scheme ensures that the committer cannot ``change his mind'' in between the commit and reveal phases to convince the receiver he had committed to the opposite bit $1 - b$. 

Recently, quantum commitments have become a centerpiece of the zoo of quantum cryptographic primitives~\cite{ananth2022cryptographypseudorandomquantumstates,morimae2022quantum,brakerski2023computational,khurana2024commitments}. As mentioned, Khurana and Tomer showed that OWSGs can be used to construct quantum bit commitments in a generic way~\cite{khurana2024commitments}. Therefore our No-Learning Assumption implies the existence of secure quantum commitments. However the construction of commitments in~\cite{khurana2024commitments} is quite involved, requiring an intricate sequence of transformations mimicking the classical transformation from one-way functions to pseudorandom generators~\cite{haastad1999pseudorandom}. 

We construct a simple quantum bit commitments based on the Computational No-Cloning Assumption (\Cref{conj:no_cloning}), and furthermore the analysis is fairly direct and straightforward. We describe the construction below. As discussed in Section \ref{section: intro_learning assumptions}, the No-Cloning Assumption implies the No-Learning Assumption. The ease of obtaining a commitment scheme from the No-Cloning Assumption suggests that the No-Cloning Assumption could be strictly stronger than the No-Learning Assumption (because it appears that obtaining commitments from OWSGs requires an intricate analysis~\cite{khurana2024commitments}).

\begin{figure}
\begin{longfbox}
    \begin{protocol} {\bf Commitment scheme based on random circuits} \label{prot:commitment_intro} 
    \end{protocol}

    \textbf{Commitment phase:} To commit to bit $b = 0$, the committer prepares the state 
    \[
        \ket{\psi_0}_{\reg{AB}} := \frac{1}{\sqrt{|\mathscr{C}_n|}} \sum_{C \in \mathscr{C}_n} \Big( \ket{C}^{\otimes k} \otimes \ket{0^n} \Big)_{\reg{A}} \otimes \ket{\hat{C} }_{\reg{B}}
    \] 
    where $\ket{\hat{C}}$ represents the \emph{classical description} of the circuit $C$. \\
    
    To commit to bit $b = 1$, then prepare the state
    \[
        \ket{\psi_1}_{\reg{AB}} := \frac{1}{\sqrt{|\mathscr{C}_n|}} \sum_{C \in \mathscr{C}_n}  \ket{C}^{\otimes (k+1)}_{\reg{A}} \otimes \ket{\hat{C} }_{\reg{B}}~.
    \] 

    The committer sends register $\reg{B}$ of the state $\ket{\psi_b}$ to the receiver.
    
    \vspace{4pt}
    
    \noindent \textbf{Reveal phase.} The committer reveals the bit $b$ to the receiver, and also sends the remaining register $\reg{A}$ of $\ket{\psi_b}$. To verify, the receiver will uncompute the unitary that synthesizes $\ket{\psi_b}$ and check that the all zeroes state is obtained.
\end{longfbox}
\label{fig:commitment-from-no-cloning}
\caption{Commitment scheme based on random circuits}
\end{figure}


We present and analyze the bit commitment scheme in detail in \Cref{bit commitments}.

\subsection{NISQ-friendly quantum cryptography}
\label{sec:intro-NISQ}

An important challenge in the field of quantum computing is to find a practical use case for noisy, near-term quantum (i.e., NISQ) computers. Although great strides have been made recently in demonstrating principles of error-correction and fault-tolerance on quantum devices~\cite{bluvstein2024logical,acharya2024quantum}, large-scale implementations of many quantum algorithms of interest (e.g., Shor's, Grover's, Hamiltonian simulation, etc) still seem quite a ways off. Thus there is significant interest in finding an application that (a) can be implemented on an a NISQ device, (b) has some advantage over classical computers/protocols, and (c) is practically useful. We show that our hardness assumptions yield quantum cryptosystems that satisfy these three criteria. 

Our random circuits-based cryptosystems are arguably NISQ-friendly. Random quantum circuits have been investigated extensively on real hardware platforms ever since Google's original quantum supremacy announcement in 2019~\cite{Arute2019}. The lack of structure in random quantum circuits is advantageous for maximizing quantum advantage while minimizing the burden on the NISQ device. Furthermore, the effective noise model when executing random quantum circuits often becomes quite simple~\cite{Arute2019,dalzell2024random}.

Furthermore, implementing cryptosystems based on our hardness assumptions would concretely realize the possibility of having secure quantum cryptography using reduced assumptions as compared to classical cryptography (for example, we do not need to assume $\mathrm{P} \neq \mathrm{NP}$ or that one-way functions exist)~\cite{kretschmer2,ananth2022cryptographypseudorandomquantumstates,morimae2022quantum,brakerski2023computational,khurana2024founding}. This would represent what we call ``quantum cryptographic advantage.''

    
    
We presented protocols that solve useful cryptographic tasks: digital signatures, bit commitments,  encryption, and more. Although these protocols are not immediately NISQ-friendly out of the box, we show how to ``NISQ-ify'' some of them so that they are. 

\subsubsection{NISQ-friendly one-way state generators}
 While implementing the Random Circuit \textrm{OWSG} on a realistic quantum computer, noise can degrade the fidelity of the output state. For certain choices of noise parameters and depth regimes, the fidelity can be inverse polynomial in the number of qubits. If that happens, the success probability of any verification procedure  would degrade similarly. Although the OWSG may be secure against any polynomial-time adversary, it may not be very useful if the ``honest user'' (e.g., someone with the key) tries to run it on a noisy quantum computer. On the other hand, making a OWSG tolerant to noise may give greater leeway to break its security. 


However, if we make a sufficiently strong assumption (e.g., the $\eps$-No Learning assumption for negligible $\eps$), there is a noticeable gap between the success probability of a \emph{noisy} verifier (who has the circuit description) and any \emph{noiseless} polynomial-time adversary (who doesn't have the circuit description): the adversary cannot produce any approximation to the state except with negligible fidelity. 
We can amplify this gap to obtain a NISQ-friendly OWSG, where both the generation and verification algorithms can be run on a noisy quantum computer, but the OWSG also retains its security against noiseless adversaries. We achieve this amplification via a \emph{computational Chernoff bound}.

\paragraph{Computational Chernoff bounds.} A standard way to amplify the security of a OWSG is via \emph{parallel repetition} \cite{morimae2022one, Bostanci_2024}: the key to the amplified OWSG corresponds to a $t$-tuple of independently chosen random circuits $(C_1,\ldots,C_t)$, and the output is the tensor product of the corresponding states $\ket{C_1}\otimes \cdots \otimes \ket{C_t}$. Verification proceeds by checking that the $i$'th block of $n$ qubits is in the state $\ket{C_i}$ for each $i=1,\ldots,n$. Intuitively, if it is somewhat hard for the adversary to learn the output of one random quantum circuit, then it should be \emph{very} hard for the adversary to simultaneously learn the output of \emph{many} random quantum circuits. However, because the noisy fidelity is small, standard parallel repetition theorems do not directly work because the fidelity of the overall state on $nt$ qubits would degrade exponentially with $t$, making it $\negl(n)$, which means even honest verifiers would fail to see a non-trivial signal. 
 
 Our technical innovation is to use \emph{threshold parallel repetition} for amplification---this is a \textrm{OWSG} $G^{t,k}$ consisting of $t$ independent copies of $G$, but instead of verifying that all $t$ copies have been inverted, the verification algorithm checks that at least $k$ out of the $t$ have been inverted. To bound its security we prove new \emph{computational Chernoff bounds} for OWSGs. A detailed discussion, with the theorem statements, can be found in \Cref{NISQ_OWSG}. 

\subsubsection{NISQ-friendly digital signatures} 

As a direct application of our NISQ-friendly OWSG we obtain a NISQ-friendly quantum digital signature scheme. 
At a high level, a digital signature scheme (with quantum public keys) is a method for a \emph{signer} to generate a \emph{signature} for a message in a way that a third party \emph{verifier} (using a quantum public key posted by the user beforehand) can verify that the signature belongs to the message (and in particular, the message or the signature have not been changed). 

Morimae and Yamakawa~\cite{morimae2022quantum} showed that such quantum digital signature schemes can be directly constructed from OWSGs by adapting the famous Lamport construction of digital signatures~\cite{lamport1979constructing}. We show that by plugging in the NISQ-friendly random circuit OWSG, their digital signature scheme becomes NISQ-friendly as well. This gives example of an end-to-end cryptographic task for \textrm{NISQ}-devices whose hardness relies on an innately quantum conjecture. 

We present the scheme and the analysis in detail in \Cref{sec:signature}. 



\subsubsection{On noise assumptions and asymptotics}

Note that the only noise assumption we need is that the fidelity of the signal should at most be inverse polynomially large. However, an observant reader would notice that $\mathscr{C}_n$ is an ensemble of sufficiently deep circuits. For certain noise models, for e.g. constant rate of depolarizing noise per gate, or more generally, constant rate of unital noise per gate, the output converges to the maximally mixed state \cite{Aharonov3, Deshpande_2022} exponentially fast in the depth of the circuit, which would cause inverse superpolynomially large signal decay at large depths. However, there are three perspectives on why our proposal is still relevant to near-term experiments.

Firstly, the results proving convergence to the maximally mixed state \cite{Deshpande_2022,Aharonov_2023} are asymptotic statements, whereas real experiments have finite system sizes. Thus, there is a discrepancy between what theoretically happens when we scale up the system size and what is experimentally observed for a fixed system size. For example, in quantum advantage demonstrations with random circuits, the experimentalists have observed signatures of long range entanglement and evidence of convergence to the Porter-Thomas distribution when the output state is measured in the standard basis \cite{Arute2019, morvan2023phasetransitionrandomcircuit}. Note that Porter-Thomas distribution is far in total variation distance from the uniform distribution, where the latter is what we get when we measure a maximally mixed state in the standard basis. If the system were indeed close to being maximally mixed, we would neither see long-range entanglement nor Porter-Thomas type behavior. Thus, sampling from the uniform distribution is not a good approximation to the realistic output distribution, even though the distributions are close asymptotically \cite{Aharonov3, Deshpande_2022}. 

There are other classical samplers, such as the recent one proposed in \cite{Aharonov_2023}, which differ from simply sampling the uniform distribution. However, the sampler in \cite{Aharonov_2023} achieves a smaller total variation distance than the uniform distribution only at a depth of approximately $\sim \log n$. Properties of real experiments—such as the presence of long-range entanglement—indicate that they do not operate at logarithmic depth but rather in a much deeper regime. Thus, in the same way as the uniform distribution, it is unclear if the sampler of \cite{Aharonov_2023} is directly applicable to real experiments, even though its classical spoofing distribution is again asymptotically close to the noisy target distribution.

Secondly, note that the fidelity of the output state of a noisy circuit, comprised of two qubit gates and a single-qubit, uncorrelated noise channel acting upon each qubit after the application of each gate, is 
\[
F = (1 - \epsilon)^{2s},
\]
or the probability that no errors occurred anywhere in the system. Here, where $s$ is the circuit size and $\epsilon$ is the noise rate per qubit. If $\epsilon$ is at most $\sim \frac{1}{n}$, $F \approx e^{-\Theta(s \cdot \epsilon)}$. Hence, one valid regime for inverse polynomial decay in fidelity is when $\epsilon$ is $\sim \frac{1}{n \log n}$ and the system size $s$ is $\sim n \log^2 n$. In certain models, the structure of the output state becomes even simpler. One of the a noise models for which this is manifestly true is the white noise model \cite{Arute2019, bluvstein2024logical, dalzell2024random}. According to this model, if the noise per gate is unital, and if the noise rate $\epsilon$ is at most $\sim \frac{1}{n}$, then the output state of the circuit can be written as
\[
\rho_{\mathsf{out}} = F \rho_{\mathsf{noiseless}} + (1 - F) \frac{\mathbb{I}}{2^n},
\]
that is, as a linear combination of $\rho_{\mathsf{noiseless}}$, which is what the output state would have been if there were no noise, and the maximally mixed state. While the noise rate per gate going down with $n$ is unrealistic for extremely large system sizes, it is nonetheless a reasonable model of real experiments as structural properties of experimental output states match the white noise output state. In fact, judging by the recent progress in random quantum circuit experiments, e.g.  \cite{Arute2019, morvan2023phasetransitionrandomcircuit,  Quantinuum2023}, it seems quite reasonable to model realistic noise as going down with system size.

Thirdly, note that in all of the previous discussions, we have assumed noise to be unital. But real noise is complicated and it is unclear if any of the above results (e.g., convergence to the maximally mixed state or the white noise model) are realistic for near-term experiments.  As an example of surprising behavior with more general noise models, researchers have recently studied the effects of non-unital noise channels in random quantum circuits \cite{Fefferman2024, mele2024noiseinducedshallowcircuitsabsence, oh2023classicalsimulationalgorithmsnoisy}. Non-unital noise is ubiquitous in real world experiments, as witnessed by, e.g., readout errors, $T_1$ decay for superconducting systems, and photon loss in bosonic systems \cite{QuEra2023, Xanadu2022, Zhong2021}. In particular, certain structural properties that are true for unital noise at some regimes, like anti-concentration or convergence to the maximally mixed state, fail in the presence of any constant rate non-unital noise channel \cite{Fefferman2024}, and hardness or easiness results that assume these properties, like \cite{aaronson2011computational, aharonov2003simple, Bouland_2018, Aharonov_2023}, do not seem to work.

\subsection{Related work}
\label{sec:related-work}

We discuss the relationship between our work and some concurrent, independent works that recently appeared. 

\paragraph{Cryptography from assumptions about random circuits:} Khurana and Tomer~\cite{khurana2024founding} also study the quantum cryptographic implications of hardness assumptions about random circuits. Their hardness assumption posits that it is $\#\mathrm{P}$-hard to estimate the output probabilities of a random quantum circuit, given a classical description of the circuit. This is a well-studied hardness assumption and is the theoretical basis for many quantum supremacy proposals (e.g.,~\cite{aaronson2011computational,Bouland_2018,Arute2019}). Combined with the complexity-theoretic assumption that $\mathrm{P}^{\# \mathrm{P}} \not\subseteq \mathrm{ioBQP}/\mathrm{qpoly}$,~\cite{khurana2024founding} show how to construct \emph{quantum one-way puzzles}, which in turn implies the existence of quantum bit commitments through their previous work~\cite{khurana2024commitments}.

While random circuits are at the core of the hardness assumptions of our paper as well as~\cite{khurana2024founding}, the similarities quickly end. Khurana and Tomer are positing the hardness of a \emph{classical-input, classical-output} task: given the description of a quantum circuit and a string, estimate the probability of outputting the string. Our hardness assumptions, on the other hand, are about \emph{quantum-input} tasks: given \emph{quantum} copies of the output state of a random circuit, either learn or clone it. Importantly, the circuit description is \emph{not} known to the adversary. 

Furthermore, our motivations have some differences: they are motivated by basing quantum cryptography on separations between \emph{decision} complexity classes, whereas we are primarily motivated by the connection between quantum learning problems (which involve quantum inputs) and quantum cryptography. 

On a different note, in Bostanci, Haferkamp, Hangleiter, and Poremba \cite{bostanci2024efficientquantumpseudorandomnesshamiltonian}, the authors construct quantum cryptography from a suite of assumptions about random IQP circuits. Depending on the type of assumption, the authors get quantum trapdoor functions, quantum pseudoentanglement, and candidate constructions of efficient pseudorandom unitaries. There are large differences between this work and our work, in terms of the nature of assumptions, the justifications for hardness, the flavours of cryptography that one gets, and the query lower bounds. These two works represent complementary explorations into cryptography from two different random ensembles, ours involving brickwork random circuits, and theirs involving IQP circuits.

\paragraph{Cryptography from assumptions about quantum states, protocols, and noise:} Qian, Raizes, and Zhandry~\cite{qian2024hard} study the quantum cryptographic implications of a new ``search-type'' assumption they call \emph{classical $\to$ quantum extrapolation}, where the goal is to extrapolate the rest of a bipartite pure state given the first register measured in the computational basis. They show that the hardness of this extrapolation task implies the existence of quantum bit commitments and is implied by the existence of various quantum public-key primitives. We view their work as studying the cryptographic implications of a (conceptually new) ``generic'' assumption, where they do not specify how exactly to generate the hard, inextrapolable bipartite states. On the other hand, we are focused on the cryptographic implications of a ``concrete'' assumption, where we instantiate the underlying primitive (OWSG, quantum commitment) with a concrete algorithmic implementation. This is similar to the difference between assuming that \emph{some} one-way function exists, versus assuming that a \emph{specific} one-way function exists (e.g., the RSA function or the LWE function). 

Morimae, Shirakawa, and Yamakawa~\cite{morimae2024cryptographiccharacterizationquantumadvantage} give a characterization of the complexity assumptions needed for a class of protocols for proofs of quantumness; in particular they show that one-way puzzles (the same primitive constructed by~\cite{khurana2024founding}) are necessary and sufficient. Their goal is squarely aimed at understanding the complexity of proofs of quantumness in the abstract, and less by having concrete instantiations of quantum cryptographic primitives. For related papers on the interplay between one-way puzzles, proofs of quantumness, and quantum cryptography, also see~\cite{CFP23, hiroka2024quantumcryptographymetacomplexity, cavalar2024metacomplexitycharacterizationquantumcryptography}.

Hiroka and Hsieh \cite{hiroka2024computational} studies the hardness of learning efficiently generatable pure states. The main focus of this paper is a $\mathrm{PP}$ upper bound on this task. They also base some cryptography on their hardness assumptions, like the existence of one-way state generators. The crucial difference between this work and ours is that they consider one particular learning assumption, whereas we consider a suite of learning and distinguishing assumptions, basing different flavours of cryptography on each. We also discuss in detail how to make our protocols NISQ-implementable.

Poremba, Quek, and Shor \cite{poremba2024learningstabilizersnoiseproblem} put forward a new quantum-inspired primitive called Learning Stabilizers with Noise (LSN), which deals with decoding a random stabilizer code in the presence of local depolarizing noise. Their primitive implies (statistically hiding and computationally binding) bit commitments. Their goal is to construct a new natively quantum assumption for quantum cryptography, as opposed to NISQ-friendliness of their protocols.


\paragraph{NISQ-friendly cryptography:} Finally, we comment on the relationship between the proposals by~\cite{aaronsonhung,bassirian2024certifiedrandomnessfouriersampling} to use NISQ devices to perform certifiable randomness generation. Similarly to our work, they propose a cryptographic task that can be performed on a NISQ device, and whose security can be based on hardness assumptions about random circuits. 

However a significant difference is that the verification procedure in their protocols are inherently inefficient; it requires exponential time even using a quantum computer as it requires approximating output probabilities of a random quantum circuit. In contrast, our NISQ-friendly digital signature scheme is efficiently implementable. 

\subsection{Summary}
\label{sec:disc}

Our exploration also uncovers a deeper understanding of various quantum learning tasks and cryptographic primitives.

\paragraph{Fine-grained distinctions in learning and cryptography.}
Our work connects fine-grained learning tasks to understanding the relative power of different cryptographic primitives. This opens up a potential new approach to understand both. For instance, cloning is potentially an easier task than learning\footnote{There is some suggestive evidence here in the form of a black-box separation for a related task \cite{nehoran2024computational}.}. Our work shows that the hardness of learning assumptions is essentially equivalent to the existence of $\mathrm{OWSG}$s from random circuits, while the hardness of cloning can be used to construct a fairly simple bit commitment scheme. This indicates that commitments are likely to be a more minimalistic cryptographic primitive than $\mathrm{OWSG}$s.  Our conclusion is consistent with oracular evidence in \cite{behera, bostanci2024oracle}. 


%
%

\paragraph{Practical applications of random quantum circuits.}
Random circuits have been extensively studied in the context of quantum advantage in the NISQ-era. Several experimental groups around the world
(for e.g, \cite{Arute2019, morvan2023phasetransitionrandomcircuit,Zhong2020,Zhu2021,Zhong2021} have claimed practical demonstrations of quantum advantage with sampling tasks involving random circuits. Furthermore, there also has been extensive effort to build theoretical foundations of quantum advantage based on such sampling tasks (see e.g., \cite{aaronsonchen,Bouland_2018,Aharonov_2023,Fefferman2024}), but even if we are able to demonstrate practical advantage with such circuits, one major challenge that remains is to use NISQ devices or random quantum circuits to solve practically useful problems. 

By proposing useful cryptographic applications of random circuits that are ``NISQ-friendly'', our work takes one further step in addressing this challenge and complements the recent work on certified random number generation with random circuits (\cite{aaronsonhung,bassirian2024certifiedrandomnessfouriersampling}). 


\paragraph{Minimal assumptions for cryptography.} 
Our work contributes to understanding the minimal theoretical assumptions needed for quantum cryptography. While in the classical world the existence of one-way functions is widely believed to be necessary for cryptography \cite{impagliazzo1995personal}, in the
quantum context this may not be the case. In particular, recent work has given black-box evidence in which \textrm{P = NP} and yet single-copy secure pseudorandom quantum states still exist \cite{kretschmer2023quantumcryptographyalgorithmica,kretschmer2}. This suggests that certain quantum
cryptographic primitives are possible even without the existence of one-way functions, for e.g., see \cite{ananth2022cryptographypseudorandomquantumstates, brakerski2023computational, LMW24, kretschmer2024quantumcomputableonewayfunctionsoneway}. On the other hand, in the white-box setting, all currently known constructions of such quantum pseudorandom states
require the existence of quantum-secure one-way functions~\cite{ji2018pseudorandom}.  Consequently, a major challenge that remains is to construct quantum cryptographic primitives which do not rely on the existence of one-way functions and are based on concrete hardness assumptions. Our work addresses this by constructing cryptography based on the hardness of quantum learning.



\subsection{Future directions}
\label{sec:is-this-the-future?}

The connections between the hardness of quantum learning and cryptography leads to several interesting directions that require more exploration:



\begin{itemize}
    \item \textbf{Relations between learning and cloning.} We posed two concrete assumptions about the hardness of quantum learning for random circuits. The natural open question that remains is to understand the differences between these different learning tasks (No-Learning versus No-cloning), their relative hardness, and to understand the security parameters needed for the hardness assumptions. 

    \item \textbf{Quantum cryptography from concrete hardness assumptions.} Innately quantum hardness of learning assumptions, like the No-Learning and No-Cloning assumptions, give a natural direction to give concrete instantiations of other cryptographic primitives on assumptions that might be weaker than one-way functions.


    \item \textbf{NISQ-friendly cryptography and practical applications.} While several cryptographic constructions presented in this paper are NISQ-friendly, others, such as for quantum bit commitments, involve operations that are not realistic for near-term devices, for instance, coherently implementing a superposition over all quantum circuits. This leads to the tantalizing possibility of finding other NISQ-friendly cryptographic primitives.
    



    \item \textbf{Quantum pseudorandomness and connections to complexity.} There are fundamental open questions regarding pseudorandomness properties of states produced by random quantum circuits. Using hardness of learning to probe such questions is an interesting open direction. Along this line of inquiry, one might further expect to find deeper connections between learning, cryptography, pseudorandomness, state complexity classes and circuit lower bounds.  

    \item \textbf{Realization using classical communication and quantum nodes.} There are open questions regarding whether our cryptographic protocols, like the digital signature scheme and the bit-commitment scheme, are realizable by quantum end-nodes and classical communication, obviating the need for a quantum network to transport the states. This will make the protocols even more NISQ-friendly, as noise-robust quantum networks are hard to build in practice.


\end{itemize}


\section{Preliminaries}
\label{sec:prelims}

\noindent In this section, we give an overview of our notation, collect some useful facts, theorems, and lemmas, that will be used in the rest of the paper.

\paragraph{Notation.}
We write $[t]$ to denote the set $\{1,2,\ldots,t\}$. For an integer $n$ we write $1^n$ to denote its unary representation. We write $\poly(n)$ to denote $p(n)$ for some polynomial $p$. We write $\negl(n)$ to denote a negligible function, that is, some function $\delta(n)$ such that for all polynomials $p(n)$, for all sufficiently large $n$, $\delta(n) \leq \frac{1}{p(n)}$. In other words, $\delta(n)$ goes to $0$ faster than any inverse polynomial. 

The identity operator is denoted by $\Id$. For an operator $A$ we write $\| A \|_1$ to denote its trace norm, i.e., the sum of its singular values. For two density matrices $\rho,\sigma$ we write $\mathrm{F}(\rho,\sigma) := \Tr(\sqrt{\sqrt{\rho} \sigma \sqrt{\rho}})^2$ to denote the fidelity between them.


\paragraph{Quantum circuits.}
\label{subsubsection: quantum circuits}

All quantum circuits in this paper use single- and two-qubit gates from some discrete universal gate set that includes the Clifford group, i.e., the set of unitaries generated by $\mathrm{CNOT},\mathrm{H}, \mathrm{S} = \begin{pmatrix} 1 & 0 \\ 0 & i \end{pmatrix}$. The size of a circuit is the number of gates in it. We write $\mathscr{C}_{n,d}$ to denote the set of all $n$-qubit, depth-$d$ circuits where the gates are arranged in a 1D brickwork architecture. 


We write $\mathscr{C}_n$ to denote the set $\mathscr{C}_{n,d}$ for $d = \log^2(n)$. Sometimes we will omit the subscript $n$ and write $\mathscr{C}$ when the number of qubits is clear from context. We write $C \leftarrow \mathscr{C}_n$ to denote sampling a uniformly random circuit $C$ from $\mathscr{C}_n$. For a circuit $C \in \mathscr{C}_{n,d}$, we write $\ket{C}$ to denote the state resulting from applying $C$ to the all zeroes input, i.e.,
\[
    \ket{C} := C \ket{0^n}~.
\]
For a circuit $C$, we write $\hat{C}$ to denote its \emph{classical description} (to distinguish it from the unitary operator corresponding to $C$). 

A \emph{quantum polynomial-time (QPT)} algorithm $A$ is a uniform family of circuits $\{ C_n \}_{n \in \N}$ such that there is a polynomial $p(n)$ such that the size of $C_n$ is bounded by $p(n)$ for all $n$. Here, uniform means that there is a polynomial-time classical algorithm that, on input $1^n$, outputs the classical description of $C_n$. In a QPT algorithm, we also allow the circuits to initialize some number of ancilla qubits to $\ket{0}$ and trace them out at the end of the computation (and thus a QPT algorithm in general corresponds to a quantum channel). 

\paragraph{Classical shadows.}
\label{subsection: classical shadows}
The classical shadows protocol of~\cite{Huang_2020} gives a method to perform measurements on a small number of copies of a quantum state $\rho$, and use the measurement outcomes to estimate the expectation values of $\rho$ with respect to a much larger number of observables. The method is sample efficient, but not necessarily computationally efficient. We first summarize their protocol.

\begin{longfbox}
    \begin{protocol} {\bf Classical shadows protocol} \label{prot:classical_shadows} 
    \end{protocol}

    \noindent \textbf{Parameters}: $k,t$ integers such that $t$ divides $k$. \\
    \noindent \textbf{Observables}: $A_1,\ldots,A_M$.  \\
    \noindent \textbf{Input}: $k$ copies of an $n$-qubit state $\rho$.
    
    \begin{enumerate}
        \item Sample $k$ random Clifford circuits $S_1,\ldots,S_k$. Applying $S_j$ to the $j$'th copy of $\rho$ and measure in the standard basis to obtain sample $x_j \in \{0,1\}^n$. 

        \item For each $j \in [k]$, compute the classical description of the Hermitian matrix called a \emph{shadow}:
        \[
            \hat{\rho}_j = (2^n + 1)S_j^\dagger \ketbra{x_j}{x_j} S_j - \Id. 
        \]

        \item Divide the $k$ samples into $t$ groups of $k/t$, and for each group $r \in [t]$, and each observable $i = 1,\ldots,M$, compute the following estimator:
        \[
            \hat{a}_i^{(r)} = \frac{t}{k} \sum_{j = 1 + k(r-1)/t}^{kr/t} \Tr(A_i \hat{\rho}_j)~.
        \]

        \item For each $i = 1,\ldots,M$, compute the final estimator
        \begin{equation}
            \label{eq:shadow-estimator}
            \hat{a}_i = \mathrm{median} \Big \{ \hat{a}_1^{(1)},\ldots,\hat{a}_1^{(t)}\Big \}
        \end{equation}
    \end{enumerate}
\end{longfbox}

\begin{lemma}[Performance of the classical shadows protocol]
\label{lem:classical_shadows}
    Let $\{ A_1 ,\ldots,A_M \}$ denote a set of $n$-qubit observables, i.e., each $A_j$ is a Hermitian matrix. Then the classical shadows protocol of \Cref{prot:classical_shadows} will with probability at least $1 - \delta$ produce estimates $\{ \hat{a}_1,\ldots,\hat{a}_M \}$ such that $|\hat{a}_i - \Tr(A_i \rho) | \leq \eps$ provided that 
    \[
        k \geq \frac{204}{\eps^2} \, \log \Big ( \frac{2M}{\delta} \Big) \, B \qquad \text{and} \qquad t \geq 2 \log \Big ( \frac{2M}{\delta} \Big)
    \]
    where $B = \max_i \Tr((A_i - 2^{-n} \Tr(A_i) \Id)^2)$.
    
\end{lemma}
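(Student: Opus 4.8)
The plan is to follow the standard classical shadows analysis of Huang, Kueng, and Preskill, specialized to the random Clifford measurement primitive. The key facts to assemble are: (i) each shadow $\hat\rho_j$ is an unbiased estimator of $\rho$, i.e.\ $\E[\hat\rho_j] = \rho$, so that $\E[\Tr(A_i \hat\rho_j)] = \Tr(A_i\rho)$; (ii) the variance of the single-shot estimator $\Tr(A_i\hat\rho_j)$ is controlled by the relevant norm of the (traceless part of the) observable, which for the random Clifford ensemble is the quantity $B = \max_i \Tr((A_i - 2^{-n}\Tr(A_i)\Id)^2)$ up to small constants; then (iii) a median-of-means concentration argument boosts this second-moment bound to the stated high-probability guarantee.

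First I would establish unbiasedness. The inversion map appearing in the definition of $\hat\rho_j$, namely $X \mapsto (2^n+1) X - \Tr(X)\Id$, is exactly the inverse of the measurement channel $\mathcal{M}(\rho) = \E_{S}\sum_x \bra{x} S\rho S^\dagger \ket{x}\, S^\dagger\ketbra{x}{x}S$ associated with uniformly random Clifford rotations followed by a computational-basis measurement; this is a consequence of the Clifford group being a $3$-design (or, more precisely, using that random Clifford measurements give a $2$-design for this purpose). Hence $\E[\hat\rho_j] = \mathcal{M}^{-1}(\mathcal{M}(\rho)) = \rho$, giving unbiasedness of $\hat a_i^{(r)}$ for $\Tr(A_i\rho)$.

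Next I would bound the variance. Writing $A_i^{(0)} = A_i - 2^{-n}\Tr(A_i)\Id$ for the traceless part (note $\Tr(A_i\hat\rho_j) - \Tr(A_i\rho)$ only depends on $A_i^{(0)}$ since the identity component is estimated exactly), the HKP variance computation for the random Clifford ensemble shows
\[
    \Var\bigl[\Tr(A_i\hat\rho_j)\bigr] \;\leq\; 3\,\Tr\bigl((A_i^{(0)})^2\bigr) \;\leq\; 3B.
\]
This is where the exact constant enters: carrying through the second-moment calculation over the Clifford $3$-design yields the factor that, combined with the median-of-means bookkeeping below, produces the $204$ in the lemma. Then for each group $r$, $\hat a_i^{(r)}$ is an average of $k/t$ i.i.d.\ copies, so $\Var[\hat a_i^{(r)}] \leq 3Bt/k$, and Chebyshev gives that $|\hat a_i^{(r)} - \Tr(A_i\rho)| \leq \eps$ with probability at least, say, $3/4$ as soon as $k/t \geq 48 B/\eps^2$ (constants chosen to chain cleanly). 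Taking the median over $t$ groups, a Chernoff/Hoeffding bound on the number of ``bad'' groups shows the median is within $\eps$ except with probability $e^{-\Omega(t)}$; requiring $t \geq 2\log(2M/\delta)$ makes this at most $\delta/(2M)$ per observable, and the total budget $k \geq \tfrac{204}{\eps^2}\log(2M/\delta)\,B$ absorbs both the $k/t$ requirement and the value of $t$. A union bound over the $M$ observables finishes the proof.

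The main obstacle is the variance bound in step (ii): getting the precise constant requires the somewhat involved second-moment computation over the random Clifford ensemble (expanding $\E_S[\,(S^\dagger\ketbra{x}{x}S)^{\otimes 2}\,]$ via Weingarten-type / $2$-design identities and then applying the inversion map), and matching it to the exact constants quoted in the statement is the fiddly part. Everything else — unbiasedness, Chebyshev per block, median-of-means amplification, and the union bound — is routine. I would, in fact, simply cite \cite{Huang_2020} for the variance bound rather than reproduce it, since the statement is quoted verbatim from there, and present the median-of-means amplification as the only calculation worth spelling out.
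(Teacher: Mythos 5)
Your proposal is correct and matches the paper's approach: the paper's entire proof of this lemma is a citation to the Supplementary Information of \cite{Huang_2020}, and your sketch is precisely the standard Huang--Kueng--Preskill argument (unbiasedness via the inverted measurement channel, the shadow-norm variance bound $\le 3\Tr((A_i^{(0)})^2)$ from the Clifford $3$-design, median-of-means, union bound), with the constant-chasing likewise deferred to that reference. No gap.
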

\begin{proof}
    This is proved in the the Supplementary Information of~\cite{Huang_2020}. 
\end{proof}

\begin{corollary}
\label{corr:classical_shadows}
Let $\mathscr{C}_{n,d}$ denote the circuit ensemble described in Section \ref{subsubsection: quantum circuits}. 
There exists a quantum algorithm that, given input $\ket{C}^{\otimes k}$ where $\ket{C} = C \ket{0^n}$ for some circuit $C \in \mathscr{C}_{n,d}$, with probability at least $1 - \delta$ outputs a classical description of a circuit $D \in \mathscr{C}_{n,d}$ such that
\[
    |\langle C | D \rangle|^2 \geq 1 - \eps
\]
provided that
\[
    k \geq \mathcal{O} \left ( \frac{1}{\eps^2} \, \log \Big ( \frac{|\mathscr{C}_{n,d}|}{\delta} \Big) \right)~.
\]
Furthermore, if $\mathrm{NP}^{\mathrm{\# P}} \subseteq \mathrm{BQP}$ this task can be done in quantum polynomial time.

\end{corollary}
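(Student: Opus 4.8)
The plan is to instantiate the classical shadows protocol (\Cref{prot:classical_shadows}, with guarantees in \Cref{lem:classical_shadows}) using as the observable set the collection of projectors $\{\ketbra{E}{E} : E \in \mathscr{C}_{n,d}\}$ where $\ket{E} = E\ket{0^n}$, so that $M = |\mathscr{C}_{n,d}|$. Running the protocol on $\ket{C}^{\otimes k}$ yields estimates $\hat a_E$ of $\Tr(\ketbra{E}{E}\, \ketbra{C}{C}) = |\langle E | C\rangle|^2$ for every $E \in \mathscr{C}_{n,d}$ simultaneously. Since $\ket{C}$ itself belongs to the ensemble, at least one true value (namely $E=C$) equals $1$, so outputting $D := \argmax_E \hat a_E$ gives a circuit whose estimate is at least $1 - \eps'$; combined with the additive-error guarantee $|\hat a_E - |\langle E|C\rangle|^2| \le \eps'$ holding for all $E$ with probability $\ge 1-\delta$, we get $|\langle D | C\rangle|^2 \ge 1 - 2\eps'$, and then rescale $\eps' = \eps/2$.

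Next I would work out the sample complexity. The bound $B = \max_E \Tr\big((\ketbra{E}{E} - 2^{-n}\Id)^2\big)$ for rank-one projectors is $1 - 2^{-n} \le 1$, a constant. Plugging $M = |\mathscr{C}_{n,d}|$ into \Cref{lem:classical_shadows} gives
\[
    k \ge \mathcal{O}\!\left( \frac{1}{\eps^2} \log\!\Big( \frac{|\mathscr{C}_{n,d}|}{\delta}\Big)\right),
\]
exactly as claimed, with $t = \mathcal{O}(\log(|\mathscr{C}_{n,d}|/\delta))$ groups; note $\log|\mathscr{C}_{n,d}| = \Theta(nd)$ so this is $\poly(n)$ for $d = \polylog n$. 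Since $t$ must divide $k$ one takes $k$ to be the stated quantity rounded up to a multiple of $t$, which does not change the asymptotics.

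Finally, for the computational efficiency claim under $\mathrm{NP}^{\#\mathrm{P}} \subseteq \mathrm{BQP}$: the sample-taking step (random Cliffords, standard-basis measurements) is already quantum polynomial time, and the shadows $\hat\rho_j = (2^n+1)S_j^\dagger \ketbra{x_j}{x_j} S_j - \Id$ have a $\poly(n)$-size implicit description via $(S_j, x_j)$. The only expensive step is computing $\argmax_E \hat a_E$ over the exponentially large ensemble. Each $\Tr(\ketbra{E}{E}\hat\rho_j) = (2^n+1)|\langle x_j | S_j | E\rangle|^2 - 1 = (2^n+1)|\langle x_j | S_j E |0^n\rangle|^2 - 1$ is an output amplitude (squared) of a polynomial-size circuit, computable to the needed precision in $\#\mathrm{P}$ (more precisely via $\mathrm{GapP}$/$\mathrm{P}^{\#\mathrm{P}}$), and the median-of-means aggregation is a polynomial-time postprocessing of $\poly(n)$ such values. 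Thus evaluating $\hat a_E$ for a given $E$ is in $\mathrm{P}^{\#\mathrm{P}}$, and finding the maximizer over all $E \in \mathscr{C}_{n,d}$ is an optimization that an $\mathrm{NP}^{\#\mathrm{P}}$ machine performs; under the hypothesis $\mathrm{NP}^{\#\mathrm{P}} \subseteq \mathrm{BQP}$ this entire search runs in quantum polynomial time, and the sample complexity is already $\poly(n)$ for our depth regime. The main obstacle — really the only nontrivial point — is verifying that the per-observable estimator $\Tr(A_E \hat\rho_j)$ is computable within $\#\mathrm{P}$ (rather than something harder) and that taking the argmax stays within $\mathrm{NP}^{\#\mathrm{P}}$; everything else is bookkeeping on top of \Cref{lem:classical_shadows}.
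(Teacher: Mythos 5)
Your proposal is correct and follows essentially the same route as the paper: the same choice of observables $A_E = \ketbra{E}{E}$ for $E \in \mathscr{C}_{n,d}$, the same $O(1)$ bound on $B$ in \Cref{lem:classical_shadows}, and the same complexity argument in which each estimator $\hat a_E$ is computable with a $\#\mathrm{P}$ oracle (the paper asserts this; you supply the amplitude-computation detail) and the exponential search over circuits is delegated to $\mathrm{NP}^{\#\mathrm{P}}$ via a search-to-decision/binary-search reduction. The only cosmetic difference is that the paper searches for \emph{any} index with $\hat a_i \ge 1-\eps$ rather than the argmax, which is immaterial.
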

\begin{proof}

This follows directly from \Cref{lem:classical_shadows} where, enumerating the circuits as $\mathscr{C}_{n,d} = \{ C_1,C_2,\ldots \}$, we define the observable
\[
    A_i = C_i \ketbra{0^n}{0^n} C_i^\dagger~.
\]
The quantity $B$ in the statement of \Cref{lem:classical_shadows} can be upper-bounded by a constant $\mathcal{O}(1)$, leading to the stated sample complexity bounds. 

On input $\ket{C}^{\otimes k}$, the quantum algorithm will run the classical shadows protocol, and obtain estimates $\{ \hat{a}_i \}$. With probability $1 - \delta$ there is at least one index $i$ such that $\hat{a}_i \geq 1 - \eps$ (namely, the one corresponding to the circuit $C$ that generated the input state), so the algorithm can pick one arbitrarily (e.g. randomly selecting one) and outputting the corresponding circuit description.

We now consider the complexity of this algorithm. Observe that, as stated, the algorithm uses exponential time, simply for computing the estimates $\hat{a}_i$ for all observables $A_1,\ldots,A_M$, of which there are exponentially many. We can reformulate this algorithm so that it runs in polynomial-time, assuming that $\mathrm{NP}^{\mathrm{\# P}} \subseteq \mathrm{BQP}$. 

Consider the following decision problem: given shadows $(S_1,x_1),\ldots,(S_k,x_k)$ (i.e. descriptions of $n$-qubit Clifford circuits along with an $n$-bit string) and integers $1 \leq x < y \leq M$ represented in binary, decide if there exists an $x \leq i \leq y$ such that the corresponding estimator $\hat{a}_i$ defined in \Cref{eq:shadow-estimator} is at least $1 - \eps$. Note that each $i \in [M]$ the estimator $\hat{a}_i$ can be computed in polynomial-time given an oracle for $\mathrm{\# P}$. Therefore a nondeterministic polynomial-time Turing machine with an oracle to $\mathrm{\# P}$ can nondeterministically guess an index $i$ such that $\hat{a}_i \geq 1 - \eps$. 

Thus if $\mathrm{NP}^{\mathrm{\# P}} \subseteq \mathrm{BQP}$, the quantum algorithm can perform the shadow measurements, and then perform binary search in polynomial time to identify such an index $i$ with high probability. This concludes the ``Furthermore'' part of the corollary.
\end{proof}
\noindent Note that by the results of an independent recent work, by Hiroka and Hsieh \cite{hiroka2024computational}, the complexity theoretic inclusion can be improved to $\mathrm{PP} \subseteq \mathrm{BQP}$. 

\section{Black-box lower bounds for quantum learning and cloning}
\label{sec:blackbox}

\newcommand{\bi}{\bm{i}}

In this section, we give evidence for our hardness of learning conjectures by proving lower bounds in the black-box model that amongst other attack rule out efficient shadow tomography type attacks. We first introduce the black-box setting where we model output states of sufficiently deep random circuits by a Haar random state.

\begin{longfbox}
\textbf{Query algorithm with a state preparation oracle}
\begin{enumerate}
    \item Independently sample $N=2^{n}$ many $n$-qubit Haar random states $\CS = \{ \ket{\psi_1}, \ldots,\ket{\psi_N}\}$. Also, independently sample a uniformly random index $J \in [N]$.
    \item The quantum algorithm is given some $k = \poly(n)$ copies of the target state $\ket{\psi_J}$ and black-box access to a state preparation oracle that generates each $\ket{\psi_i}$ in the following way:
    \[ O_{\CS}\ket{i}\ket{0} =  \ket{i} \ket{\psi_i}.\]
    We can implement the above oracle unitarily by arbitrarily extending each $\ket{\psi_i}$ to an independent random basis of $(\mathbb{C}^{2})^{\otimes n}$.
    \item The output state of a $T$-query quantum algorithm in this model, just before the final measurement, can be expressed as
    \[ U_{T+1}(O_{\CS} \otimes I)\cdots (O_{\CS} \otimes I)U_2(O_{\CS} \otimes I)U_1 \ket{\psi_J}^{\otimes k}\ket{0^m},\]
    where $m= \poly(n)$ is the number of ancillas. The unitaries $U_i$'s are arbitrary fixed unitaries that do not depend on $J$ or $\CS$.
\end{enumerate}
\end{longfbox}


We now formalize our learning tasks in the black-box setting.

\begin{longfbox}
\textbf{Quantum learning task.} The algorithm succeeds if it outputs an index $i \in [N]$ such that $|\braket{\psi_i|\psi_J}|^2$ is non-negligible in $n$.

\medskip 
\textbf{Cloning task.} The algorithm succeeds if it outputs a state $\ket{\xi}$ on $n(k+1)$ qubits, such that $|\braket{\xi|\psi^{\otimes k}}|^2$ is non-negligible in $n$.
\end{longfbox}
\medskip




To motivate the above black-box setting, we note that the Haar random state models the output state of the random circuit and the random index $J$ above is the analog of the random circuit $C$.  Furthermore, the inefficient shadow tomography based algorithm to learn the circuit in the white-box setting (see \Cref{subsection: classical shadows}) has an analog in the black-box model as well. We only give a sketch of the algorithm which solves the state learning task: first take an epsilon net $\CN = \{\ket{\phi_1}, \ket{\phi_2}, \cdots, \ket{\phi_M}\}$ of the complex unit sphere in $N = 2^n$ dimensions where $\epsilon=1/\mathrm{poly}(n)$ and $M = (C/\eps)^{N}$ for some universal constant $C$. Then, using the classical shadows algorithm of \cite{Huang_2020} with the observables $\{\ketbra{\phi_i}{\phi_i}\}_{i=1}^M$, learn an index $r$ such that the overlap $|\braket{\phi_r|\psi_J}| \ge 1/\mathrm{poly}(n)$. Note that only $k=\poly(n)$ samples of the input state $\ket{\psi_J}$ are needed for this. Finally, by querying the oracle $O_{\CS}$ exponentially many times and measuring the overlap with the state $\ket{\phi_r}$, learn the index $J$. After learning the index $J$, one can solve the cloning task with non-negligible probability as well. Note that this algorithm can even be implemented \emph{non-adaptively}. 

There are several more modifications one could make to the above black-box model to make it more in line with the white-box model. For example, the choice of the parameter $N$ above could be changed to further model the fact that there are many more than $2^n$ circuits acting on $n$ qubit states. Or one could plant a small number of states that are correlated with each of the Haar random states to model the fact that the output states of circuits may have some non-trivial overlaps with a few other ones. The results proven below are robust to such changes, at least for the examples given above. We have mostly opted for the  choices made here for the purposes of presenting a cleaner analysis. 


\blackbox*




To prove the above result, it suffices to look at the cloning task, since a query-efficient algorithm for the state learning task can be used to obtain a query-efficient algorithm for the former. 

\subsection{Proof of \Cref{thm:blackbox}}

The proof proceeds in two parts: the first part follows a hybrid argument similar to \cite{BBBV} --- we show that up to a small error, the queries to the oracle can be replaced by queries to a different oracle that does not depend on $\ket{\psi_J}$. The second part then argues the base case for the algorithm which makes no queries and only uses the given copies of the input state. For the cloning lower bound, we appeal to the well-known result about the optimal cloning probability of a Haar random state \cite{werner1998optimal}. 




\paragraph{Hybrid argument.}  

Let $N=2^n$. A $T$-query algorithm starts in the initial state
\[ \ket{\phi^{(0)}} = \ket{\psi_J}^{\otimes k} \otimes \ket{0\cdots 0},\]
and the state after $t \in [T]$ queries is given by
\[ \ket{\phi^{(t)}} = (O_{\CS} \otimes I) U_{t-1} \cdots (O_{\CS} \otimes I) U_1 (O_{\CS} \otimes I) U_0\ket{\phi^{(0)}},\]
where $U_1,\ldots,U_{T+1}$ are fixed unitaries.


We will show that calls to the state preparation oracle $O_{\CS}$ can be replaced with calls to another oracle $O'_{\CS}$. Towards this end, we consider the following oracle $O'_{\CS}$:
\begin{align*}
     \ O'_{\CS}\ket{i}\ket{0} &=  \ket{i} \ket{\psi_i}, \text{ for all }i \neq J\\
     \  O'_{\CS}\ket{J}\ket{0} &=  \ket{i} \ket{\psi'},
\end{align*}
where $\ket{\psi'}$ is a Haar random state sampled independently of $J$ and $\CS$. Note that all the random variables $J, \CS = \{\ket{\psi_i}\}_{i=1}^N$ and $\ket{\psi'}$ are independent and the oracle $O'_{\CS}$ above can be implemented unitarily as before, by extending $\ket{\psi'}$ to a random basis independent of $\{\ket{\psi_i}\}_{i=1}^N$ and $J$.

Defining 
\[ \ket{{\phi'}^{(t-1)}} = (O'_{\CS} \otimes I) U_{t-1} \cdots (O'_{\CS} \otimes I) U_1 (O'_{\CS}  \otimes I) U_0\ket{\phi^{(0)}},\]
we will show that on average over the choice of $J, \CS, \ket{\phi'}$, the following holds
\begin{align}\label{eqn:ind}
    \ \BE\left[\left\|\ket{{\phi}^{(t-1)}} - \ket{{\phi'}^{(t-1)}} \right\| \right]\le \epsilon_t \text{ where } \epsilon_t = \frac{2t}{\sqrt{N}}. 
\end{align}

The statement is trivially true when no queries are made, so consider any $t \in [T]$. Then, using the triangle inequality, 
\begin{align}\label{eqn:1}
    \ \BE\left[\left\|\ket{{\phi}^{(t)}} - \ket{{\phi'}^{(t)}} \right\| \right] &\le \BE\left[\left\| \ket{{\phi}^{(t)}} - (O_{\CS} \otimes I) U_t\ket{{\phi'}^{(t-1)}} \right\| \right] \notag \\
    \ & \hphantom{shiftttt} +~ \BE\left[\left\|(O_{\CS} \otimes I) U_t\ket{{\phi'}^{(t-1)}} - \ket{{\phi'}^{(t)}} \right\| \right], 
\end{align}

Note that $\ket{{\phi}^{(t)}} = (O_{\CS} \otimes I) U_t\ket{{\phi}^{(t-1)}}$, thus by the unitary invariance of the Euclidean norm, the induction hypothesis implies that the first term is at most $\eps_{t-1}$. We now bound the second term
\begin{align*}
   \BE\left[\left\|(O_{\CS} \otimes I) U_t\ket{{\phi'}^{(t-1)}} - \ket{{\phi'}^{(t)}} \right\| \right] =  \BE\left[\left\|(O_{\CS} \otimes I) U_t\ket{{\phi'}^{(t-1)}} - (O'_{\CS} \otimes I) U_t\ket{{\phi'}^{(t-1)}} \right\| \right].
\end{align*}

Let us write
\begin{align*}
    \  \ket{\xi^{(t-1)}} := U_t\ket{{\phi'}^{(t-1)}} &=\sum_{i=1}^N \ket{i}\ket{{\xi'}_i^{(t-1)}},
\end{align*}
for some sub-normalized states $\ket{{\xi'}_i^{(t-1)}}$ satisfying $\sum_{i=1}^{N} \|\ket{{\xi'}_i^{(t-1)}}\|^2 = 1$. Note that
    \[  \left\| (O_{\CS} \otimes I) U_t\ket{{\phi'}^{(t-1)}} - (O'_{\CS} \otimes I) U_t\ket{{\phi'}^{(t-1)}}\right\| \le 2\left\|\ket{{\xi'}_J^{(t-1)}}\right\|,\]
since $O_{\CS} \otimes I $ and $O'_{\CS} \otimes I$ act the same on all states of the form $\ket{i}\ket{\phi}$ where $i \neq J$.

%
Thus, plugging this in \eqref{eqn:1}, 
\begin{align*}
    \ \BE\left[\left\|\ket{{\phi}^{(t)}} - \ket{{\phi'}^{(t)}} \right\| \right] &\le \eps_{t-1} + 2\cdot \BE\left[\left\|\ket{{\xi'_J}^{(t-1)}}\right\|\right].
\end{align*}


Note that $J$ was sampled uniformly from $[N]$ and independently of $\CS$ and $\ket{\psi'}$ which are i.i.d. Haar random states.  By symmetry, it follows that $J$ is uniformly distributed in $[N]$ conditioned on $O'_{\CS}$ and $\ket{\psi'}$. Since the state $\ket{\xi'^{(t-1)}}$ is determined by the choice $O'_{\CS}$ and $\ket{\psi'}$, it follows that
\begin{align*}
    \ \BE\left[\left\|\ket{{\phi}^{(t)}} - \ket{{\phi'}^{(t)}} \right\| \right] &\le \eps_{t-1} + \frac{2}{N}\cdot  \BE\left[\sum_{i=1}^N \left\|\ket{{\xi'}_i^{(t-1)}}\right\|\right] \notag\\
    \ &\le \eps_{t-1} + \frac{2}{N} \cdot \sqrt{N} \cdot \BE \left[\sqrt{\sum_{i=1}^N \left\|\ket{{\xi'}_i^{(t-1)}}\right\|^2} \right] \notag \\
    \ & \le \eps_{t-1} + \frac{2}{\sqrt{N}} = \frac{2t}{\sqrt{N}} = \eps_t.
\end{align*}
Note that the expectation on the right hand side is only taken over the choice of the oracle $O'_{\CS}$ and $\ket{\psi'}$.





Recalling $N=2^n$, it follows by Markov's inequality that with probability at least $1-2^{-n/4}$, the states after $T$ queries are $2T\cdot2^{-n/4}$-close in the Euclidean norm after replacing the oracle $O_{\CS}$ with $O'_{\CS}$.

\paragraph{Base case.}


Now we consider algorithms for the cloning task that receive $k$ copies of the input state $\ket{\psi_J}$ and query the oracle $O'_{\CS}$. Since $\ket{\psi_J}$ is independent of $O'_{\CS}$ and the index $J$, observe that any such quantum algorithm defines a cloning channel for a Haar random state, i.e., a quantum channel that  takes $k$ copies of an $n$-qubit Haar random state and outputs a mixed state $\rho$ on $(k+1)n$ qubits that should be close to $(k+1)$ copies of the input state. By the well-known results about the optimal cloning of Haar random states \cite{werner1998optimal}, it follows that
 \[ \mathrm{F}(\rho, \ketbra{\psi_J}{\psi_J}^{\otimes (k+1)}) \le \dfrac{\binom{2^n+k-1}{k}}{\binom{2^n+k}{k+1}} \le \frac{k+1}{2^n+k} \le k \cdot 2^{-n}. \]
Combining the hybrid argument and the base case, it follows that for the cloning task, the success probability is at most ${2T}\cdot 2^{-n/4} + 2^{-n/4} + k \cdot 2^{-n} \le 2^{-n/4} ( 2T + k + 1)$. This completes the proof of \Cref{thm:blackbox}.

\section{Cryptography from hardness of quantum learning}
In this section, we lay out the cryptographic objects we can construct from our hardness of learning conjectures.
\label{sec:crypto}

\subsection{One-way state generators}
\label{sec:OWSG}

The conjecture about the hardness of learning is essentially \emph{equivalent} to the existence of one-way state generators (OWSGs). These are a quantum analogue of one-way functions, which are functions efficiently computable in the forwards direction but computationally difficult to invert. We first recall the formal definition of a OWSG, a primitive first introduced by Morimae and Yamakawa~\cite{morimae2022quantum,morimae2022one}. 

\begin{definition}[One-way state generator]
A \emph{one-way state generator} $G$ is a pair of QPT algorithms $(\Gen,\Ver)$ such that there exists polynomials $r(n),m(n)$ such that
\begin{itemize}
    \item $\Gen$ (called the \emph{generator}) takes as input the security parameter $1^n$ in unary, a string $k \in \{0,1\}^{r(n)}$ called a \emph{key}, and outputs a $m(n)$-qubit pure quantum state $\ket{\psi_k}$.
    \item $\Ver$ (called the \emph{verification}) takes as input the security parameter $1^n$ in unary, a key $k \in \{0,1\}^{r(n)}$ and an $m(n)$-qubit state $\ket{\psi}$, and accepts or rejects. 
\end{itemize}
We say that a OWSG $G$ satisfies \emph{correctness} if for all security parameters $n$, for all keys $k \in \{0,1\}^{r(n)}$,
\[
    \Pr \Big [ \Ver(1^n,k,\Gen(1^n,k)) \textrm{ accepts} \Big ] \geq 1 - \negl(n)~.
\]
We say that \emph{$G$ has security error $\gamma(n)$} if for all polynomials $q(n)$, for all QPT algorithms $A$, for all sufficiently large $n$, 
    \[
        \Pr \left [ \Ver(1^n,k',\ket{\psi_k}) \textrm{ accepts} : \begin{array}{c} k \leftarrow \{0,1\}^{r(n)} \\ \ket{\psi_k} \leftarrow \Gen(1^n,k) \\ k' \leftarrow A(\ket{\psi_k}^{\otimes q(n)})\end{array} \right ] \leq \gamma(n)~.
    \]
    We say that $G$ is \emph{cryptographically secure} if it has security error that is negligible in $n$ (i.e., it goes to zero faster than $1/\poly(n)$). 
\end{definition}

\begin{remark}
    For simplicity we often omit the security parameter $1^n$ as an input to  $\Gen,\Ver$ when it is clear from context.
\end{remark}

Let $r(n)$ be a polynomial such that $r(n)$ bits are sufficient to describe a circuit from the family $\mathscr{C}_{n}$. We define a OWSG $G = (\Gen,\Ver)$ based on random circuits.

\begin{longfbox}
    \begin{protocol} {\bf Random circuit OWSG} \label{prot:OWSG} 
    \end{protocol}

    \textbf{$\Gen$}: Given input key $C \in \{0,1\}^{r(n)}$, interpret it as a description of an $n$-qubit circuit $C$ from the ensemble $\mathscr{C}_n$ (defined in \Cref{subsubsection: quantum circuits}). Output $\ket{C} = C \ket{0^n}$. 

    \vspace{4pt}

    \textbf{$\Ver$}: Given input $C \in \{0,1\}^{r(n)}$ and a state $\ket{D}$ on $n$ qubits, apply $C^\dagger$ to the state, and measure. Accept if the result is all zeroes, and reject otherwise. 
\end{longfbox}

It is clear that the algorithms $\Gen,\Ver$ run in polynomial time. It is also easy to see that $G$ has perfect correctness. We now argue that the security of the OWSG is essentially \emph{equivalent} to Computational No-Learning Assumption (\Cref{conj:learning}). 
The level of security corresponds to the strength of the hardness conjecture.

\begin{lemma}[Equivalence between hardness of learning and random circuit OWSG security]
\label{lem:weak-owsg}
    Assuming $\eps$-No-Learning (\Cref{conj:learning}), the random circuit OWSG has security error $(2 - \eps)\eps$. Conversely, if the random circuit OWSG has security error $\gamma$, then the $\sqrt{\gamma}$-No Learning Assumption holds.
\end{lemma}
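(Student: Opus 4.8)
The plan is to relate two quantities directly: the success probability of an adversary against the Random Circuit OWSG, and the probability that a No-Learning adversary outputs a good circuit $D$. The key observation is that these are not literally the same event, because the OWSG verifier applies $C^\dagger$ to the \emph{state} $\ket{\psi_k}$ the adversary outputs (which need not be $\ket{D}$ for any circuit $D$), whereas the No-Learning adversary must output a \emph{circuit description} $D$. So the two directions of the equivalence are not perfectly tight, which is why the parameters $(2-\eps)\eps$ and $\sqrt{\gamma}$ appear rather than a clean matching. I would handle each direction separately.

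For the forward direction (No-Learning $\Rightarrow$ OWSG security $(2-\eps)\eps$), I would argue the contrapositive. Suppose an adversary $A$ breaks the OWSG with probability $>(2-\eps)\eps$: given $\ket{C}^{\otimes q(n)}$ it outputs a key $C'$ such that $\Ver(C',\ket{C})$ accepts with that probability, i.e.\ $|\langle 0^n| (C')^\dagger \ket{C}|^2 = |\langle C'|C\rangle|^2$ is large in expectation over the acceptance. The subtlety is that "accepts with probability $>(2-\eps)\eps$" is an average statement over $C$ and over the internal randomness / measurement, whereas No-Learning wants $\Pr[|\langle C|D\rangle|^2 \ge \eps] \le \delta$. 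I would use a Markov-type / averaging argument: if $\E_C[\,|\langle C'|C\rangle|^2\,] > (2-\eps)\eps$ where the expectation also accounts for $A$'s randomness in producing $C'$, then since $|\langle C'|C\rangle|^2 \le 1$ always, the fraction of mass on which $|\langle C'|C\rangle|^2 \ge \eps$ must exceed $\eps$ — concretely, writing $p = \Pr[|\langle C'|C\rangle|^2 \ge \eps]$, we have $\E[|\langle C'|C\rangle|^2] \le p \cdot 1 + (1-p)\cdot \eps \le p + \eps$, so $p > (2-\eps)\eps - \eps = (1-\eps)\eps$; I should double check that the intended split of the bound $(2-\eps)\eps = \eps + (1-\eps)\eps$ is what makes this land at the threshold $\eps$ for $\delta$, possibly with a cleaner choice like bounding $\E[X] \le \eps + (1-\eps)\Pr[X\ge \eps]$ giving $\Pr[X \ge \eps] > \eps$. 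That would violate $\eps$-No-Learning (which is $(\eps,\eps)$-No-Learning). The only thing to be careful about is that $A$ outputs a \emph{circuit} $C'$ here automatically since the OWSG key \emph{is} a circuit description, so no extra work is needed to land inside the class $\mathscr{C}_n$.

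For the converse (OWSG security $\gamma \Rightarrow \sqrt{\gamma}$-No-Learning), again take the contrapositive: suppose a QPT algorithm $A$ violates $\sqrt{\gamma}$-No-Learning, i.e.\ with probability $> \sqrt{\gamma}$ over $C \leftarrow \mathscr{C}_n$ it outputs $D \in \mathscr{C}_n$ with $|\langle C|D\rangle|^2 \ge \sqrt{\gamma}$. Use this same $A$ as an OWSG adversary outputting the key $D$; then $\Ver(D,\ket{C})$ applies $D^\dagger$ to $\ket{C}$ and measures, accepting with probability exactly $|\langle 0^n|D^\dagger\ket{C}|^2 = |\langle D|C\rangle|^2$. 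So the overall acceptance probability is at least $\Pr[\text{good } D] \cdot \Pr[\text{accept}\mid \text{good }D] > \sqrt{\gamma}\cdot\sqrt{\gamma} = \gamma$, contradicting security error $\gamma$. I expect the main (very mild) obstacle is just the bookkeeping in the forward direction — getting the averaging inequality to produce exactly the threshold $\eps$ for the failure probability so that the constant works out to $(2-\eps)\eps$ — and making sure the quantification over polynomials $q(n)$ and "sufficiently large $n$" is threaded through both reductions correctly. Everything else is essentially a one-line fidelity computation plus Markov.
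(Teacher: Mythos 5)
Your proposal is correct and follows essentially the same route as the paper: both directions are contrapositive arguments using the identity that the verifier's acceptance probability equals $\E\,|\langle C\,|\,D\rangle|^2$, with the averaging bound $\E[X]\le \eps+(1-\eps)\Pr[X\ge\eps]$ (your ``cleaner choice,'' which is the one that yields $\Pr[X\ge\eps]>\eps$ exactly) for the forward direction and the trivial bound $\E[X]\ge\sqrt{\gamma}\cdot\Pr[X\ge\sqrt{\gamma}]$ for the converse. The only slip is your opening remark that the verifier acts on a state ``the adversary outputs''---in the OWSG security game the adversary outputs a key and the verifier acts on the honest state $\ket{C}$---but your actual reduction treats this correctly.
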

\begin{proof}
    To prove (weak) security, assume for contradiction there exists a QPT adversary $A$ and a polynomial $q(n)$ such that for infinitely many $n$,
    \[
        \Pr \left [ \Ver(D,\ket{C}) \textrm{ accepts} : \begin{array}{c} C \leftarrow \mathscr{C}_n \\ D \leftarrow A(\ket{C}^{\otimes q(n)})\end{array} \right ] > (2 - \eps)\eps
    \]
    The acceptance probability of the verification circuit can be written as:
    \[
        \E_{D \leftarrow A(\ket{C}^{\otimes q(n)})} \left | \langle C | D \rangle \right |^{2} > (2 - \eps)\eps~.
    \]
    This implies that the probability over the choice of $D$ output by $A(\ket{C}^{\otimes q(n)})$ that $|\langle C | D \rangle|^2 \geq \eps$ is greater than $\eps$ for infinitely many $n$. This contradicts \Cref{conj:learning}. 

    Conversely, assume for contradiction that the random circuit OWSG has security error $\gamma$, but there exists a QPT algorithm $A$ such that
\[
    \Pr \left [ |\langle C | D \rangle|^2 \geq \sqrt{\gamma} : \begin{array}{c} C \leftarrow \mathscr{C}_n \\ D \leftarrow A(\ket{C}^{\otimes \poly(n)}) \end{array}  \right] > \sqrt{\gamma}~.
\]
This implies that
    \[
        \Pr \left [ \Ver(D,\ket{C}) \textrm{ accepts} : \begin{array}{c} C \leftarrow \mathscr{C}_n \\ D \leftarrow A(\ket{C}^{\otimes q(n)})\end{array} \right ] > \gamma
    \]
    which contradicts the security of the OWSG.

\end{proof}

When $\eps \geq 1/\poly(n)$, then we consider the resulting OWSG from \Cref{lem:weak-owsg} to have \emph{weak} security, as it implies that a QPT adversary could potentially produce a non-negligible approximation of $\ket{C}$ with non-negligible probability. To obtain a \emph{cryptographically secure} OWSG, we need to amplify it so that all QPT adversaries can only succeed with at most negligible probability. 

A standard way to amplify the security of a weak OWSG is via \emph{parallel repetition}: now the key to the amplified OWSG corresponds to a $t$-tuple of independently chosen random circuits $(C_1,\ldots,C_t)$, and the output is the tensor product of the corresponding states $\ket{C_1}\otimes \cdots \otimes \ket{C_t}$. Verification proceeds by checking that the $i$'th block of $n$ qubits is in the state $\ket{C_i}$ for each $i=1,\ldots,n$. Intuitively, if it is somewhat hard to learn the output of one random quantum circuit, then it should be \emph{very} hard to simultaneously learn the output of \emph{many} random quantum circuits. This intuition holds true and the parallel repetition of OWSGs is formally analyized in Morimae and Yamakawa~\cite{morimae2022one}, who showed that if a OWSG $G$ has security error $\gamma$, then the $t$-fold repetition of $G$, denoted by $G^t$, has security error $\approx \gamma^t$ (up to additive errors that are negligible in $n$). (This is also implied by the general quantum hardness amplification result of~\cite{Bostanci_2024}). 

\begin{theorem}
\label{thm:owsg}
Assuming $\eps$-No-Learning for $\eps \leq 1 - \frac{1}{\poly(n)}$, there exists a cryptographically secure one-way state generator.
\end{theorem}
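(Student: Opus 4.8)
The plan is to bootstrap the weak security guarantee of \Cref{lem:weak-owsg} into cryptographic security via a hardness amplification (parallel repetition) theorem for OWSGs. First, unpack the hypothesis: $\eps \le 1 - \frac{1}{\poly(n)}$ means $1 - \eps \ge \frac{1}{q(n)}$ for some fixed polynomial $q$. By \Cref{lem:weak-owsg}, the Random Circuit OWSG $G$ of \Cref{prot:OWSG} has security error $(2-\eps)\eps = 1 - (1-\eps)^2 \le 1 - \frac{1}{q(n)^2}$. Writing $\gamma(n) := 1 - \frac{1}{q(n)^2}$, we have a OWSG $G$ whose security error is bounded away from $1$ by an inverse polynomial amount, and whose correctness is perfect.

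Next I would invoke the parallel repetition theorem of Morimae and Yamakawa~\cite{morimae2022one} (or the general quantum hardness amplification result of~\cite{Bostanci_2024}): for any polynomial $t = t(n)$, the $t$-fold repetition $G^{t}$ --- whose key is a tuple $(C_1,\ldots,C_t)$ of independent circuits from $\mathscr{C}_n$, whose output is $\ket{C_1} \otimes \cdots \otimes \ket{C_t}$, and whose verifier checks each of the $t$ blocks separately --- has security error at most $\gamma(n)^{t} + \negl(n)$. Taking $t(n) = q(n)^2 \cdot n$ (any superlogarithmic multiple of $q(n)^2$ suffices), we get $\gamma(n)^{t(n)} \le \big(1 - \frac{1}{q(n)^2}\big)^{q(n)^2 \cdot n} \le e^{-n}$, which is negligible; hence $G^{t}$ has negligible security error and is cryptographically secure. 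Since $t(n)$ is a polynomial, the key length $t \cdot r(n)$ and output size $t \cdot n$ of $G^{t}$ remain polynomial, so $\Gen$ and $\Ver$ for $G^{t}$ are still QPT algorithms; and since the per-block verifier accepts honest inputs with certainty, $G^{t}$ inherits perfect correctness. This establishes the theorem.

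The main obstacle is not the parameter bookkeeping above but verifying that the cited amplification theorem applies in the regime we actually need: the base security error $\gamma$ is as large as $1 - 1/\poly(n)$ rather than a fixed constant bounded away from $1$, and we demand that the amplified error be \emph{negligible}, not merely small. Two things warrant an explicit check. First, the additive $\negl(n)$ slack in the repetition theorem must be genuinely independent of $t$, so that choosing $t$ polynomial does not inflate it above negligible; this is the case for the statements in~\cite{morimae2022one, Bostanci_2024}. Second, the theorem's hypotheses should only require $\gamma < 1 - 1/\poly(n)$ (equivalently, a $1/\poly(n)$ advantage gap), which again matches what those results provide. Everything else --- that $G^t$ is a legitimate OWSG in our framework and that its honest acceptance probability is exactly the product of the per-block acceptance probabilities --- is immediate from the construction.
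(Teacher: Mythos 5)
Your proposal is correct and follows essentially the same route as the paper: apply \Cref{lem:weak-owsg} to get a weak OWSG with security error $(2-\eps)\eps \le 1 - 1/\poly(n)$, then amplify via the parallel repetition theorem of~\cite{morimae2022one,Bostanci_2024} with a polynomially bounded number of repetitions chosen so the amplified error is negligible. The only cosmetic difference is the choice of $t$ (you take $t = q(n)^2 n$ to get $e^{-n}$, the paper takes $t = \log^2(n)/\log\frac{1}{(2-\eps)\eps}$ to get $2^{-\log^2 n}$), which does not affect the argument.
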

\begin{proof}
    Assuming $\eps$-No-Learning, by \Cref{lem:weak-owsg} there exists a weak OWSG $G$ with security error $(2 - \eps)\eps$. Let 
    \[
    t = \frac{\log^2(n)}{\log \frac{1}{ (2 - \eps)\eps} }~.
    \]
    When $\eps \leq 1 - 1/\poly(n)$, the quantity $t$ is at most $\poly(n)$. Consider the following OWSG $\hat{G}$, which is simply the original OWSG $G$ repeated $t$ times in parallel.

\begin{longfbox}
    \begin{protocol} {\bf (Strong) one-way state generator} \label{prot:OWSG-strong} 
    \end{protocol}

    \textbf{$\Gen$}: Given input key $(C_1,\ldots,C_t) \in (\{0,1\}^{r(n)})^t$, interpret it as a description of a $t$-tuple of $n$-qubit circuits from the ensemble $\mathscr{C}_n$. Output $\ket{C_1} \otimes \cdots \otimes \ket{C_t}$.

    \vspace{4pt}

    \textbf{$\Ver$}: Given input $(C_1,\ldots,C_t) \in (\{0,1\}^{r(n)})^t$ and a state $\ket{D}$ on $nt$ qubits, apply $C_1^\dagger \otimes \cdots \otimes C_t^\dagger$ to the state, and measure. Accept if the result is all zeroes, and reject otherwise. 
\end{longfbox}

    By the result on hardness amplification of OWSGs by Morimae and Yamakawa~\cite{morimae2022one} (alternatively, by the quantum parallel repetition theorem of~\cite{Bostanci_2024}),  the security error of $\hat{G}$ is at most 
    \[
        \Big ( (2 - \eps)\eps \Big)^{t} + \negl(n) = 2^{-\log^2(n)} + \negl(n)~.
    \]
    Note that $2^{-\log^2 (n)}$ goes to $0$ faster than any inverse polynomial $1/\poly(n)$, and thus $\hat{G}$ has negligible security error.
    
\end{proof}

\subsection{Quantum bit commitments}
\label{bit commitments}
In this section we explore the cryptographic implications of our second hardness assumption, the Computational No-Cloning Assumption (\Cref{conj:no_cloning}). We show that No-Cloning directly implies the existence of secure quantum bit commitments. 

We formally define quantum bit commitment schemes. In this paper we only define a special kind known as \emph{noninteractive quantum commitments}; while quantum commitment schemes can be interactive in general, it was shown by~\cite{yan2022general} that in the quantum setting they can always be generically compiled to a simple noninteractive protocol\footnote{It is worth noting that this transformation is not generically possible in the classical setting!}. 

\begin{definition}[Noninteractive quantum bit commitment]
    A \emph{noninteractive quantum bit commitment scheme} (or a \emph{commitment scheme} for short) $\mathrm{Com}$ is a QPT algorithm that takes as input a security parameter $1^n$ and a bit $b \in \{0,1\}$, and behaves as follows: it applies a unitary $U_{n,b}$ to the all zeroes state, obtaining a bipartite pure state $\ket{\psi_b}$ on registers $\reg{AB}$. 

    We say that a commitment scheme $\mathrm{Com}$ satisfies \emph{correctness} if for all security parameters $n$, for all $b \in \{0,1\}$, 
    \[
        |\langle \psi_0 | \psi_1 \rangle|^2 \leq \negl(n)
    \]
    where $\ket{C_b}$ is the output of $\mathrm{Com}(1^n,b)$. 
    We say that $\mathrm{Com}$ satisfies \emph{$\eps$-statistical hiding} if 
    \[
        \mathrm{F}(\rho_0,\rho_1) \geq 1 - \eps(n)
    \]
    where $\mathrm{F}(\cdot,\cdot)$ is the fidelity between two density matrices, and $\rho_b$ is the reduced density matrix of $\ket{\psi_b}$ on register $\reg{B}$. 
    We say that $\mathrm{Com}$ satisfies \emph{$\delta$-computational binding} if for all QPT adversaries $A$, for all sufficiently large $n$, 
    \[
        \mathrm{F}( \ketbra{\psi_0}{\psi_0}, (A_n \otimes \Id)(\ketbra{\psi_1}{\psi_1}) ) \leq \delta(n)
    \]
    where $A_n$ denotes running $A$ with security parameter $1^n$ and it takes as input register $\reg{A}$ of $\ket{\psi_1}$. 
\end{definition}
For the remainder of this section, we will simply refer to noninteractive commitment schemes as simply a commitment scheme. We only defined the notion of commitments with statistical hiding and computational binding; there is also the other ``flavor'' of commitments such as computational hiding and statistical binding. These flavors can be efficiently switched in a blackbox way; see~\cite{yan2022general,hhan2023hardness} for a proof. 



We note that at least one of the hiding or binding properties must rely on computational hardness assumptions~\cite{brassard1997brief}; in other words, there do not exist quantum commitment schemes that are both statistically hiding as well as statistically binding. For an exploration of the complexity-theoretic underpinnings of the security of quantum commitment schemes, see Bostanci, et al.~\cite{bostanci2023unitary}.

\vspace{8pt}

We now define a commitment scheme based on random circuits. Recall that $\mathscr{C}_n$ is the ensemble of $n$-qubit quantum circuits as defined in \Cref{subsubsection: quantum circuits}. To distinguish between a classical description of a circuit and the state generated by the circuit, we use the following notation: $\ket{\hat{C}}$ denotes the standard basis state of $r(n) = \log |\mathscr{C}_n|$ qubits that represents the classical description of the circuit $C$, and $\ket{C} = C\ket{0^n}$ denotes the state output by the circuit on the all zeroes input. In what follows, we set $k = \cO(\eps^{-2} \log |\mathscr{C}_n|/\eps) = \cO(n \log^2 n)$ for some $\eps = 1/\poly(n)$ to be determined later.

\begin{longfbox}
    \begin{protocol} {\bf Commitment scheme based on random circuits} \label{prot:commitment} 
    \end{protocol}

    $\mathrm{Com}(1^n,b)$: If $b = 0$, then prepare the state 
    \[
        \ket{\psi_0}_{\reg{AB}} := \frac{1}{\sqrt{|\mathscr{C}_n|}} \sum_{C \in \mathscr{C}_n} \Big( \ket{C}^{\otimes k} \otimes \ket{0^n} \Big)_{\reg{A}} \otimes \ket{\hat{C} }_{\reg{B}}~.
    \] 
    If $b = 1$, then prepare the state
    \[
        \ket{\psi_1}_{\reg{AB}} := \frac{1}{\sqrt{|\mathscr{C}_n|}} \sum_{C \in \mathscr{C}_n}  \ket{C}^{\otimes (k+1)}_{\reg{A}} \otimes \ket{\hat{C} }_{\reg{B}}~.
    \] 
    
\end{longfbox}

It should be clear that for each $b \in \{0,1\}$ there is an efficiently computable unitary $U_{n,b}$ that prepares $\ket{\psi_b}$ from the all zeroes state. Therefore $\mathrm{Com}$ is a QPT algorithm. 

We argue the correctness property of $\mathrm{Com}$.

\begin{claim}
    \label{clm:commitment-correctness}
    The commitment scheme of \Cref{prot:commitment} satisfies correctness.
\end{claim}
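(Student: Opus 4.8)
`\textbf{Proof plan.}` The claim asserts that the commitment scheme satisfies correctness, i.e. $|\langle \psi_0 | \psi_1 \rangle|^2 \leq \negl(n)$. The plan is to directly compute this inner product from the definitions of $\ket{\psi_0}$ and $\ket{\psi_1}$. Since the $\reg{B}$ registers hold the classical descriptions $\ket{\hat{C}}$, which are orthonormal across distinct circuits $C$, the inner product collapses to a single sum over $\mathscr{C}_n$:
\[
    \langle \psi_0 | \psi_1 \rangle = \frac{1}{|\mathscr{C}_n|} \sum_{C \in \mathscr{C}_n} \Big( \bra{C}^{\otimes k} \otimes \bra{0^n} \Big) \Big( \ket{C}^{\otimes (k+1)} \Big) = \frac{1}{|\mathscr{C}_n|} \sum_{C \in \mathscr{C}_n} \langle 0^n | C \rangle~,
\]
using $\langle C | C \rangle = 1$ on the first $k$ blocks. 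So the quantity of interest is the average over $C \in \mathscr{C}_n$ of the amplitude $\langle 0^n | C \ket{0^n} \rangle$, i.e., the amplitude that a random brickwork circuit of depth $\log^2 n$ maps $\ket{0^n}$ back to $\ket{0^n}$.

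Next I would bound this average. The natural route is to pass to the second moment: by Cauchy–Schwarz (or Jensen),
\[
    \Big| \frac{1}{|\mathscr{C}_n|} \sum_{C \in \mathscr{C}_n} \langle 0^n | C | 0^n \rangle \Big|^2 \leq \frac{1}{|\mathscr{C}_n|} \sum_{C \in \mathscr{C}_n} \big| \langle 0^n | C | 0^n \rangle \big|^2 = \Ex[C \leftarrow \mathscr{C}_n]{\,|\langle 0^n | C | 0^n \rangle|^2\,}~,
\]
so it suffices to show this expected output probability at the all-zeroes string is negligible. This is exactly an anticoncentration-type statement for the random circuit ensemble $\mathscr{C}_n$: for a sufficiently deep random brickwork circuit, the expected collision-type probability $\Ex{|\langle x | C | 0^n\rangle|^2}$ at any fixed string $x$ is $2^{-n}(1 + o(1))$, or at the very least $\poly(n) \cdot 2^{-n}$, which is negligible. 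I would cite the relevant moment computations for random brickwork circuits (the ensemble forms an approximate $2$-design at depth $\mathrm{poly}\log n$, e.g. via the results referenced elsewhere in the paper on anticoncentration of random circuits, such as \cite{fefferman2024anticoncentrationunitaryhaarmeasure, dalzell2024random}). Since $|\langle\psi_0|\psi_1\rangle|^2 \leq \poly(n)\cdot 2^{-n} = \negl(n)$, correctness follows.

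The main obstacle is the appeal to anticoncentration: one needs that at depth $d = \log^2 n$ the brickwork ensemble is mixing enough that $\Ex{|\langle 0^n|C|0^n\rangle|^2}$ is close to $2^{-n}$, rather than, say, $\Omega(1)$. This requires a genuine (approximate) unitary $2$-design or second-moment bound for the specific 1D brickwork architecture at polylogarithmic depth; depth $\log^2 n$ is chosen precisely to be comfortably above the $O(\log n)$ threshold where such mixing kicks in, so the required bound is available in the literature, but it is the one non-trivial ingredient. Everything else — orthogonality of the $\ket{\hat C}$ basis states, normalization of $\ket{C}$, and Cauchy–Schwarz — is routine. (One could also avoid anticoncentration by noting that a weaker bound, e.g. $\Ex{|\langle 0^n | C|0^n\rangle|^2} \le 2^{-\Omega(n)}$, already suffices, and this holds as soon as the ensemble scrambles the all-zeroes state even modestly, which further relaxes the depth requirement.)
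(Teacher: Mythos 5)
Your reduction is exactly the paper's: orthogonality of the $\ket{\hat C}$ basis collapses the inner product to $\frac{1}{|\mathscr{C}_n|}\sum_C \langle 0^n | C\rangle$, and Cauchy--Schwarz reduces the problem to bounding $\E_{C}\,|\langle 0^n | C |0^n\rangle|^2$. Where you diverge is the last step, and there you substantially overestimate the difficulty. The quantity $\E_{C}|\langle 0^n|C|0^n\rangle|^2 = \Tr\bigl(\E_C\bigl[\ketbra{C}{C}\bigr]\cdot\ketbra{0^n}{0^n}\bigr)$ is the \emph{first} moment of the output probability at a fixed string, not a collision/second-moment quantity, so it requires only that the ensemble be a (one-point) $1$-design, not an approximate $2$-design or any anticoncentration result. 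The paper observes that the ensemble is an \emph{exact} $1$-design for free: since the gate set contains the Cliffords (hence all Paulis), the uniform distribution over $\mathscr{C}_n$ is invariant under appending a layer of uniformly random single-qubit Paulis, so $\E_C \ketbra{C}{C} = \Id/2^n$ and the bound is exactly $2^{-n}$ --- at any depth, with no appeal to mixing at $d=\log^2 n$. Your proof is not wrong (the $2$-design results you would cite do imply the weaker first-moment statement), but the ``one non-trivial ingredient'' you flag as the main obstacle is in fact a one-line exact computation; invoking anticoncentration here imports a genuinely hard theorem to prove an easy fact, and mislabels what kind of moment is being controlled.
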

\begin{proof}
    We evaluate the overlap between $\ket{\psi_0},\ket{\psi_1}$:
    \begin{align*}
        |\langle \psi_0 | \psi_1 \rangle|^2 &= \Big | \frac{1}{|\mathscr{C}_n|} \sum_{C \in \mathscr{C}_n} \langle C | 0^n \rangle \Big|^2 \\
        &\leq \Big( \frac{1}{|\mathscr{C}_n|} \sum_{C \in \mathscr{C}_n} |\langle C | 0\rangle| \Big)^2 \\
        &\leq \frac{1}{|\mathscr{C}_n|} \sum_{C \in \mathscr{C}_n} |\langle C | 0\rangle|^2 \\
        &= \Tr \Big( \frac{1}{|\mathscr{C}_n|} \sum_{C \in \mathscr{C}_n} \ketbra{C}{C} \cdot \ketbra{0^n}{0^n} \Big) \\
        &= \frac{1}{2^n}
    \end{align*}
    where we used the property that the uniform distribution over $\mathscr{C}_n$ forms a $1$-design; here we use the fact that the gate set includes all Cliffords and therefore all Pauli operators. The uniform distribution over $\mathscr{C}_n$ is then invariant under applying a layer of random Pauli operators on each qubit at the end of each circuit. 
\end{proof}

We now analyze the statistical hiding property of the commitment scheme.

\begin{claim}
\label{clm:hiding}
    The commitment scheme of \Cref{prot:commitment} satisfies $4\eps$-statistical hiding.
\end{claim}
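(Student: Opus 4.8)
The plan is to compute the two reduced states on register $\reg{B}$ explicitly and then reduce the hiding bound, via Uhlmann's theorem, to the learnability of the circuit ensemble. Tracing out $\reg{A}$ and using $\langle \hat{C} | \hat{C'}\rangle = \delta_{C,C'}$, one gets
\[
  \rho_b \;=\; \Tr_{\reg{A}} \ketbra{\psi_b}{\psi_b} \;=\; \frac{1}{|\mathscr{C}_n|}\sum_{C,C' \in \mathscr{C}_n}\bigl(\langle C'|C\rangle\bigr)^{k+b}\,\ketbra{\hat{C}}{\hat{C'}},
\]
so $\rho_0$ and $\rho_1$ differ only in that each off-diagonal entry of $\rho_1$ carries one extra factor of $\langle C'|C\rangle$. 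The states $\ket{\psi_0},\ket{\psi_1}$ are purifications of $\rho_0,\rho_1$ with purifying register $\reg{A}$ (of equal dimension, and which we may freely pad with ancillas), so Uhlmann's theorem gives $\mathrm{F}(\rho_0,\rho_1) = \max_{W} \bigl|\langle \psi_0 | (\Id_{\reg{B}} \otimes W) | \psi_1 \rangle\bigr|^2$ over isometries $W$ on $\reg{A}$. It therefore suffices to exhibit one $W$ achieving overlap at least $1 - 4\eps$; concretely, $W$ should send the $\reg{A}$-content of $\ket{\psi_1}$, namely $\ket{C}^{\otimes(k+1)}$, close to $\ket{C}^{\otimes k}\otimes\ket{0^n}$, uniformly over $C \in \mathscr{C}_n$.

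The unitary $W$ I would use is a coherent version of the classical-shadows learner. Run the algorithm of \Cref{corr:classical_shadows} on the first $k$ copies of $\ket{C}$ held in $\reg{A}$, coherently (replacing each measurement by a CNOT onto a fresh ancilla and each classical post-processing step by a reversible computation), so as to write into an ancilla the description $\hat{D}$ of a circuit $D \in \mathscr{C}_n$ with $|\langle C | D\rangle|^2 \geq 1 - \eps$; then, controlled on $\hat{D}$, apply $D^\dagger$ to the $(k+1)$-st copy of $\ket{C}$, driving it to overlap at least $\sqrt{1-\eps}$ with $\ket{0^n}$; then run the coherent learner backwards to restore the first $k$ copies and all ancillas. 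The parameter $k = \mathcal{O}\bigl(\eps^{-2}\log(|\mathscr{C}_n|/\eps)\bigr)$ is precisely the sample complexity of \Cref{lem:classical_shadows} for the $|\mathscr{C}_n|$ observables $\{\ketbra{C'}{C'}\}_{C'\in\mathscr{C}_n}$ with accuracy $\eps$ and failure probability $\eps$, so that the learned $\hat{D}$ is good on all but an $\eps$-fraction of the amplitude. A triangle-inequality accounting over the uniform superposition over $C$ then bounds $\bigl\| (\Id_{\reg{B}} \otimes W)\ket{\psi_1} - \ket{\psi_0} \bigr\|$ by $\mathcal{O}(\sqrt{\eps})$, hence the Uhlmann overlap from below by $1 - \mathcal{O}(\eps)$; adjusting the hidden constants gives $\mathrm{F}(\rho_0,\rho_1) \geq 1 - 4\eps$.

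The main work — and what I expect to be the main obstacle — is the error analysis: packaging the shadow protocol as a genuine reversible unitary on $\reg{A}$ plus ancillas, verifying that its backward pass cleanly disentangles and restores the first $k$ copies and the ancillas (it touches only the $(k+1)$-st copy in between), and, most delicately, controlling the phase with which $D^\dagger\ket{C}$ approximates $\ket{0^n}$ so that the contributions of the different $C$'s in the sum add coherently rather than cancel; one expects this to be where mild structural properties of $\mathscr{C}_n$ (its being a $1$-design and its fixed architecture) are used. An alternative packaging that avoids writing $W$ down is to note that $\rho_1 = \mathcal{E}(\rho_0)$, where $\mathcal{E}$ prepares one fresh copy of $\ket{C}$ from the classical description in $\reg{B}$ and then discards it, and to bound $\|\rho_0 - \rho_1\|_1$ directly by showing $\mathcal{E}$ barely disturbs $\rho_0$; but this reduces to the same learnability fact, so the coherent-shadows route seems cleanest.
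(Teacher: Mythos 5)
Your overall strategy --- lower-bounding $\mathrm{F}(\rho_0,\rho_1)$ via Uhlmann's theorem by exhibiting an explicit local unitary on $\reg{A}$ plus ancillas that coherently runs the classical-shadows learner of \Cref{corr:classical_shadows}, uses the learned description $\hat D$ to convert between $\ket{C}^{\otimes k}\otimes\ket{0^n}$ and $\ket{C}^{\otimes(k+1)}$, and then uncomputes --- is exactly the paper's approach (the paper maps $\ket{\psi_0}\to\ket{\psi_1}$ rather than the reverse, which is immaterial). Your computation of the reduced states and your identification of $k=\cO(\eps^{-2}\log(|\mathscr{C}_n|/\eps))$ as the shadows sample complexity also match.

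However, the issue you flag as "the most delicate" part --- the phase of $\bra{0^n}D^\dagger\ket{C}=\overline{\langle C|D\rangle}$ --- is a genuine gap in the unitary $W$ as you describe it, and your proposed resolution (appealing to "mild structural properties of $\mathscr{C}_n$") is not how it gets fixed. With your $W$, each term in the Uhlmann overlap contributes a single factor $\overline{\langle C|D\rangle}$, whose magnitude is $\geq\sqrt{1-\eps}$ but whose phase is uncontrolled: a learner may output a circuit $D$ preparing $e^{i\theta}\ket{C}$, and nothing in the shadows guarantee pins down $\theta$, so the sum over $C$ (and over the learner's randomness $D$) can cancel and the overlap can be far from $1-\cO(\eps)$. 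The paper avoids this with a purely algebraic trick: its unitary, acting on $\ket{C}^{\otimes k}\otimes\ket{0^n}$, \emph{both} applies $D^\dagger$ to a reserved copy of $\ket{C}$ \emph{and} synthesizes a fresh $\ket{D}=D\ket{0^n}$ into the spare $\ket{0^n}$ register (then swaps), so that each term in the overlap is the product $\langle C|D\rangle\cdot\overline{\langle C|D\rangle}=|\langle C|D\rangle|^2$, manifestly real and nonnegative; the total overlap is then $\frac{1}{|\mathscr{C}_n|}\sum_{C,D}p_{C,D}|\langle C|D\rangle|^2\geq(1-\eps)^2$, giving fidelity $(1-\eps)^4\geq 1-4\eps$ with no asymptotic slack to "adjust constants" in. To repair your write-up you should either adopt this conjugate-pairing structure (which requires the spare $\ket{0^n}$ register of $\ket{\psi_0}$ to be the partner register, i.e., working in the paper's direction or taking the adjoint of the paper's unitary) or otherwise prove a phase guarantee on the learner's output, which \Cref{corr:classical_shadows} does not supply.
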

\begin{proof}
    Let $\rho_0,\rho_1$ denote the reduced density matrices of $\ket{\psi_0},\ket{\psi_1}$ on register $\reg{B}$. To bound the fidelity between $\rho_0,\rho_1$, we use Uhlmann's theorem~\cite{Uhlmann1976}, which implies that for all unitary operators $U$ acting on register $\reg{A}$ and an ancilla register $\reg{E}$ (consisting of $r$ qubits), we have
    \[
        \mathrm{F}(\rho_0,\rho_1) \geq | \bra{0^r,\psi_0}_{\reg{EAB}} \, (U_{\reg{EA}} \otimes \Id_{\reg{B}}) \, \ket{0^r,\psi_1}_{\reg{EAB}} |^2~.
    \]
    Here, $\ket{0^r,\psi_b\rangle}$ is simply shorthand for pre-pending $r$ ancilla qubits to the state $\ket{\psi_b}$.

To give a lower bound it suffices to describe \emph{some} local unitary operator $U$ that, with some ancillas, maps $\ket{\psi_0}$ to have large overlap with $\ket{\psi_1}$. At a high level, the unitary $U$ will coherently run the classical shadows protocol of Huang, Kueng, and Preskill~\cite{Huang_2020} as described in \Cref{subsection: classical shadows} on the $k$ copies of $\ket{C}$ in order to obtain a classical description of a circuit $D$ such that $\ket{D} \approx \ket{C}$. Controlled on this description, a copy of $\ket{D}$ is synthesized. Since $\ket{D}$ is close to $\ket{C}$ with very high probability, the intermediate work of the classical shadows protocol can be uncomputed with high fidelity. 

In more detail, the unitary $U$ behaves as follows. Given input $\ket{C}^{\otimes k}$ (plus some ancillas), the circuit-learning algorithm from \Cref{corr:classical_shadows} can be purified to be a unitary operator $V$ that is run coherently to yield
\[
    V (\ket{0^r} \otimes \ket{C}^{\otimes (k-1)}) \otimes \ket{C} \otimes \ket{0^n} = \sum_{D \in \mathscr{C}_n} \sqrt{p_{C,D}} \ket{\vartheta_{C,D}} \otimes \ket{\hat{D}} \otimes \ket{C} \otimes \ket{0^n}
\]
where $p_{C,D}$ is the probability that running the algorithm $V$ on $\ket{C}^{\otimes (k-1)}$ outputs $D$, the state $\ket{\hat{D}}$ is the classical description of $D$, and the state $\ket{\vartheta_{C,D}}$ is the post-measurement state after measuring $\hat{D}$. Controlled on $\ket{\hat{D}}$, the inverse $D^\dagger$ can be applied to the remaining copy of $\ket{C}$, and a copy of $\ket{D}$ can be synthesized to yield
\[
    \sum_{D \in \mathscr{C}_n} \sqrt{p_{C,D}} \ket{\vartheta_{C,D}} \otimes \ket{\hat{D}} \otimes D^\dagger \ket{C} \otimes \ket{D}~.
\]
The last two $n$-qubit registers are swapped to yield:
\[
    \sum_{D \in \mathscr{C}_n} \sqrt{p_{C,D}} \ket{\vartheta_{C,D}} \otimes \ket{\hat{D}} \otimes \ket{D} \otimes D^\dagger \ket{C}~.
\]
Finally, applying the inverse $V^\dagger$ we get
\[
    \sum_{D \in \mathscr{C}_n} \sqrt{p_{C,D}} V^\dagger(\ket{\vartheta_{C,D}} \otimes \ket{\hat{D}}) \otimes \ket{D} \otimes D^\dagger \ket{C}~.
\]
This concludes the description of the unitary $U$. 
We can now compute the inner product between $U \ket{\psi_0,0 \cdots 0}$ and $\ket{\psi_1,0 \cdots 0}$. We rearrange the ancillas for notational convenience:
\begin{align*}
    &\bra{0^r,\psi_1,0^n} \Big( \frac{1}{\sqrt{|\mathscr{C}_n|}} \sum_{C,D \in \mathscr{C}_n}  \sqrt{p_{C,D}} V^\dagger(\ket{\vartheta_{C,D}} \otimes \ket{\hat{D}}) \otimes \ket{D} \otimes D^\dagger \ket{C} \Big) \otimes \ket{\hat{C}} \\
    \qquad &=  \frac{1}{|\mathscr{C}_n|} \sum_{C, D \in \mathscr{C}_n} \sqrt{p_{C,D}} (\bra{0^r} \otimes \bra{C}^{\otimes k}) V^\dagger (\ket{\vartheta_{C,D}} \otimes \ket{\hat{D}}) \, \langle C | D \rangle \, \bra{0^n} D^\dagger \ket{C} \\
    \qquad &= \frac{1}{|\mathscr{C}_n|} \sum_{C, D \in \mathscr{C}_n} p_{C,D}  \, |\langle C | D \rangle|^2~.
\end{align*}
Note that $\sum_{D \in \mathscr{C}_n} p_{C,D} |\langle C | D \rangle|^2$ is the expected overlap between the state of the circuit $D$ output by the classical shadows protocol and the state $\ket{C}$; by \Cref{corr:classical_shadows} this is at least $(1 - \eps)^2$. Thus the fidelity between $\rho_0,\rho_1$ is at least $(1 - \eps)^4 \geq 1 - 4\eps$. This concludes the proof of \Cref{clm:hiding}. 

\end{proof}

Finally, we prove that the commitment scheme is secure under the No-Cloning Assumption.

\begin{claim}
    \label{clm:binding}
    The $\delta$-No-Cloning Assumption (\Cref{conj:no_cloning}) implies that the commitment scheme from \Cref{prot:commitment} satisfies $(2 - \delta)\delta$-computational binding.
\end{claim}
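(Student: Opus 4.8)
The plan is to prove the contrapositive: if the commitment of \Cref{prot:commitment} is \emph{not} $(2-\delta)\delta$-computationally binding, then the $\delta$-No-Cloning Assumption (\Cref{conj:no_cloning}) fails. So suppose there is a QPT adversary $A_n$ on register $\reg{A}$ with $\mathrm{F}\big(\ketbra{\psi_0}{\psi_0},(A_n\otimes\Id_{\reg B})(\ketbra{\psi_1}{\psi_1})\big) > (2-\delta)\delta =: \gamma$ for infinitely many $n$; from it I would build a QPT cloner beating $\delta$-No-Cloning.

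\textbf{Step 1 (reverse the adversary, an Uhlmann-type move).} I would first purify $A_n$ to a QPT unitary $W$ acting on $\reg{A}$ together with a $\poly(n)$-qubit ancilla $\reg{E}$ initialized to $\ket{0}$, and set $\ket{\Psi} := (W_{\reg{AE}}\otimes\Id_{\reg B})(\ket{\psi_1}_{\reg{AB}}\ket{0}_{\reg E})$. Since $\ket{\psi_0}$ is pure, the binding fidelity equals $\|\ket{\eta}\|^2$ where $\ket{\eta}_{\reg E} := (\bra{\psi_0}_{\reg{AB}}\otimes\Id_{\reg E})\ket{\Psi}$, so $\|\eta\|^2 > \gamma$. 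Writing $\ket{\eta'} := \ket{\eta}/\|\eta\|$, define the channel $\mathcal{B}$ on $\reg{A}$ by: append $\reg{E}$ in state $\ket{\eta'}$, apply $W^\dagger$, discard $\reg{E}$. A short computation (in the spirit of the one in \Cref{clm:hiding}) would then give $\mathrm{F}\big((\mathcal{B}\otimes\Id_{\reg B})(\ketbra{\psi_0}{\psi_0}),\ketbra{\psi_1}{\psi_1}\big) \ge \big|\bra{\Psi}\big(\ket{\psi_0}_{\reg{AB}}\ket{\eta'}_{\reg E}\big)\big|^2 = \|\eta\|^2 > \gamma$, i.e.\ $\mathcal{B}$ sends register $\reg{A}$ of $\psi_0$ to something $\gamma$-close to $\psi_1$.

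\textbf{Step 2 (average over circuits, then extract a cloner).} Next I would measure register $\reg{B}$ in the computational basis to turn this into an average statement over circuits. Since register $\reg{B}$ of both $\psi_0$ and $\psi_1$ carries the orthonormal family $\{\ket{\hat C}\}_{C\in\mathscr{C}_n}$, in the expansion of the fidelity over the Kraus operators $\{L_j\}$ of $\mathcal{B}$ all cross terms with $C\ne C'$ vanish, and convexity of $x\mapsto|x|^2$ gives
\[
\gamma < \mathrm{F}\big((\mathcal{B}\otimes\Id_{\reg B})(\ketbra{\psi_0}{\psi_0}),\ketbra{\psi_1}{\psi_1}\big) = \sum_j\Big|\tfrac{1}{|\mathscr{C}_n|}\textstyle\sum_C \bra{C}^{\otimes(k+1)} L_j \ket{C}^{\otimes k}\ket{0^n}\Big|^2 \le \Ex[C\leftarrow\mathscr{C}_n]{\mathrm{F}\big(\mathcal{B}(\rho_C),\ketbra{C}{C}^{\otimes(k+1)}\big)},
\]
where $\rho_C := \ketbra{C}{C}^{\otimes k}\otimes\ketbra{0^n}{0^n}$. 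Hence the cloner ``on input $\ket{C}^{\otimes k}$, append $\ket{0^n}$, run $\mathcal{B}$, output'' would have expected squared fidelity with $\ket{C}^{\otimes(k+1)}$ strictly above $\gamma$ over a uniform $C$. Finally, a one-line Markov estimate --- for $X\in[0,1]$, $\E X > (2-\delta)\delta$ forces $\Pr[X\ge\delta] > \delta$ --- converts this into a success probability exceeding $\delta$ in the sense of \Cref{conj:no_cloning}, a contradiction.

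\textbf{The main obstacle.} The one point that needs care is that $\mathcal{B}$ must itself be QPT, which requires preparing the ``Uhlmann junk'' $\ket{\eta'}$ efficiently. I would do this by preparing $\ket{\psi_1}_{\reg{AB}}\ket{0}_{\reg E}$, applying $W$, and measuring $\reg{AB}$ with the QPT measurement $\{\ketbra{\psi_0}{\psi_0},\Id-\ketbra{\psi_0}{\psi_0}\}$ (legitimate since $\psi_0$, $\psi_1$, and $W$ are all QPT): on the first outcome --- which has probability $\|\eta\|^2 > (2-\delta)\delta$ --- register $\reg{E}$ is left exactly in $\ket{\eta'}$. When $(2-\delta)\delta \ge 1/\poly(n)$ this succeeds with overwhelming probability after $O(1/\gamma)$ repetitions, so $\mathcal{B}$ is genuinely efficient; I expect this to be the intended parameter regime, with the vanishing-$\delta$ case (where post-selection is not efficient) handled instead by first amplifying the commitment in the style of the hardness-amplification results used for OWSGs.
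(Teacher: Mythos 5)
Your proof is correct and follows essentially the same route as the paper's: purify the binding adversary (the Uhlmann move yielding the junk state, which the paper calls $\ket{\vartheta}$), expand the fidelity over the orthonormal register $\reg{B}$ carrying $\ket{\hat C}$ so that cross terms vanish, apply Jensen/convexity to pass to an average over $C$, and finish with the Markov estimate converting expected fidelity above $(2-\delta)\delta$ into success probability above $\delta$ at threshold $\delta$. The one place you go beyond the paper is your ``main obstacle'' paragraph: the paper simply asserts the existence of the unitary $V$ and the state $\ket{\vartheta}$ and declares a contradiction with No-Cloning, without addressing whether the cloner can efficiently prepare $\ket{\vartheta}$ as its input ancilla; your post-selection procedure (prepare $\ket{\psi_1}\ket{0}$, apply $W$, project onto $\ket{\psi_0}$) is the right fix and correctly identifies that it is only efficient when $(2-\delta)\delta \geq 1/\poly(n)$, which is indeed the regime the paper uses downstream.
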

\begin{proof}
    Suppose there was a QPT adversary $A$ that for infinitely many $n \in \N$,
    \[
        \mathrm{F}( \ketbra{\psi_0}{\psi_0}, (A_n \otimes \Id)(\ketbra{\psi_1}{\psi_1}) ) > 2\delta(n)~.
    \]
    Purifying the algorithm $A$ to include ancillas, there exists a unitary $V$ and a pure state $\ket{\vartheta}$ such that
    \[
        |\bra{\vartheta,\psi_0} (V \otimes \Id) \ket{0^r,\psi_1} |^2 \geq (2 - \delta(n))\delta(n)~.
    \]
    This however means that
    \[
        \frac{1}{|\mathscr{C}_n|} \sum_{C \in \mathscr{C}_n} \Big | (\bra{\vartheta} \otimes \bra{C}^{\otimes (k+1)}) V (\ket{0^r} \otimes \ket{C}^{\otimes k} \otimes \ket{0^n}) \Big |^2 > (2 - \delta(n)) \delta(n)
    \]
    by Jensen's inequality. This implies that the algorithm $A$ satisfies, for infinitely many $n$,
\[
    \Pr \left [ |\langle C |^{\otimes (k+1)} |\phi \rangle|^2 \geq \delta(n) : \begin{array}{c} C \leftarrow \mathscr{C}_n \\ \ket{\phi} \leftarrow A(\ket{C}^{\otimes k}) \end{array}  \right] > \delta(n)
\]    
which contradicts the No-Cloning Assumption. 
\end{proof}

\paragraph{A converse?} Does the computational binding security of $\mathrm{Com}$ \emph{imply} the No-Cloning Assumption? Intuitively, an efficient algorithm to clone outputs of random circuits (i.e., break the No-Cloning Assumption) should be able to break the binding property of the commitment scheme $\mathrm{Com}$. However, this requires that the cloning algorithm can be run \emph{coherently}, without ancillary junk states that are entangled with the the underlying circuit $C$ -- this is because breaking binding requires coherently mapping $\ket{C}^{\otimes k}$ to as close to $\ket{C}^{\otimes (k+1)}$ as possible. This is not \emph{a priori} guaranteed by the fact that the No-Cloning Assumption was broken. 

\vspace{8pt}

We summarize the conclusions of Claims~\ref{clm:commitment-correctness},~\ref{clm:hiding}, and~\ref{clm:binding} below. 

\begin{lemma}
\label{lem:commitments}
The commitment scheme of \Cref{prot:commitment} satisfies correctness, $4\eps$-statistical hiding, and (assuming $\delta$-No-Cloning) $(2 -\delta)\delta$-computational binding.
\end{lemma}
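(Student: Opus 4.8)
The plan is to assemble the lemma directly from the three properties that have already been established above: Claim~\ref{clm:commitment-correctness} gives correctness, Claim~\ref{clm:hiding} gives $4\eps$-statistical hiding, and Claim~\ref{clm:binding} gives $(2-\delta)\delta$-computational binding under the $\delta$-No-Cloning Assumption. So the proof amounts to checking that these three conclusions are exactly the three clauses in the statement of the lemma; no new argument is needed, and the commitment scheme is already known to be a QPT algorithm since each $U_{n,b}$ is efficiently computable.

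For completeness one would recall the mechanism behind each piece. Correctness comes from the $1$-design property of the uniform distribution over $\mathscr{C}_n$ (enabled by having all Cliffords, hence all Paulis, in the gate set), which forces $|\langle\psi_0|\psi_1\rangle|^2 = 2^{-n} = \negl(n)$. Statistical hiding is the technical core: by Uhlmann's theorem it suffices to exhibit a single local unitary $U$ on register $\reg{A}$ (with ancillas) mapping $\ket{\psi_0}$ close to $\ket{\psi_1}$, and this $U$ is built by running the classical-shadows circuit-learner of Corollary~\ref{corr:classical_shadows} coherently on the $k$ copies of $\ket{C}$, synthesizing an extra copy of the learned $\ket{D}$, and uncomputing the scratch registers; the resulting overlap is $(1-\eps)^4 \ge 1 - 4\eps$. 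Binding is a contrapositive argument: purifying a binding adversary to a unitary $V$ and applying Jensen's inequality shows that fidelity exceeding $(2-\delta)\delta$ would let $V$ map $\ket{C}^{\otimes k}$ to a state of squared overlap $\ge \delta$ with $\ket{C}^{\otimes(k+1)}$ for a $>\delta$ fraction of circuits $C \leftarrow \mathscr{C}_n$, contradicting Conjecture~\ref{conj:no_cloning}.

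There is no genuine obstacle remaining at the level of this lemma — it is a bookkeeping consolidation. If one insists on locating where the real difficulty sits, it is in Claim~\ref{clm:hiding}: arranging the coherent classical-shadows unitary so that the intermediate work registers uncompute with fidelity close to $1$, which is precisely what dictated the parameter choice $k = \mathcal{O}(\eps^{-2}\log|\mathscr{C}_n|) = \mathcal{O}(n\log^2 n)$ and the appeal to Corollary~\ref{corr:classical_shadows}.
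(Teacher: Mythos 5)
Your proposal is correct and matches the paper exactly: the paper offers no separate proof of \Cref{lem:commitments}, presenting it only as a summary of Claims~\ref{clm:commitment-correctness}, \ref{clm:hiding}, and~\ref{clm:binding}, and your recap of the mechanisms behind each claim (the $1$-design argument, the Uhlmann/coherent-classical-shadows construction, and the purification-plus-Jensen contrapositive) faithfully reflects the proofs given there.
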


One could ask whether the security guarantees of this commitment scheme could be improved. Just like how we were able to amplify a weakly-secure OWSG to a cryptographically secure OWSG via parallel repetition in \Cref{sec:OWSG}, we would like to amplify the commitment scheme of \Cref{prot:commitment} so that it has negligible error for both the hiding and binding properties. 

Amplification of bit commitments is more subtle than with OWSGs, however, because there are two different security properties to handle. In general, trying to amplify one security property comes at the cost of degrading the other security property. For example, it was only recently shown by Bostanci, Qian, Spooner, and Yuen~\cite{Bostanci_2024} that the $t$-fold parallel repetition of a $\delta$-computational binding quantum commitment $\mathrm{Com}$ yields a new commitment scheme $\mathrm{Com}^t$ with roughly $\delta^t$-computational binding. However, if the original commitment $\mathrm{Com}$ satisfied $\eps$-statistical hiding, then $\mathrm{Com}^t$ satisfies $t\eps$ hiding -- it is now \emph{easier} for the receiver to distinguish between commitments to $0$ and $1$ (because now it has $t$ chances to do so). This would be fine if $\eps$ were a negligible quantity to begin with (i.e., the base commitment scheme had negligible hiding error), but in our case $\eps$ is at best an inverse polynomial quantity (this is due to the sample complexity of the classical shadows protocol). 



To perform the amplification, we take advantage of the ``flavor switching'' transformation for quantum bit commitments, which allows us to take a quantum commitment of one flavor (e.g., statistical hiding, computational binding), and generically obtain a quantum commitment of the other flavor (e.g., computational hiding, statistical binding) with only a small loss in parameters. 

\begin{lemma}[Flavor switching]
    \label{lem:flavor-switching} If $\mathrm{Com}$ is a $\eps$-statistical (resp. computational) hiding and $\delta$-computational (resp. statistical) binding quantum commitment, then there exists a commitment $\mathrm{Com}'$ that satisfies $\sqrt{\delta}$-computational (resp. statistical) hiding and $\eps$-statistical (resp. computational) binding.
\end{lemma}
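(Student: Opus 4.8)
The plan is to use the standard phase-kickback flavor conversion for canonical quantum commitments, combined with Uhlmann's theorem (which already drives the proof of \Cref{clm:hiding}).

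First put $\mathrm{Com}$ in canonical form: let $\ket{\psi_b}_{\reg{AB}}$ be the output of $\mathrm{Com}(1^n,b)$, with $\reg{B}$ the commitment register (sent in the commit phase) and $\reg{A}$ the opening register. Define $\mathrm{Com}'$ to prepare, on a fresh qubit $\reg{C}$ together with $\reg{AB}$,
\[
\ket{\phi_b}_{\reg{CAB}} := \tfrac{1}{\sqrt{2}}\bigl( \ket{0}_{\reg C}\ket{\psi_0}_{\reg{AB}} + (-1)^b \ket{1}_{\reg C}\ket{\psi_1}_{\reg{AB}}\bigr),
\]
and declare the commitment register of $\mathrm{Com}'$ to be $\reg{CA}$ and its opening register to be $\reg{B}$ --- so the old opening register is absorbed into the new commitment register, and the old commitment register becomes the new opening register. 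This is QPT (apply $\mathrm{H}$ to $\reg{C}$, then controlled-$Q_0$ and controlled-$Q_1$ where $Q_b$ is the unitary of $\mathrm{Com}(\cdot,b)$, then $Z_{\reg C}^{b}$), and correctness is immediate since $\langle \phi_0 | \phi_1 \rangle = \tfrac12(\langle\psi_0|\psi_0\rangle - \langle\psi_1|\psi_1\rangle) = 0$.

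Everything rests on one identity. Let $N := \Tr_{\reg B}(\ketbra{\psi_0}{\psi_1})$, an operator on $\reg{A}$. Uhlmann's theorem gives $\|N\|_1 = \max_U |\bra{\psi_0}(U_{\reg A}\otimes \Id_{\reg B})\ket{\psi_1}| = \sqrt{\mathrm{F}(\rho_0^{\reg B}, \rho_1^{\reg B})}$, where $\rho_b^{\reg B} = \Tr_{\reg A}\ketbra{\psi_b}{\psi_b}$. Tracing out $\reg{B}$ from $\ketbra{\phi_b}{\phi_b}$ gives $\tau_0^{\reg{CA}} - \tau_1^{\reg{CA}} = \ketbra{0}{1}_{\reg C}\otimes N + \ketbra{1}{0}_{\reg C}\otimes N^\dagger$, a block-antidiagonal operator of trace norm $2\|N\|_1$; hence the receiver of $\mathrm{Com}'$ sees states at trace distance exactly $\sqrt{\mathrm{F}(\rho_0^{\reg B},\rho_1^{\reg B})}$. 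On the opening side, for any channel on $\reg{B}$ (purified to a unitary $V$) one checks $\bra{\phi_0}(\Id_{\reg{CA}}\otimes V)\ket{\phi_1} = \tfrac12\Tr\bigl(V(\rho_0^{\reg B}-\rho_1^{\reg B})\bigr)$, so by Uhlmann the best equivocation fidelity of $\mathrm{Com}'$ equals $\mathrm{F}(\tau_0^{\reg{CA}}, \tau_1^{\reg{CA}}) = \tfrac14 \| \rho_0^{\reg B}-\rho_1^{\reg B}\|_1^2 = \mathrm{TD}(\rho_0^{\reg B},\rho_1^{\reg B})^2$. Feeding the hypotheses into these two formulas (and using Fuchs--van de Graaf to pass between the fidelity- and trace-distance-based definitions) yields three of the four required implications: $\delta$-statistical binding of $\mathrm{Com}$ $\Rightarrow$ $\sqrt{\delta}$-statistical hiding of $\mathrm{Com}'$; $\eps$-statistical hiding of $\mathrm{Com}$ $\Rightarrow$ $\eps$-statistical binding of $\mathrm{Com}'$; and $\eps$-computational hiding of $\mathrm{Com}$ $\Rightarrow$ $\eps$-computational binding of $\mathrm{Com}'$ --- the last because $\Tr(V(\rho_0^{\reg B}-\rho_1^{\reg B}))$ for an efficient unitary $V$ is efficiently estimable (Hadamard test), so an efficient equivocator of $\mathrm{Com}'$ with non-negligible fidelity produces an efficient distinguisher of $\rho_0^{\reg B}$ versus $\rho_1^{\reg B}$.

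The remaining implication --- $\delta$-computational binding of $\mathrm{Com}$ $\Rightarrow$ $\sqrt{\delta}$-computational hiding of $\mathrm{Com}'$ --- is the crux and, I expect, the main obstacle, since $\tau_0^{\reg{CA}}$ and $\tau_1^{\reg{CA}}$ are statistically far. Given an efficient distinguisher $\mathcal{D}$ with advantage $\beta$, write its accept operator as $P$ on $\reg{CA}$ (plus ancillas); using $\ket{\phi_1} = (Z_{\reg C}\otimes\Id)\ket{\phi_0}$, rewrite $\beta = 2\,\mathrm{Re}\,\bra{\psi_0}(\widetilde{P}_{01}\otimes\Id_{\reg B})\ket{\psi_1}$, where $\widetilde{P}_{01}$ is a contraction on $\reg{A}$ read off from the off-diagonal $\reg{C}$-block of $P$; the task is then to convert this contraction into an \emph{efficient} channel on $\reg{A}$ equivocating $\mathrm{Com}$ with fidelity $\Omega(\beta^2)$, contradicting $\delta$-computational binding once $\beta > \sqrt{\delta}$ (up to constants). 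Extracting an efficient operation rather than the information-theoretic Uhlmann unitary is the delicate point: I would either invoke the flavor-conversion theorem of Yan~\cite{yan2022general} or Hhan--Morimae--Yamakawa~\cite{hhan2023hardness} as a black box, or reprove it by running $\mathcal{D}$ coherently and uncomputing it. The quadratic loss $\delta \mapsto \sqrt{\delta}$ is intrinsic: binding is a fidelity while hiding is a distinguishing advantage, and the identity $\mathrm{TD}(\tau_0^{\reg{CA}},\tau_1^{\reg{CA}}) = \sqrt{\mathrm{F}(\rho_0^{\reg B},\rho_1^{\reg B})}$ turns one into the square root of the other. The ``resp.''\ version of the statement follows from the same construction and the same four identities read in the opposite order.
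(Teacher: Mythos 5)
The paper's own ``proof'' of this lemma is a one-line citation to~\cite[Theorem 7]{hhan2023hardness}, so you have done strictly more work than the paper does. Your construction is the standard canonical-form flavor switch from that literature, and the information-theoretic bookkeeping is right: the identity $\tau_0^{\reg{CA}} - \tau_1^{\reg{CA}} = \ketbra{0}{1}_{\reg C}\otimes N + \ketbra{1}{0}_{\reg C}\otimes N^\dagger$ with $\|N\|_1 = \sqrt{\mathrm{F}(\rho_0^{\reg B},\rho_1^{\reg B})}$ correctly explains where the square-root loss comes from, and the three directions you complete (statistical binding $\Rightarrow$ statistical hiding, statistical/computational hiding $\Rightarrow$ statistical/computational binding) are sound up to small constant factors from Fuchs--van de Graaf, which is consistent with how loosely the lemma is stated.

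The one caveat is that the step you flag as the crux --- extracting an \emph{efficient} equivocator for $\mathrm{Com}$ from an efficient distinguisher of $\tau_0^{\reg{CA}}$ versus $\tau_1^{\reg{CA}}$ --- is exactly the content of the theorem the paper cites, and you do not actually carry it out; you offer to cite it as a black box or to sketch a coherent-run-and-uncompute argument. For the record, the reduction does go through along the lines you indicate: writing the distinguishing advantage as $2\,\mathrm{Re}\,\Tr(P_{10}N)$ for the off-diagonal block $P_{10}$ of the accept operator, one observes that $P_{10}$ is a Kraus operator of an efficiently implementable channel on $\reg A$ (obtained by running the distinguisher coherently with the control qubit in superposition), and a channel containing $P_{10}$ as a Kraus operator already achieves equivocation fidelity at least $|\bra{\psi_0}(P_{10}\otimes\Id)\ket{\psi_1}|^2 \geq (\beta/2)^2$, contradicting $\delta$-computational binding once $\beta \gtrsim \sqrt{\delta}$. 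So your outline is correct and completable, but as written the hardest direction ultimately rests on the same external citation the paper uses. Since the paper makes no attempt to reprove the theorem either, this is a reasonable place to stop; just be aware that if you were asked for a self-contained proof, the efficient-Uhlmann extraction is the part that still needs to be written out.
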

\begin{proof}
    This is proved in~\cite[Theorem 7]{hhan2023hardness}. 
\end{proof}

With this in hand we obtain the following amplification result:
\begin{lemma}
    Assuming $\delta$-No-Cloning for some $\delta(n) \leq 1 - 1/p(n)$ for some polynomial $p$, there exists a cryptographically-secure quantum commitment scheme satisfying correctness, $\negl(n)$-statistical hiding and $\negl(n)$-computational binding. 
\end{lemma}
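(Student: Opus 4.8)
The plan is to bootstrap from the base commitment of \Cref{prot:commitment} — which by \Cref{lem:commitments} has correctness, $4\eps$-statistical hiding, and (assuming $\delta$-No-Cloning) $(2-\delta)\delta$-computational binding — by interleaving \emph{parallel repetition} (to amplify binding) with the \emph{flavor-switching} transformation of \Cref{lem:flavor-switching}. A single round of parallel repetition does not suffice: it drives the computational binding error down exponentially but degrades the statistical hiding error by a factor of $t$, and since the base hiding error is only inverse-polynomially small (this is forced by the classical-shadows sample complexity in \Cref{clm:hiding}), repetition alone can never push it below $1/\poly(n)$. The key observation is that flavor switching sends a negligible error to its square root, which is still negligible, while a polynomial factor applied to a negligible quantity also stays negligible. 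So by sandwiching a flavor switch between two parallel-repetition steps, each repetition step can be arranged to amplify a property that is currently only inverse-polynomial while degrading only a property that is already negligible.

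Concretely, I would proceed as follows. \emph{(0) Base scheme.} Since $\delta(n)\le 1-1/p(n)$, we get $(2-\delta)\delta = 1-(1-\delta)^2 \le 1-1/p(n)^2$, so \Cref{lem:commitments} gives $\mathrm{Com}_0$ with correctness, $4\eps$-statistical hiding, and $(1-1/p(n)^2)$-computational binding; note $\mathrm{Com}_0$ is QPT whenever $\eps\ge 1/\poly(n)$, as it uses $k=\cO(\eps^{-2}\log|\mathscr{C}_n|)$ copies. \emph{(1) Amplify computational binding.} Set $t_1 := n\cdot p(n)^2 = \poly(n)$ and let $\mathrm{Com}_1 := \mathrm{Com}_0^{t_1}$; by the parallel-repetition theorem for computational binding~\cite{Bostanci_2024} (see also~\cite{morimae2022one}), its binding error is at most $(1-1/p(n)^2)^{t_1}+\negl(n)\le e^{-n}+\negl(n)=\negl(n)$, while its statistical hiding error is at most $4t_1\eps$; now choose $\eps := 1/(4t_1 n)$, an inverse polynomial, so that $\mathrm{Com}_1$ has $(1/n)$-statistical hiding and still uses $k=\poly(n)$ copies (correctness is preserved since $|\langle\psi_0|\psi_1\rangle|^2\le 2^{-n}$ tensorizes). \emph{(2) Flavor switch.} Apply \Cref{lem:flavor-switching} to $\mathrm{Com}_1$ to get $\mathrm{Com}_2$ with $\sqrt{\negl(n)}=\negl(n)$-computational hiding and $(1/n)$-statistical binding. \emph{(3) Amplify statistical binding.} Let $t_2:=\log n$ and $\mathrm{Com}_3 := \mathrm{Com}_2^{t_2}$; parallel repetition now amplifies the statistical binding error to $(1/n)^{\log n}=2^{-\log^2 n}=\negl(n)$ while degrading the computational hiding error only to $t_2\cdot\negl(n)=\negl(n)$. \emph{(4) Final flavor switch.} Apply \Cref{lem:flavor-switching} once more to obtain $\mathrm{Com}_4$ with $\sqrt{\negl(n)}=\negl(n)$-statistical hiding and $\negl(n)$-computational binding, still satisfying correctness; this is the desired scheme.

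The main obstacle is precisely reconciling the two opposing effects of parallel repetition on the two security properties when the base hiding error is only inverse-polynomial — which is what dictates the particular order (amplify, flavor-switch, amplify, flavor-switch) rather than a single black-box invocation. A secondary point that needs care is Step~(3): the cited amplification theorem of~\cite{Bostanci_2024} is phrased for computational binding, so I would either invoke its statistical analogue or observe that the statistical case follows by the same argument (indeed more easily, since there is no efficiency bookkeeping to track). Everything else — that $\sqrt{\negl}$ and $\poly\cdot\negl$ are negligible, that flavor switching and parallel repetition map QPT schemes to QPT schemes and preserve correctness — is routine.
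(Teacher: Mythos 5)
Your proposal is correct and follows essentially the same route as the paper: the same four-stage pipeline (parallel repetition to make computational binding negligible, flavor switch, a second parallel repetition to make statistical binding negligible, final flavor switch), the same choice of $\eps$ scaled so that the first repetition leaves $1/n$-statistical hiding, and the same caveat about justifying amplification of the statistical binding error separately (the paper does this by viewing the binding game as a two-message protocol and invoking Kitaev's parallel repetition, and handles the hiding degradation via a hybrid argument). The only cosmetic difference is your use of $\log n$ repetitions in the third stage where the paper uses $n$; both yield negligible error.
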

\begin{proof}
First, we choose $\eps(n) = (2np(n))^{-2}$ where $p$ satisfies $\delta(n) \leq 1 - \frac{1}{p(n)}$. Then assuming $\delta$-No-Cloning, \Cref{lem:commitments} implies that the commitment scheme $\mathrm{Com}$ of \Cref{prot:commitment} (with the $k(n)$ parameter chosen as a function of $\eps(n)$)  has $4\eps$-statistical hiding and $(2 - \delta)\delta$-computational binding. 

Let $\mathrm{Com}'$ denote the $n p(n)^2$-fold parallel repetition of the commitment $\mathrm{Com}$, which means that the committer and receiver run $ n p(n)^2$ parallel independent instances of $\mathrm{Com}$. The parallel repetition theorem of Bostanci, Qian, Spooner, and Yuen~\cite{Bostanci_2024} implies that $\mathrm{Com}'$ has $\frac{1}{n}$-statistical hiding and the computational binding security error is at most
\[
 ((2 - \delta(n))\delta(n))^{n p(n)^2} + \negl(n) \leq \Big (1 - \frac{1}{p(n)^2} \Big )^{n p(n)^2} + \negl(n) \leq e^{-\Omega(n)} + \negl(n) = \negl(n)~.
\]
Switching flavors (using \Cref{lem:flavor-switching}), we get a commitment $\mathrm{Com}''$ with $\negl(n)$-computational hiding and $\frac{1}{n}$-statistical binding. We then perform parallel repetition once again, repeating the commitment $\mathrm{Com}''$ for $n$ times in parallel to obtain a commitment $\mathrm{Com}'''$ where the computational hiding security error is at most $n \cdot \negl(n) = \negl(n)$ and the statistical binding security error is at most $n^{-n} = \negl(n)$. The computational hiding bound is argued via a hybrid argument: if there were an efficient algorithm $A$ that could distinguish between $\rho_0^{\otimes n}$ and $\rho_1^{\otimes n}$ with non-negligible advantage $\alpha$ where $\rho_0,\rho_1$ are the reduced density matrices seen by the receiver in $\mathrm{Com}''$, then via the triangle inequality there exists a $j \in [n]$ such that $A$ can distinguish between $\rho_0^{\otimes j} \otimes \rho_0 \otimes \rho_1^{\otimes (n-j-1)}$ and $\rho_0^{\otimes j} \otimes \rho_1 \otimes \rho_1^{\otimes (n-j-1)}$ with advantage at least $\alpha/n$, which contradicts the negligible hiding security error of $\mathrm{Com}''$. 

The statistical binding bound is argued by viewing the binding security game of a commitment scheme as a $2$-message interactive protocol, and using the fact that parallel repetition reduces the soundness error of such protocols at an exponential rate~\cite[Theorem 6]{kitaev2000parallelization}. 

Finally, we can switch flavors once more to obtain the final commitment $\mathrm{Com}''''$, which satisfy $\negl(n)$-statistical hiding and $\negl(n)$-statistical binding.
\end{proof}

\section{NISQ-friendly quantum cryptography}

\subsection{NISQ-friendly one-way state generators}
\label{NISQ_OWSG}
We take a step further and ask whether it is possible to obtain a OWSG that is both cryptographically secure and \emph{NISQ-friendly}. NISQ-friendliness of a quantum cryptographic primitive generally means that the correctness property of the primitive still holds even when the quantum algorithms in the primitive (e.g., the key generation or verification algorithms) suffer from noise. We define this notion for a OWSG.

\begin{definition}[Noise-robust OWSG]
    Let $\cN$ denote a noise model for quantum computers. We say that a OWSG $G =(\mathrm{Gen},\mathrm{Ver})$ is \emph{$\eta(n)$-robust against noise model $\cN$} if for all security parameters $n$, for all keys $k \in \{0,1\}^{r(n)}$,
\[
    \Pr \Big [ \widetilde{\Ver}(1^n,k,\widetilde{\Gen}(1^n,k)) \textrm{ accepts} \Big ] \geq \eta(n)
\]
    where $\widetilde{\mathrm{Gen}}$ and $\widetilde{\mathrm{Ver}}$ denote the quantum channels corresponding to running the algorithms $\mathrm{Gen},\mathrm{Ver}$ on a quantum computer with noise model $\cN$. 
\end{definition}

In realistic devices, noise rapidly degrades the fidelity of the signal. The fidelity between the states produced by the noisy circuit and an ideal circuit could be $1/p(n)$ for some polynomial $p(n)$, which implies $\eta(n)$ is also inverse polynomially small. This gives adversaries a greater leeway to break the security of the construction as generating a state that only has inverse polynomial overlap with the output is enough to pass the verification step. In this work, we show how to amplify the security of the one way state generator even in this case. As discussed in \Cref{sec: intro}, there are experimentally realistic depth regimes and noise rates for which inverse polynomial fidelity is reasonable due to the white noise phenomenon---for instance, see \cite{Arute2019, morvan2023phasetransitionrandomcircuit, dalzell2024random}.

What can we say about the security of the the OWSG in this regime? We have not changed the OWSG construction, so the security guarantee still holds with respect to \emph{any} polynomial-time adversary, including noise-free ones. Suppose we assume $\eps$-No-Learning for $\eps \ll 1/p(n)$. This assumption is consistent with what we know about the complexity of the learning task; as discussed in \Cref{section: intro_learning assumptions}, we do not know of an efficient algorithm that can, given polynomially-many copies of the state $\ket{C}$, to output a circuit description $D$ that has fidelity any better than $2^{-\Omega(n)}$. 

Even though the noisy quantum computer can only verify the outputs of the OWSG with small probability (even when given the key), any efficient adversary -- even a noise-free one -- has a \emph{much smaller} probability of being able to invert the outputs of a OWSG. We now have an exploitable gap: we can amplify a less noise robust OWSG to have high noise robustness, yet preserve security.

Parallel repetition is no longer a good amplification technique: although the amplified OWSG $G^t$ is secure against polynomial time adversaries, it may not be possible for honest adversaries to successfully run the verification procedure of $G^t$ on a noisy quantum computer: if the success probability of a single verification of $G$ is at $\eta$, then the success probability of $t$ parallel verifications is $\eta^t$, an exponentially small quantity. 

We instead take the \emph{threshold repetition} of $G = (\Gen,\Ver)$; this is a OWSG $G^{t,k}$ consisting of $t$ independent copies of $G$, but instead of verifying that all $t$ copies have been inverted, the verification algorithm checks that at least $k$ out of the $t$ have been inverted. We formally define the threshold repetition OWSG $G^{t,k} = (\Gen^t,\Ver^{t,k})$ next. Let $\ket{\psi_k}$ denote the output of $\Gen$ on input $k$ (we omit mention of the security parameter $n$ for convenience). 

\begin{longfbox}
    \begin{protocol} {\bf Threshold one-way state generator $G^{t,k}$} \label{prot:OWSG-threshold} 
    \end{protocol}

    \textbf{$\Gen^t$}: Given input $(k_1,\ldots,k_t) \in (\{0,1\}^{r(n)})^t$, output $\ket{\psi_{k_1}} \otimes \cdots \otimes\ket{\psi_{k_t}}$.

    \vspace{4pt}

    \textbf{$\Ver^{t,k}$}: Given input $(k_1,\ldots,k_t) \in (\{0,1\}^{r(n)})^t$ and a state $\ket{D}$ on $nt$ qubits, for all $i \in [t]$, run the verification procedure $\Ver$ on the $i$'th block of $n$ qubits with input $k_i$. Accept iff at least $k$ of the individual verifications accept. 
\end{longfbox}

We now show that for an appropriate choice of $t,k$, the threshold repetition $G^{t,k}$ has good noise robustness and also good security. This proof relies on two types of Chernoff bounds. One is the standard one (that the sum of independent random variables concentrates around their mean); this is used to obtain the improved noise robustness. The other is a \emph{computational Chernoff bound}, which argues that if an efficient adversary has at most $\gamma$ probability of inverting the output of a OWSG, then an efficient adversary has an exponentially small probability of inverting significantly more than $\gamma$ fraction of $t$ independent instances of the OWSG. In theoretical computer science and cryptography, such a result is also known as a \emph{threshold direct product theorem} (see, e.g., ~\cite{impagliazzo2010constructive}). 

\begin{lemma}
\label{lem:threshold}
    Let $\cN$ denote a noise model for quantum computers such that independent, parallel computations (i.e., they do not share qubits) experience independent noise. Let $G = (\Gen,\Ver)$ be a OWSG that is $\eta(n)$-robust against $\cN$ and has security error $\gamma(n)$ such that $\eta(n) - \gamma(n) \geq 1/p(n)$ for some polynomial $p(n)$. Then for all sufficiently large polynomials $t(n)$, the threshold repetition $G^{t,k}$ for $k(n) = \Big( \eta(n) - \sqrt{\frac{n}{2t(n)}} \Big) t(n)$ is $(1 - \cO(2^{-n}))$-robust against $\cN$ and has negligible security error.
\end{lemma}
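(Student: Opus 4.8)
The statement splits into two essentially independent claims, noise robustness and negligible security error, which I would prove separately. Set $\alpha(n) := \tfrac{1}{2p(n)}$. Since $\eta(n)-\gamma(n) \ge 1/p(n)$, for every sufficiently large polynomial $t(n)$ (concretely $t(n)\ge 2n\,p(n)^2$) we get $\sqrt{n/(2t(n))}\le \alpha(n)$, so the threshold obeys $(\gamma(n)+\alpha(n))\,t(n) \le k(n) = \bigl(\eta(n)-\sqrt{n/(2t(n))}\bigr)t(n) \le \eta(n)\,t(n)$; the left inequality drives security and the right one drives robustness.

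\smallskip
\noindent\emph{Noise robustness.} Because $\cN$ acts independently on the $t$ disjoint sub-computations, running $\widetilde{\Gen^t}$ and then $\widetilde{\Ver^{t,k}}$ on an honest key $(k_1,\dots,k_t)$ makes the indicators ``block $i$ verifies'' mutually independent Bernoulli variables, each with success probability at least $\eta(n)$. Their sum $S$ has $\E[S]\ge \eta t$, and Hoeffding's inequality gives $\Pr[S < k] \le \Pr[\,S < \eta t-\sqrt{nt/2}\,]\le \exp(-2(nt/2)/t) = e^{-n} = \cO(2^{-n})$, so $G^{t,k}$ is $(1-\cO(2^{-n}))$-robust against $\cN$.

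\smallskip
\noindent\emph{Security via a computational Chernoff bound.} This is the heart of the proof: I would prove the threshold direct product statement that any QPT adversary against $G^{t,k}$ wins with negligible probability. Suppose not: there are a QPT $A$ and a polynomial $q(n)$ such that, for infinitely many $n$, on uniform keys $(k_i)$ the adversary outputs $(k_1',\dots,k_t')$ with at least $k$ blocks verifying with probability $\ge \mu(n)$ for some non-negligible $\mu$. Conditioned on $(k_i)$ and $A$'s coins, the block verifications are independent Bernoulli trials with parameters $\theta_i := \Pr[\Ver(k_i',\ket{\psi_{k_i}})\text{ accepts}]$; applying Hoeffding to these trials (deviation $\alpha t/2$) shows the threshold is met with negligible probability unless $\tfrac1t\sum_i\theta_i \ge \gamma+\alpha/2$, so the ``good event'' $E := \{\tfrac1t\sum_i\theta_i \ge \gamma+\alpha/2\}$ has probability $\ge \mu/2$. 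I then build a single-instance adversary $B$ against $G$, exploiting that the OWSG puzzle is (approximately) publicly verifiable given copies of the state: on input $q'(n)=\poly(n)$ copies of $\ket{\psi_{k^\star}}$ for uniform $k^\star$, $B$ runs $R=\poly(n)$ rounds with fresh randomness; in each round it samples $j\leftarrow[t]$ and $k_i\leftarrow\{0,1\}^{r(n)}$ for $i\ne j$, sets $k_j:=k^\star$, uses $\Gen$ to assemble $q$ copies of $\Gen^t(k_1,\dots,k_t)$ (placing $q$ of the given copies of $\ket{\psi_{k^\star}}$ in slot $j$), runs $A$ to get $(k_1',\dots,k_t')$, estimates $\widehat\theta_{-j}:=\tfrac1{t-1}\sum_{i\ne j}\theta_i$ to additive error $\alpha/8$ by preparing fresh $\ket{\psi_{k_i}}$'s (possible since $B$ chose those keys) and running $\Ver$, and records the candidate $k_j'$ iff $\widehat\theta_{-j}\ge\gamma+\alpha/8$; finally $B$ uses spare given copies of $\ket{\psi_{k^\star}}$ to estimate the quality of each recorded candidate and outputs the best one. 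Since slot $j$ is a uniformly random slot among $t$ i.i.d.\ slots, each round perfectly simulates the $G^{t,k}$ game, so $E$ holds in a round with probability $\ge\mu/2$. On $E$, for \emph{every} $j$ one has $\tfrac1{t-1}\sum_{i\ne j}\theta_i\ge\gamma+\alpha/4$ (dropping one term of a mean $\ge\gamma+\alpha/2$ with each $\theta_i\le1$ shifts it by $O(1/t)$), so the round is always recorded; and since $\E_j[\theta_j]=\tfrac1t\sum_i\theta_i\ge\gamma+\alpha/2$ with $\theta_j\le1$, at least an $\alpha/4$ fraction of slots $j$ have $\theta_j\ge\gamma+\alpha/4$. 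Hence each round records a candidate of quality $\ge\gamma+\alpha/4$ with probability $\ge(\mu/2)(\alpha/4)-\negl(n)$, so over $R=\poly(n)/(\alpha\mu)$ rounds this happens in some round with probability $1-2^{-\poly(n)}$, and $B$'s final selection outputs a key of quality $>\gamma(n)$ (the estimation slack being $o(\alpha)$), contradicting the $\gamma(n)$-security of $G$.

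\smallskip
\noindent\emph{Main obstacle.} The crux is the security direction: the naive ``plant, run $A$ once, output the challenge-slot key'' reduction succeeds only with probability $\approx\mu\gamma<\gamma$ and does not contradict $\gamma$-security. The resampling-and-verification structure above---run $A$ on many independent plants, keep only candidates passing the self-checkable portion, and finally output the candidate that verifies best---is exactly the constructive form of a Chernoff/threshold direct product argument (cf.\ \cite{impagliazzo2010constructive}), and it relies on public verifiability given copies of the state and on the security game allowing any polynomial number of copies. What remains is routine: bookkeeping of the Hoeffding error terms and checking the parameter constraint that $t(n)$ be polynomially larger than $n\,p(n)^2$ so that $k(n)$ is a meaningful threshold.
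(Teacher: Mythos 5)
Your decomposition matches the paper's: the robustness half, via Hoeffding applied to the $t$ independent noisy verifications, is exactly the paper's argument and is correct, and your parameter bookkeeping ($t \geq 2np^2$ so that $k/t - \gamma \geq 1/(2p)$) is fine. The paper then simply invokes a separately stated computational Chernoff bound for OWSGs (proved in its appendix by closely following Impagliazzo--Kabanets), whereas you attempt to prove that bound inline. Your reduction has the right architecture --- plant the challenge at a random slot, filter rounds by verifying the self-sampled coordinates, repeat, select the best candidate --- but the final accounting has a genuine gap.

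The gap is the step ``each round records a good candidate with probability $\geq (\mu/2)(\alpha/4)$, so over $R$ rounds this happens in some round with probability $1-2^{-\poly(n)}$.'' All rounds share the same challenge key $k^\star$, and your per-round bound is only an average over $k^\star$; repetition with fresh side-keys cannot boost success for those $k^\star$ on which the conditional per-round probability is zero. Concretely, if $A$ (say with $E$ always occurring) solves exactly those coordinates whose keys lie in a fixed set of measure $\gamma+\alpha$, then whether the challenge slot is ever solved is determined by $k^\star$ alone, and the probability that some round records a good candidate is $\gamma+\alpha$, not $1-2^{-\poly(n)}$. Without that boost, what you actually obtain is ``with probability $q^\star$ (merely non-negligible), $B$ outputs a key of quality $\geq \gamma+\alpha/8$,'' which bounds $B$'s overall inversion probability only by about $q^\star(\gamma+\alpha/8)$ --- no contradiction with $\gamma$-security unless $q^\star$ is within $O(\alpha/\gamma)$ of $1$. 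The missing idea is the sharper conditional statement at the heart of the threshold direct product theorem: one must show that, \emph{conditioned on the filter passing} (equivalently, on the reduction producing any output at all), the challenge coordinate is solved with probability at least $\gamma+\Omega(\alpha)$, and separately that the probability of producing an output is driven close to $1$ by repetition. This is precisely what the Impagliazzo--Kabanets two-stage ``zero-error'' construction accomplishes via the permutation/symmetrization of coordinates and a delicate conditioning argument, and it is what the paper imports wholesale; your ``output the best recorded candidate'' selection rule does not substitute for it.
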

\begin{proof}
We sometimes omit mention of $n$ for notational clarity, and write $t = t(n), k = k(n), \eta = \eta(n)$, etc.

We upper bound the probability that fewer than $k$ out of $t$ of the verifications fail to accept in $G^{t,k}$. By the assumption on the noise model in the lemma statement, the acceptance of each verification is a Bernoulli random variable $X_i$ with bias at least $\eta$. Therefore by the Chernoff-Hoeffding bound, 
\[
    \Pr \left [ X_1 + \cdots + X_t < k \right] = \Pr \left [ X_1 + \cdots + X_t < \eta t - \sqrt{\frac{nt}{2}} \right] \leq 2\exp \left ( -\Omega(n) \right)~.
\]
This establishes the noise robustness of $G^{t,k}$. We now argue about its security, using the following computational Chernoff bound:

\begin{restatable}[Computational Chernoff bound for OWSGs]{lemma}{chernoff}
\label{lem:chernoff}
Let $\xi(n)$ denote an inverse polynomial, i.e., $\xi(n) = 1/p(n)$ for some polynomial $p(n)$.
    Let $G = (\Gen,\Ver)$ be a OWSG with security error $\gamma(n)$. Then for all sufficiently large polynomials $t(n)$, the threshold OWSG $G^{t,k}$ has negligible security error, where $k(n) = (\gamma(n) + \xi(n)) t(n)$.
\end{restatable}

We prove \Cref{lem:chernoff} in \Cref{sec:chernoff}. 
Let $\xi(n) := \frac{k(n)}{t(n)} - \gamma(n)$. Then
\begin{align*}
    \xi(n) &= \eta(n) - \sqrt{\frac{n}{2t(n)}} - \gamma(n) \\
    &\geq \frac{1}{p(n)} - \sqrt{\frac{n}{2t(n)}}
\end{align*}
by our assumption on the gap $\eta(n) - \gamma(n)$. For sufficiently large polynomials $t(n)$, this is at least $1/p'(n)$ for some other polynomial $p'(n)$. The conditions of \Cref{lem:chernoff} are satisfied, and therefore for sufficiently large polynomial $t(n)$, the threshold repetition $G^{t,k}$ has negligible security error. 

\end{proof}

Combining this with our Computational No-Learning assumption, we obtain the following:

\begin{corollary}[NISQ-friendly random circuit OWSG]
\label{cor:robust-owsg}
Let $\cN$ denote a noise model where for some polynomial $p$, a $n$-qubit, depth-$d$ circuit $C$ can be run on the noisy quantum computer with fidelity at least $1/p(nd)$. Assuming $\negl(n)$-No-Learning, for a sufficiently large polynomial $t(n)$, setting $k(n) = \Big(\frac{1}{p(nd)} - \sqrt{\frac{n}{2t(n)}} \Big) t(n)$, the threshold repetition $G^{t,k}$ of the random circuit OWSG $G$ from \Cref{prot:OWSG} is $(1-\cO(2^{-n}))$-robust against $\cN$ and has negligible security error. 
\end{corollary}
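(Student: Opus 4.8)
The plan is to obtain \Cref{cor:robust-owsg} as an immediate consequence of the security equivalence in \Cref{lem:weak-owsg} together with the threshold amplification result in \Cref{lem:threshold}; essentially the only work beyond quoting these two lemmas is to pin down the noise-robustness parameter $\eta(n)$ of the base random circuit OWSG $G$ of \Cref{prot:OWSG} under the given noise model $\cN$. First I would record the base security error: assuming $\negl(n)$-No-Learning, namely $\eps$-No-Learning for some negligible $\eps(n)$ (\Cref{conj:learning}), \Cref{lem:weak-owsg} gives that $G$ has security error $\gamma(n) = (2-\eps(n))\eps(n) \le 2\eps(n) = \negl(n)$.

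Next I would pin down $\eta(n)$. The honest execution $\widetilde{\Ver}(1^n, C, \widetilde{\Gen}(1^n,C))$ is exactly a noisy implementation of the circuit that applies $C$ to $\ket{0^n}$ and then applies $C^\dagger$, followed by a computational-basis measurement that accepts on outcome $0^n$; run noiselessly this accepts with probability $1$. Since every $C \in \mathscr{C}_n$ has depth $d = \log^2 n$, the concatenated circuit $C^\dagger C$ has depth $2d$, so by the hypothesis on $\cN$ its noisy output state $\sigma$ has fidelity at least $1/p(n\cdot 2d)$ with $\ket{0^n}$; absorbing the factor $2$ into the polynomial, $\Pr\big[\widetilde{\Ver}(\widetilde{\Gen}(C)) \text{ accepts}\big] = \langle 0^n|\sigma|0^n\rangle = \mathrm{F}(\sigma,\ketbra{0^n}{0^n}) \ge \eta(n)$ with $\eta(n) := 1/p(nd)$ after renaming $p$. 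As $d = \log^2 n$, the quantity $nd$ is polynomial in $n$, so $\eta(n)$ is an inverse polynomial in $n$, whence $\eta(n) - \gamma(n) \ge 1/p(nd) - \negl(n) \ge 1/p'(n)$ for some polynomial $p'$ and all sufficiently large $n$ --- so the gap hypothesis of \Cref{lem:threshold} holds.

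Finally I would invoke \Cref{lem:threshold} with this $\eta$ and $\gamma$, using that $\cN$ gives independent noise to the $t$ parallel, non-interacting blocks of $G^{t,k}$ (so each block's verification is an independent Bernoulli trial of bias $\ge \eta$, as that lemma requires). This yields that for every sufficiently large polynomial $t(n)$ and $k(n) = \big(\eta(n) - \sqrt{n/(2t(n))}\big)t(n) = \big(1/p(nd) - \sqrt{n/(2t(n))}\big)t(n)$, the threshold repetition $G^{t,k}$ is $(1-\cO(2^{-n}))$-robust against $\cN$ and has negligible security error, which is precisely the statement of \Cref{cor:robust-owsg}.

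The main obstacle is the bookkeeping in the second step: one has to check that the per-circuit fidelity promise of $\cN$ really applies to the \emph{combined} noisy generation-then-verification process (not merely to $\widetilde{\Gen}$ in isolation), and that the depth blow-up from composing $C$ with $C^\dagger$, together with any constant-factor slack, keeps $\eta(n)$ an inverse polynomial in $n$; once that is settled, everything reduces to a mechanical substitution into \Cref{lem:threshold}.
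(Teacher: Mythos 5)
Your proposal is correct and takes essentially the same route as the paper, which states \Cref{cor:robust-owsg} without a separate proof as an immediate combination of \Cref{lem:weak-owsg} (to get negligible security error $\gamma$ from $\negl(n)$-No-Learning) and \Cref{lem:threshold} (instantiated with $\eta(n) = 1/p(nd)$, which is inverse polynomial in $n$ since $d = \log^2 n$). Your extra bookkeeping --- treating the honest $\widetilde{\Gen}$-then-$\widetilde{\Ver}$ execution as a single noisy depth-$2d$ circuit and noting the independence-of-parallel-noise hypothesis needed by \Cref{lem:threshold} --- only makes explicit what the paper leaves implicit.
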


\subsection{NISQ-friendly quantum digital signatures}
\label{sec:signature}
Although one-way functions are a fundamental primitive in classical cryptography, they are not very useful by themselves: their utility comes from being building blocks within cryptographic protocols such as encryption, or pseudorandomness generation~\cite{goldreich2001foundations}. Similarly, one-way state generators are useful as building blocks within \emph{quantum} cryptographic protocols, such as bit commitments~\cite{khurana2024commitments} or digital signatures~\cite{morimae2022quantum}. Thus, we would like to realize the utility of a NISQ-friendly OWSG by using it to obtain a NISQ-friendly quantum cryptographic protocol that is amenable to a real world implementation.

We illustrate this possibility with a NISQ-friendly \emph{quantum digital signature scheme}. At a high level, a digital signature scheme is a method for a user to generate a \emph{signature} for a message in a way that a third party (using a public key posted by the user beforehand) can verify that the signature belongs to the message (and in particular, the message or the signature have not been changed). While it has been long known that digital signatures are constructible from one-way functions~\cite{lamport1979constructing}, Morimae and Yamakawa~\cite{morimae2022quantum} showed that one-way functions are not necessary, and one can use a ``fully quantum'' primitive instead -- namely, one-way state generators. 



Here we instantiate the Morimae-Yamakawa digital signature construction with the NISQ-friendly random circuit OWSG from \Cref{cor:robust-owsg}. The security of the digital signature scheme follows directly from the security analysis of~\cite{morimae2022quantum} and the Computational No-Learning assumption. We furthermore argue that since the underlying OWSG is noise robust, so is the digital signature scheme, meaning that it can be implemented on noisy quantum computers. 

We first present the formal definition of a signature scheme with quantum public keys:

\begin{definition}
    A \emph{signature scheme with quantum keys} is a tuple of algorithms \\ $(\mathrm{SKGen},\mathrm{PKGen},\mathrm{Sign},\mathrm{Ver})$ satisfying the following:
    \begin{enumerate}
        \item The classical randomized polynomial-time algorithm $\mathrm{SKGen}$ takes as input a security parameter $1^n$, and then outputs a secret key $\mathrm{sk}$, which is a classical string.
        \item The QPT algorithm $\mathrm{PKGen}$ takes as input a secret key string $\mathrm{sk}$, and deterministically outputs a quantum public-key state $\ket{\mathrm{pk}}$. 

        \item The classical randomized polynomial-time algorithm $\mathrm{Sign}$ takes as input a secret key $\mathrm{sk}$ and a message $m$ and outputs a classical signature $\sigma$. 

        \item The QPT algorithm $\mathrm{Ver}$ takes as input a quantum public key $\ket{\mathrm{pk}}$, a message $m$ and a candidate signature $\sigma$ and accepts or rejects. 
    \end{enumerate}    
    We say that such a signature scheme satisfies \emph{correctness} if for all security parameters $n$, for all messages $m$, 
    \[
        \Pr \left [ \mathrm{Ver}(\ket{\mathrm{pk}}, m, \sigma) \text{ accepts} :  \begin{array}{c} \mathrm{sk} \leftarrow \mathrm{SKGen}(1^n) \\ \ket{\mathrm{pk}} \leftarrow \mathrm{PKGen}(\mathrm{sk}) \\ \sigma \leftarrow \mathrm{Sign}(\mathrm{sk}, m) \end{array} \right] = 1 - \negl(n)~.
    \]
\end{definition}

We now define a notion of \emph{one-time security} for a signature scheme with quantum public-keys. Intuitively, the security definition stipulates that a polynomial-time adversary, given copies of the quantum public-key $\ket{\mathrm{pk}}$, a message $m$ (which the adversary can choose) and its corresponding signature $\sigma$, cannot produce a valid message-signature pair $(m',\sigma')$ for some message $m' \neq m$ with non-negligible probability. 

\begin{definition}
    A signature scheme with quantum public keys $(\mathrm{SKGen},\mathrm{PKGen},\mathrm{Sign},\mathrm{Ver})$ satisfies \emph{one-time security} if for all polynomials $p(n)$, for all pairs of QPT algorithms $A_1,A_2$, the following holds for sufficiently large $n$:
    \[
        \Pr \left [ m' \neq m \wedge \mathrm{Ver}(\ket{\mathrm{pk}}, m', \sigma') \text{ accepts} : \begin{array}{c} \mathrm{sk} \leftarrow \mathrm{SKGen}(1^n) \\ \ket{\mathrm{pk}} \leftarrow \mathrm{PKGen}(\mathrm{sk}) \\ m \leftarrow A_1 (\ket{\mathrm{pk}}^{\otimes p(n)}) \\ \sigma \leftarrow \mathrm{Sign}(\mathrm{sk}, m) \\ 
        (m',\sigma') \leftarrow A_2(\ket{\mathrm{pk}}^{\otimes p(n)}, m, \sigma) \end{array} \right ] = \negl(n)
    \]
\end{definition}

Operationally, a signature scheme with quantum keys is used as follows: there is one party called the \emph{signer} and many \emph{verifiers}. First, the signer will generate a secret key $\mathrm{sk}$ and many copies of the public key $\ket{\mathrm{pk}}$. The signer publishes the quantum public keys on some central website on the (quantum) internet. 

To sign a message $m$, the signer computes the classical signature $\sigma \leftarrow \mathrm{Sign}(\mathrm{sk},m)$ and publishes the message/signature pair $(m,\sigma)$ on the (classical) internet. To verify that the signature is valid, anyone with a copy of $\ket{\mathrm{pk}}$ can run $\Ver(\ket{\mathrm{pk}},m,\sigma)$. The one-time security property allows the verifier to have confidence that, if the signer only used the secret key $\mathrm{sk}$ once to sign a message, then the signature is valid. 

With the definition of signature schemes and their security in hand, we now present our NISQ-friendly signature scheme. For concreteness, we base it on the random circuit OWSG from \Cref{cor:robust-owsg}; let $t,k$ be the parameters from the corollary. For simplicity we describe signatures for \emph{single bit} messages; this can be extended to many-bit messages in a straightforward way.  


\begin{longfbox}
    \begin{protocol} {\bf NISQ-friendly signature scheme with quantum public keys} \label{prot:signature} 
    \end{protocol}

    \textbf{$\mathrm{SKGen}$}. Sample $2t$ descriptions of quantum circuits $C_1^{(b)},\ldots,C_t^{(b)}$ uniformly at random, for both $b = 0,1$. Set $\mathrm{sk} = (C_i^{(b)})_{i \in [t],b \in \{0,1\}}$. 

    \vspace{4pt}

    \textbf{$\mathrm{PKGen}$}: Based on the value of $\mathrm{sk}$, output the public key state
    \[
     \ket{\mathrm{pk}} := \bigotimes_b \ket{C_1^{(b)}} \otimes \cdots \otimes \ket{C_t^{(b)}}~.
    \]

    \vspace{4pt}

    \textbf{$\mathrm{Sign}$}: To sign a bit $b$, output the classical descriptions of the the corresponding circuits. Let $\sigma = (C_1^{(b)},\ldots,C_t^{(b)})$. 

    \vspace{4pt}
    
    $\mathrm{Ver}$: Given the public key $\ket{\mathrm{pk}}$, a bit $b$, and a candidate signature $\sigma$, first interpret $\sigma$ as a tuple $(D_1,\ldots,D_t)$ where each $D_j$ is a description of a circuit. For each $j \in [t]$, apply $D_j^\dagger$ to the $n$ qubits of $\ket{C_j^{(b)}}$, and measure the $n$ qubits. If the result is all zeroes, then set $E_j = 1$, otherwise set $E_j = 0$. If $E_1 + \cdots + E_t \geq k$, then accept. Otherwise, reject.  
\end{longfbox}

\begin{lemma}[NISQ-friendly digital signatures]
\label{lem:signatures}
    Let $\cN$ denote the noise model from \Cref{cor:robust-owsg}, and assume the same $\eps$-No-Learning conjecture as in the corollary. 
    Then, the digital signature scheme described in \Cref{prot:signature} satisfies correctness and one-time security. Furthermore, the scheme it is noise-robust against $\cN$, in that the verification procedure succeeds with high probability, even on a noisy quantum computer: 
      \[
        \Pr \left [ \widetilde{\mathrm{Ver}}(\ket{\mathrm{pk}}, m, \sigma) \text{ accepts}:  \begin{array}{c} \mathrm{sk} \leftarrow \mathrm{SKGen}(1^n) \\ \ket{\mathrm{pk}} \leftarrow \widetilde{\mathrm{PKGen}}(\mathrm{sk}) \\ \sigma \leftarrow \mathrm{Sign}(\mathrm{sk}, m) \end{array} \right] = 1 - \negl(n)~.
    \]
    Here $\widetilde{\Ver}$ and $\widetilde{\mathrm{PKGen}}$ denote the noisy executions of the algorithms $\Ver$ and $\mathrm{PKGen}$ on a quantum computer with noise model $\cN$.
\end{lemma}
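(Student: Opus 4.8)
The plan is to establish the three assertions — correctness, noise robustness, and one-time security — separately, in each case reducing to the threshold OWSG $G^{t,k}$ of \Cref{cor:robust-owsg} (instantiated with exactly the parameters $t,k$ fixed there) together with the Morimae--Yamakawa Lamport-style analysis. \emph{Correctness} is immediate: in an honest execution the signature for bit $b$ is $\sigma=(C_1^{(b)},\ldots,C_t^{(b)})$, so on input $(b,\sigma)$ the procedure $\mathrm{Ver}$ applies $(C_j^{(b)})^\dagger$ to $\ket{C_j^{(b)}}=C_j^{(b)}\ket{0^n}$ and measures $\ket{0^n}$ with certainty for every $j\in[t]$; hence $E_1+\cdots+E_t=t\ge k$ and verification accepts with probability $1$.

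For \emph{noise robustness} one observes that the $t$ public-key blocks indexed by the signed bit form $t$ parallel, qubit-disjoint instances of the random-circuit OWSG, so the argument is literally that of \Cref{cor:robust-owsg}. On a device with noise model $\cN$, $\widetilde{\mathrm{PKGen}}$ prepares each $\ket{C_j^{(b)}}$ with fidelity $\ge 1/\poly(nd)$, and then applying $(C_j^{(b)})^\dagger$ noisily and measuring accepts block $j$ with probability at least some fixed inverse polynomial $\eta(n)=1/\poly(nd)$; by the assumption on $\cN$ the indicators $E_1,\ldots,E_t$ are independent. Since the threshold is $k=\big(\eta(n)-\sqrt{n/(2t)}\big)t$, the Chernoff--Hoeffding bound gives $\Pr[E_1+\cdots+E_t<k]\le 2\exp(-\Omega(n))$, so $\widetilde{\mathrm{Ver}}(\ket{\mathrm{pk}},b,\sigma)$ accepts except with probability $\cO(2^{-n})=\negl(n)$.

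For \emph{one-time security}, first note that by \Cref{cor:robust-owsg} (whose proof runs through \Cref{lem:threshold,lem:chernoff}, using that the base security error $(2-\eps)\eps=\negl(n)$ from \Cref{lem:weak-owsg} is inverse-polynomially below $k/t$), the threshold OWSG $G^{t,k}$ has negligible security error. Suppose a pair of QPT algorithms $(A_1,A_2)$ broke one-time security with non-negligible probability. I would build a reduction $B$ attacking $G^{t,k}$: $B$ picks a uniform bit $\hat m$, samples fresh circuit descriptions $C_1^{(\hat m)},\ldots,C_t^{(\hat m)}$, embeds its challenge copies $\ket{C_1}\otimes\cdots\otimes\ket{C_t}$ as the bit-$\overline{\hat m}$ blocks of $\ket{\mathrm{pk}}^{\otimes p(n)}$ while preparing the bit-$\hat m$ blocks itself, runs $A_1$ to obtain a message bit $m$, and aborts unless $m=\hat m$. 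Since the public key $B$ produces is identically distributed regardless of $\hat m$, the event $m=\hat m$ has probability $\tfrac12$ independently of $A_1$'s output. Conditioned on it, $B$ hands $A_2$ the signature $\sigma=(C_1^{(\hat m)},\ldots,C_t^{(\hat m)})$ it knows, and outputs $A_2$'s forgery $\sigma'$; whenever $m'\ne m$ we have $m'=\overline{\hat m}$, and $\mathrm{Ver}(\ket{\mathrm{pk}},m',\sigma')$ accepting is exactly $\Ver^{t,k}$ accepting $\sigma'$ on $B$'s challenge blocks. Hence $B$ succeeds with half the forging probability of $(A_1,A_2)$, contradicting the negligible security error of $G^{t,k}$. (For many-bit messages the same reduction applies at a bit position where $m$ and $m'$ disagree, with a $1/\poly$ loss for guessing that position.)

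The main obstacle is the independence step in this reduction: one must argue that revealing the signature of the queried bit $m$ — i.e., the classical descriptions $C_1^{(m)},\ldots,C_t^{(m)}$ — leaks nothing useful about the bit-$\overline m$ circuits, so that a forgery for $\overline m$ genuinely amounts to breaking $G^{t,k}$ on a fresh instance. This holds because $\mathrm{SKGen}$ samples the two halves of $\mathrm{sk}$ independently and $\ket{\mathrm{pk}}$ is a product over the $2t$ blocks, so conditioned on $A_1$'s (arbitrary) output the bit-$\overline m$ half is distributed exactly as the output of $\Gen^{t}$ on a uniform key and is independent of everything $A_2$ receives apart from its copies of that half's state. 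Everything else is either immediate (correctness) or a verbatim invocation of \Cref{cor:robust-owsg} (noise robustness, and the security of $G^{t,k}$); no technical ingredient beyond \Cref{lem:threshold,lem:chernoff} is required.
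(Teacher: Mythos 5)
Your proposal is correct and takes essentially the same route as the paper: the paper's own proof is a two-sentence deferral, stating that correctness is straightforward, that one-time security follows from the Morimae--Yamakawa analysis applied to the OWSG of \Cref{cor:robust-owsg}, and that noise robustness follows from the noise robustness of that OWSG. You have simply written out the Lamport-style guessing reduction and the Chernoff-Hoeffding step that the paper leaves implicit, and your handling of the independence of the two halves of the secret key is the right way to make that deferral rigorous.
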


We don't consider noisy versions of $\mathrm{SKGen}$ and $\mathrm{Sign}$ because these are entirely classical algorithms, which we can run on a noiseless classical computer. 

\begin{proof}
The correctness property is straightforward to verify, and the security of the digital signature scheme follows from the security of the OWSG from \Cref{cor:robust-owsg} and the No-Learning Assumption. The noise robustness of the digital signature verification follows from the noise robustness of the underlying OWSG proved in \Cref{cor:robust-owsg}. 
\end{proof}
\newpage

\paragraph{A concrete NISQ-friendly proposal:} Let us consider a $2$D brickwork circuit having $n = 10$ qubits and depth $d = 10$. To recap, according to \cite{landau2024learning}, to reconstruct the state to a fidelity of $1 - \epsilon$ with $1 - \delta$ success probability, one needs,
\[
M = \frac{n^4 \cdot 2^{d^2}}{\epsilon^4} \log \frac{n}{\delta}
\]
quantum samples as input and
\[
T = \frac{n^4 \cdot 2^{d^2}}{\epsilon^4} \log \frac{n}{\delta} + \left( \frac{nkd^3}{\epsilon} \right)^{d^3}
\]
time steps. Let's fix the success probability $\delta$ to be $0.99$. Now, it can be seen by plugging in the values that even to get a meager fidelity of $0.01$ (i.e, setting $\epsilon = 0.99$), one needs at least
$\sim 10^{34}$ 
samples as well as timesteps. This shows that both the sample and time complexity are too large to be practical, even for a meager number of qubits and depth, which is heuristic evidence that the task of learning the state is hard. 

Based on this evidence, we can also estimate the public key size needed to run the secure digital signature protocol (\Cref{prot:signature}.) If we consider the same devices from Google's RCS demonstration in \cite{morvan2023phasetransitionrandomcircuit}, the fidelity for a $67$ qubit, depth $32$ circuit was $10^{-3}$. Assuming the same noise model as in \cite{morvan2023phasetransitionrandomcircuit}, the fidelity $\eta = \exp(-cn d)$, where $c$ is the noise rate per gate. Then, by plugging in the values of $n$ and $d$, we get $c \approx 3 \times 10^{-3}$. 

Furthermore, in the digital signature scheme of \Cref{prot:signature}, the public key length of Alice is given by $nt$, where $t$ is the number of rounds in the threshold repetition scheme. Moreover, Bob needs to check that at least \[ k = \left( \eta - \sqrt{\frac{n}{2t}} \right) t \]
rounds succeed. 

Now, let us fix the number of qubits to be $20$ and the depth to be $20$ units. As discussed earlier, this is far beyond the reaches of the existing learning algorithms. The fidelity, for these parameter choices, is

\[
\eta = \exp\left(-3 \times 10^{-3} \times 20 \times 20 \right) \approx 0.28.
\]

Let $t = 200$. Then, the size of Alice's public key is $4,000$ qubits. The number of rounds, that needs to be passed on Bob's end, to ensure a negligible error (exponentially suppressed in number of qubits; i.e. roughly ~$2^{-20}$) during the verification check is 
\[
\sim \left(0.28 - \sqrt{\frac{20}{400}}\right) \cdot 200 \approx 11.
\]
If we fix $n = 20$ and $k = 11$, then the following is a plot of the size of the public key required versus the depth of the circuit. Deep circuits are harder to learn, which enhances their security guarantees. However, the public key size also increases significantly, as noise causes exponential decay in signal fidelity. This is shown in Figure \ref{fig:publickey-depth}.
\begin{figure}[h!]
    \centering
    \includegraphics[width=0.8\textwidth]{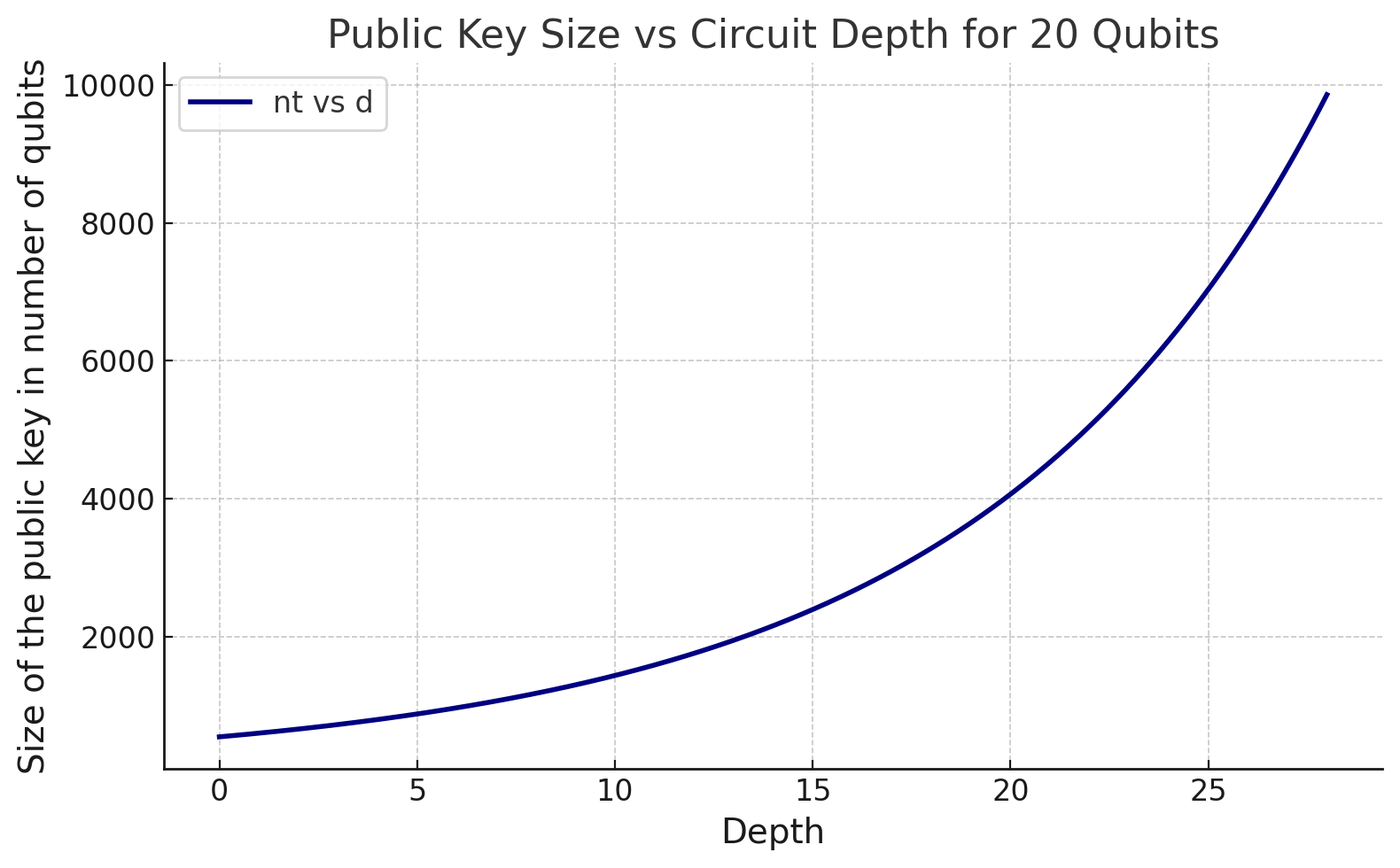}
    \caption{A plot of public key size versus circuit depth.}
    \label{fig:publickey-depth}
\end{figure}

Although our security proofs were proved in the asymptotic limit, we have attempted to give evidence that cryptographic functionalities should kick in for reasonable system sizes, for realistic systems that exist right now. However, an important caveat is that our hardness evidence for No-Learning at finite system sizes depends on just one algorithm of \cite{landau2024learning}---it is very plausible that there is a much more optimized version of these learners for more specialized ensembles, like the one Google is using in \cite{morvan2023phasetransitionrandomcircuit}, by exploiting unique properties of the same. We leave it for future work to rigorously analyze the tradeoffs between performance and runtime of existing learning algorithms. Such analysis is crucial for obtaining further evidence for the validity of our hardness assumptions.  
\newpage

\subsection*{Acknowledgements}
We thank John Bostanci, Jonas Helsen, Yihui Quek, and Abhinav Deshpande for helpful discussions. 
B.F. and S.G. acknowledge support from AFOSR
(FA9550-21-1-0008).  This material is based upon work partially
supported by the National Science Foundation under Grant CCF-2044923
(CAREER), by the U.S. Department of Energy, Office of Science,
National Quantum Information Science Research Centers (Q-NEXT) and by
the DOE QuantISED grant DE-SC0020360. H.Y. acknowledges support from AFOSR award FA9550-23-1-0363, NSF CAREER award CCF-2144219, NSF award CCF-2329939, and the Sloan Foundation. M.S. acknowledges support from the NSF award QCIS-FF: Quantum Computing \& Information Science Faculty Fellow at the University of Illinois Urbana-Champaign (NSF 1955032). This work was done in part while some of the authors were visiting the Simons Institute for the Theory of Computing, supported by NSF QLCI Grant No. 2016245.

\bibliographystyle{alpha} 
{\footnotesize
\bibliography{crypto}}

\appendix
\section{A computational Chernoff bound for one-way state generators}
\label{sec:chernoff}

In this section we prove a \emph{computational Chernoff bound} for one-way state generators. Roughly speaking, this states that if it is hard to efficiently invert a OWSG $G$ with probability more than $\gamma$, then it is hard to efficiently invert noticeably more than $\gamma$ fraction of instances of the repeated OWSG $G^t$ with non-negligible probability. The reason that this is called a ``Chernoff bound'' is because it is analogous to proving that the probability a sum of independent, bounded random variables deviates far from its mean is exponentially small. The reason this bound is ``computational'' is because it only applies to efficient algorithms.  This is also commonly known as a \emph{threshold direct product theorem} in complexity theory and cryptography~\cite{impagliazzo2010constructive}. 

This is a strengthening of the OWSG hardness amplification results of Morimae and Yamakawa~\cite{morimae2022one} which states that it is not possible to efficiently invert \emph{all} $t$ instances of the repeated OWSG $G^t$ with non-negligible probability. Morimae and Yamakawa raised the question of whether a thresholded version of their result can be proven; we answer this affirmatively.

Let $G = (\Gen,\Ver)$ be an OWSG, and let $\ket{\psi_k}$ denote the output of $G$ on input $k$ (we omit mention of the security parameter $n$ for convenience). We defined the corresponding threshold repetition, denoted by $G^{t,k}$, in \Cref{prot:OWSG-threshold}. The following lemma bounds its security error.

\chernoff*
\begin{proof}
    Throughout this proof we omit the dependence on $n$ and simply write $t = t(n), k = k(n), \gamma = \gamma(n), \xi = \xi(n)$, etc. Furthermore for notational convenience we write
    \[
        \ket{\psi_{k_1,\ldots,k_t}} := \ket{\psi_{k_1}} \otimes \cdots \otimes \ket{\psi_{k_t}}
    \]
    for a tuple of keys $(k_1,\ldots,k_t)$. 
    
    We prove this via contradiction. Suppose there was a polynomial $q = q(n)$ and a QPT algorithm $A$ such that for infinitely many $n$,
    \[
        \Pr \left [ \Ver^{t,k}\Big( A(\ket{\psi_{k_1,\ldots,k_t}}^{\otimes q}), \ket{\psi_{k_1,\ldots,k_t}} \Big) : \begin{array}{c} (k_1,\ldots,k_t) \leftarrow (\{0,1\}^{r})^t \\ 
        \ket{\psi_{k_1,\ldots,k_t}} \leftarrow \Gen^t(k_1,\ldots,k_t) \end{array} \right] \geq \frac{1}{q}
    \]
  
    We now construct a QPT algorithm $B$ that inverts the original OWSG $G$ with probability noticeably greater than $\gamma$, which is a contradiction. We closely follow the analysis of the so-called Threshold Direct Product Theorem (which is another name for a computational Chernoff bound) by Impagliazzo and Kabanets~\cite{impagliazzo2010constructive}. 
    
    This algorithm $B$ is constructed in two stages. We first construct an algorithm $A'$ that tries to solve $t/2$ copies of $G$, but it either outputs $\bot$, or with high probability inverts $\sim \gamma$ fraction of instances. Then, in the second stage we design the algorithm $B$ for a single instance of $G$ that uses the zero error algorithm $A'$ as a subroutine.

    \paragraph{First stage.} We first construct the ``zero error'' algorithm $A'$. The reason it is called ``zero error'' is because the algorithm either outputs $\bot$ or outputs (with high probability) a tuple of keys that passes verification in at least $\sim \gamma$ fraction of coordinates. Furthermore, there is a non-negligible probability of outputting a tuple of keys.

\begin{longfbox}
    \begin{protocol} {\bf The ``zero error'' algorithm $A'$} \label{prot:zero-error} 
    \end{protocol}

    \textbf{Input}: $(4q \ln 8q) \cdot (q+1)$ copies of $\ket{\psi_{k_1}} \otimes \cdots \otimes \ket{\psi_{k_{t/2}}}$.

    \vspace{4pt}
    \begin{enumerate}
        \item Sample keys $k_{t/2 + 1},\ldots,k_t \in \{0,1\}^{r(n)}$.
        \item Run the generation algorithm $\Gen$ of the OWSG $G$ on the keys to obtain $q$ copies of the output states $\ket{\psi_{k_{t/2+1}}} \otimes \cdots \otimes \ket{\psi_{k_t}}$. 
        \item Sample a random permutation $\pi$ on $[t]$. 

        \item Run the algorithm $A$ on input $\bigotimes_{i \in [t]} \ket{\psi_{k_{\pi(i)}}}^{\otimes q}$ to obtain a tuple of candidate keys $(k_{\pi(1)}',\ldots,k_{\pi(t)}')$. In other words, the copies of $\ket{\psi_{k_i}}$ states are permuted according to $\pi$ before being passed into $A$, and the output of $A$ is unpermuted according to $\pi$. 

        
        \item Run the threshold verification procedure 
        $\Ver^{t/2,\gamma t/2}$ on input $((k_{t/2+1}',\ldots,k_{t}'),\ket{\psi_{k_{t/2+1},\cdots,k_{t}}})$. If it accepts, then output $(k_1',\ldots,k_{t/2}')$ and halt. Otherwise, repeat steps 1 -- 4 for at most $4q \, \ln 8q$ times. If none of the repetitions are successful, output $\bot$.
    \end{enumerate}
\end{longfbox}

First, it is easy to verify that $A'$ runs in polynomial time. Next we argue that, conditioned on not outputting $\bot$, the algorithm $A'$ successfully inverts at least a $\gamma'$ fraction of $G$ instances with high probability. 

\begin{claim}
\label{clm:chernoff1}
Let $\gamma' = \gamma + \xi/2$.  
    Consider the following probabilistic process. Sample keys $k_1,\ldots,k_{t/2}$, and generate $(4q \ln 8q) \cdot (q+1)$ copies of the corresponding states $\ket{\psi_{k_i}}$. Run $A'$ on those copies, and obtain either $\bot$ or a tuple $(k_1',\ldots,k_{t/2}')$. The following hold, where the probabilities are over the randomness of the aforementioned process. 
    \begin{enumerate}
        \item $\Pr \left [ \Ver^{t/2,\gamma' t/2}((k_1',\ldots,k_{t/2}'),\ket{\psi_{k_1,\ldots,k_{t/2}}}) \text{ accepts} \mid A' \text{ does not output $\bot$}  \right] \geq 1 - \xi/4$.
        \item $\Pr \left [ A' 
 \text{ does not output $\bot$} \right] \geq \frac{1}{8q}$.
    \end{enumerate}
\end{claim}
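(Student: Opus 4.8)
The plan is to reduce both parts to a statement about a uniformly random \emph{balanced partition} of $[t]$, exploiting that the ``real'' keys $k_1,\dots,k_{t/2}$ and the self‑generated keys $k_{t/2+1},\dots,k_t$ are identically distributed (all i.i.d.\ uniform) and that $A$ is fed the corresponding states in a uniformly random order. Concretely, model one iteration of $A'$ thus: $A$ is run on $q$ copies each of $t$ fresh i.i.d.\ uniform instances, producing a ``pass set'' $S\subseteq[t]$ of the coordinates it inverts; then, \emph{independently} of $S$, a uniformly random partition of $[t]$ into a real half $T_{\mathrm r}$ and a test half $T_{\mathrm t}$ (each of size $t/2$) is revealed; the iteration's test accepts iff $|S\cap T_{\mathrm t}|$ meets the test threshold, and the tuple output on the real half is ``good'' iff $|S\cap T_{\mathrm r}|\ge\gamma' t/2$. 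Since each iteration uses $q$ fresh copies of each real state (this is why the input budget is $(4q\ln 8q)(q+1)$ copies) together with fresh self‑generated states, a fresh $\pi$, and fresh randomness of $A$, the iterations are i.i.d.\ conditioned on the real keys $\vec k=(k_1,\dots,k_{t/2})$, and $A'$ simply outputs the real‑half tuple of the first iteration whose test accepts.

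For part~1, set $Z=|S|/t$ and call the partition \emph{balanced} if $\bigl||S\cap T_{\mathrm t}|/(t/2)-Z\bigr|\le\mu$ and $\bigl||S\cap T_{\mathrm r}|/(t/2)-Z\bigr|\le\mu$, where $\mu$ is a small fixed multiple of $\xi$ (say $\mu=\xi/16$). By a Chernoff/Hoeffding bound for sampling without replacement, conditioned on $S$ the partition fails to be balanced only with probability $2\exp(-\Omega(\mu^2 t))$, which is $\negl(n)$ once $t(n)$ is a large enough polynomial (here $\mu=\Theta(\xi)$ is an inverse polynomial, so $\mu^2 t$ is superlogarithmic). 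The key deterministic fact: if the partition is balanced and the test accepts, then $|S\cap T_{\mathrm r}|/(t/2)\ge |S\cap T_{\mathrm t}|/(t/2)-2\mu\ge(\text{test fraction})-2\mu\ge\gamma'$, provided the test threshold sits above $\gamma'$ by at least $2\mu$ — i.e.\ the test is run at a fraction $\gamma''$ with $\gamma'<\gamma''<\gamma+\xi$ (e.g.\ $\gamma''=\gamma+\tfrac34\xi$, $\mu=\xi/16$). Hence in a single iteration $\{\text{test accepts}\}\cap\{\text{output bad}\}\subseteq\{\text{partition unbalanced}\}$, an event of probability $\negl(n)$. Since $A'$ produces a bad tuple only if some iteration's test accepts with a bad answer, a union bound over the at most $4q\ln 8q$ iterations gives $\Pr[A'\text{ outputs a bad tuple}]\le 4q\ln 8q\cdot\negl(n)=\negl(n)$, and therefore $\Pr[\text{output bad}\mid A'\neq\bot]\le\negl(n)/\Pr[A'\neq\bot]\le\xi/4$ once we have part~2's bound $\Pr[A'\neq\bot]\ge 1/(8q)$.

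For part~2, fix $\vec k$ and let $\beta(\vec k)$ be the probability a single iteration's test accepts. By i.i.d.-ness, $\Pr[A'\neq\bot\mid\vec k]=1-(1-\beta(\vec k))^{4q\ln 8q}\ge 1-e^{-4q\ln 8q\cdot\beta(\vec k)}$, so whenever $\beta(\vec k)\ge 1/(4q)$ we get $\Pr[A'\neq\bot\mid\vec k]\ge 1-e^{-\ln 8q}=1-1/(8q)\ge\tfrac12$. Thus it suffices to show $\Pr_{\vec k}[\beta(\vec k)\ge 1/(4q)]\ge 1/(4q)$, and for that I bound $\E_{\vec k}[\beta(\vec k)]=\Pr[\text{test accepts in one iteration}]$. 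By the contradiction hypothesis $\Pr[|S|\ge k]\ge 1/q$ (this is exactly the event that $A$ inverts at least $k=(\gamma+\xi)t$ of the $t$ instances, i.e.\ succeeds against $G^{t,k}$); and conditioned on $|S|\ge k$, if the partition is balanced then $|S\cap T_{\mathrm t}|/(t/2)\ge|S|/t-\mu\ge\gamma+\xi-\mu>\gamma''$, so the test accepts. Hence $\Pr[\text{test accepts}]\ge\Pr[|S|\ge k]\cdot(1-\negl(n))\ge 1/(2q)$ for large $n$. Since $\beta\in[0,1]$ and $\E[\beta]\ge 1/(2q)$, the inequality $\E[\beta]\le\Pr[\beta\ge 1/(4q)]+1/(4q)$ gives $\Pr[\beta\ge 1/(4q)]\ge 1/(4q)$, completing part~2 and hence part~1.

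\textbf{Main obstacle.} The crux is the reframing above and the realization that one must \emph{not} try to condition on $A$ having succeeded — conditioning on the test accepting does not force $|S|$ to be large — but should instead condition on the random split being \emph{balanced}, a near‑certain event under which the comparison between the test threshold and the target $\gamma'$ becomes deterministic. Making this quantitative forces the test to be run at a fraction strictly between $\gamma'=\gamma+\xi/2$ and $\gamma+\xi$ (with a $\Theta(\xi)$ margin on both sides) and forces $t(n)$ to be a sufficiently large polynomial so that $\mu^2 t$ is superlogarithmic while $2\mu<\xi/2$ — precisely the ``for all sufficiently large polynomials $t(n)$'' hypothesis. A secondary point needing care is the quantum bookkeeping: each of the up to $4q\ln 8q$ iterations, together with its test‑half verification, must consume \emph{fresh} copies of the relevant states, which is what legitimizes treating the iterations as i.i.d.\ given $\vec k$ and hence the union bound in part~1.
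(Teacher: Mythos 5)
Your argument is correct and is, in substance, the same proof the paper intends: the paper's own ``proof'' of this claim is a one-line deferral to Impagliazzo--Kabanets (Lemma 5.6 of \cite{impagliazzo2010constructive}), and what you have written is a faithful, self-contained reconstruction of that sampling argument (random balanced split independent of the pass set $S$, concentration of the hypergeometric split, the deterministic implication on the balanced event, and the averaging step $\E[\beta]\le\Pr[\beta\ge 1/(4q)]+1/(4q)$ for part~2). The one point you gloss over, and which is exactly the point the paper itself waves at, is that $\Ver$ is quantum, so ``the set of coordinates $A$ inverts'' is not deterministic: each coordinate accepts with probability $|\langle\psi_{k_i}|\psi_{k_i'}\rangle|^2$. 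This is repaired by defining $S$ as the set of heads of independent per-coordinate Bernoulli coins with these biases (legitimate because the test-half and real-half verifications act on disjoint, fresh copies), after which $S$ is still independent of the partition and your argument goes through verbatim.

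More importantly, your reconstruction surfaces a real inconsistency in the paper that you should flag rather than silently work around: you correctly observe that the deterministic step ``balanced $\wedge$ test accepts $\Rightarrow$ real half has fraction $\ge\gamma'$'' forces the test to be run at a fraction $\gamma''$ with $\gamma'<\gamma''<\gamma+\xi$, whereas step~5 of the paper's Protocol for $A'$ runs the test at threshold $\gamma t/2$, i.e.\ at fraction $\gamma<\gamma'$. With the protocol as literally written, part~1 of the claim is false: take $A$ that inverts exactly $\gamma t$ uniformly random coordinates with probability $1-1/q$ (and $(\gamma+\xi)t$ with probability $1/q$); then the test at fraction $\gamma$ accepts with probability about $1/2$ on the dominant branch, and conditioned on acceptance the real half has only about $\gamma t/2<\gamma' t/2$ correct coordinates. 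So the test threshold in the protocol must be raised to something like $(\gamma+\tfrac34\xi)t/2$, exactly as in your proof; this is a correction to the paper, not a gap in your argument.
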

\begin{proof}
    This is essentially proved in~\cite[Lemma 5.6]{impagliazzo2010constructive}; although they assumed that the $\Ver$ algorithm is classical, it can be checked that the proof goes through essentially unchanged in our setting where $\Ver$ is quantum and receives a quantum input. 
\end{proof}

\paragraph{Second stage.} For the second stage we construct an algorithm $B$ that tries to solve a single instance of $G$ by embedding it into the threshold repetition $G^{t,k}$. The algorithm $B$ uses $A'$ as a subroutine.

\begin{longfbox}
    \begin{protocol} {\bf The algorithm $B$ to solve a single instance of $G$} \label{prot:zero-error} 
    \end{protocol}

    \textbf{Input}: $(128 q \ln \frac{120}{\xi}) \cdot (4q \ln 8q) \cdot (q+1)$ copies of $\ket{\psi_{k}}$.

    \vspace{4pt}
    \begin{enumerate}
        \item Sample a random index $i^* \in [t/2]$.
        \item Sample random keys $k_j$ for $j \in [t/2] \setminus \{i^*\}$ and generate $(4q \ln 8q) \cdot (q+1)$ copies of the corresponding states $\ket{\psi_{k_j}}$. 
        \item Run the algorithm $A'$ on input $(4q \ln 8q) \cdot (q+1)$ copies of $\Big ( (\bigotimes_{j < i^*} \ket{\psi_{k_j}} \Big) \otimes \ket{\psi_k} \otimes \Big ( (\bigotimes_{j > i^*} \ket{\psi_{k_j}} \Big) $.

        \item If $A'$ outputs $\bot$, then try steps 1 -- 3 again for at most $128q \, \ln \frac{120}{\xi}$ times. If none of the repetitions succeed, then output $\bot$.  Otherwise, if $A'$ outputs $(k_1',\ldots,k_{t/2}')$, then output $k_{i^*}'$.

    \end{enumerate}
\end{longfbox}

Again, it should be clear from construction that $B$ runs in polynomial time. The following argues that $B$ successfully inverts the OWSG $G$, using a polynomial number of copies of the output of $G$. 

\begin{claim}
\label{clm:chernoff2}
Let $s = (128 q \ln \frac{120}{\xi}) \cdot (4q \ln 8q) \cdot (q+1)$. Then 
\[
\Pr \left [ \Ver(B(\ket{\psi_k}^{\otimes s}), \ket{\psi_k}) \text{ accepts} : \begin{array}{c} k \leftarrow \{0,1\}^{r(n)} \\ \ket{\psi_k} \leftarrow \Gen(k) \end{array} \right] \geq \gamma + \xi/20~.
\]
\end{claim}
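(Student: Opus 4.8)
The plan is to build, out of the algorithm $A'$ from \Cref{clm:chernoff1}, the single‑instance inverter $B$ described above, and to show it succeeds with probability at least $\gamma+\xi/20$; since $B$ is QPT and uses $s=\poly(n)$ copies of $\ket{\psi_k}$, this contradicts the assumption that $G$ has security error $\gamma$, proving \Cref{lem:chernoff} by contradiction. The analysis mirrors the proof of the Threshold Direct Product Theorem of Impagliazzo and Kabanets~\cite{impagliazzo2010constructive}: it decomposes into (i) a ``$B$ reaches a non‑$\bot$ trial'' estimate via the repetition, (ii) a ``the non‑$\bot$ trial inverts the planted slot'' estimate via a symmetry argument, and (iii) a careful combination.

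For (i), the key observation is that each internal trial of $B$ feeds $A'$ exactly the input distribution of \Cref{clm:chernoff1}: the challenge $\ket{\psi_k}$ placed in the uniformly random slot $i^*$, together with $t/2-1$ self‑generated states for fresh uniform keys, is distributed as $\ket{\psi_{k_1}}\otimes\cdots\otimes\ket{\psi_{k_{t/2}}}$ for i.i.d.\ uniform keys, because $k$ is itself uniform. Hence by part~(2) of \Cref{clm:chernoff1} each trial produces a non‑$\bot$ output with probability at least $\tfrac1{8q}$ (over $k$, the slot $i^*$, the auxiliary keys, and $A'$'s randomness), and since the trials are independent conditioned on $k$, the probability that none of the $R=128q\ln\tfrac{120}{\xi}$ trials is non‑$\bot$ is at most $\bigl(1-\tfrac1{8q}\bigr)^{R}\le(\tfrac{\xi}{120})^{16}$, which is far smaller than $\xi$.

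For (ii), I would condition on a trial in which $A'$ does not abort. Part~(1) of \Cref{clm:chernoff1} says that then, except with probability $\xi/4$, the set $H\subseteq[t/2]$ of head positions on which $\Ver$ accepts has $|H|\ge\gamma' t/2$ with $\gamma'=\gamma+\xi/2$. The crucial point is that all $t$ states handed to $A$ are i.i.d.\ \emph{and} are permuted by a uniformly random $\pi$ before $A$ sees them, so the distinguished slot $i^*$ is uniform among the $t/2$ head positions and independent of $H$. Consequently the probability that $\Ver$ accepts $(k_{i^*}',\ket{\psi_k})$ on a fresh copy of $\ket{\psi_k}$ — exactly the event in the claim's statement — conditioned on $A'\ne\bot$, equals $\E\brac{\,|H|/(t/2)\mid A'\ne\bot\,}\ge\gamma'(1-\xi/4)\ge\gamma+\xi/4$.

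Writing $a(k)$ for the non‑abort probability of one trial of $A'$ and $c(k)$ for the conditional probability its output then inverts $\ket{\psi_k}$, the success probability of $B$ is $\E_k\brac{c(k)\bigl(1-(1-a(k))^{R}\bigr)}$; using $\E_k\brac{a(k)}\ge\tfrac1{8q}$, $\E_k\brac{a(k)c(k)}/\E_k\brac{a(k)}\ge\gamma+\xi/4$, and the fact that $R=\Theta(q\log\tfrac1\xi)$ is large enough that $1-(1-a(k))^{R}$ is essentially $1$ on the bulk of the relevant mass, one obtains the claimed bound $\gamma+\xi/20$, as in~\cite{impagliazzo2010constructive}. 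The step I expect to require genuine care — rather than a one‑line estimate — is precisely this last combination: conditioning on $A'$ not aborting reweights the distribution over $k$ toward instances where $A$ is useful, so one cannot simply multiply the two probabilities; one must verify that with $R=128q\ln\tfrac{120}{\xi}$ repetitions this reweighting still leaves the success probability above $\gamma$ by the full $\xi/20$ margin, which is where the specific constants are pinned down and where the Impagliazzo–Kabanets accounting (their Lemma~5.6 and the surrounding argument) is invoked.
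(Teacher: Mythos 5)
Your proposal is correct and follows exactly the route the paper takes: the paper's own proof of this claim is a one-line citation to Lemma~5.8 of Impagliazzo--Kabanets~\cite{impagliazzo2010constructive} (noting the argument carries over to quantum verification), and your sketch is a faithful reconstruction of that argument, including the correct identification of the delicate step (the reweighting of the distribution over $k$ when conditioning on a non-$\bot$ trial, which prevents naively multiplying the two probability bounds).
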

\begin{proof}
        This is essentially proved in~\cite[Lemma 5.8]{impagliazzo2010constructive}. Again, they prove their result for classical algorithms, but the proof goes through in our quantum setting. 
\end{proof}

Putting everything together, we get that $B$ inverts the OWSG $G$ with probability at least $\gamma + \xi/20$, which contradicts the security guarantee of $G$. Therefore the threshold repetition of $G^{t,k}$ has negligible security error. 

\end{proof}

\end{document}